% Formatted into single column

%\documentclass[journal]{IEEEtran}

\documentclass[journal,11pt, draftclsnofoot, onecolumn]{IEEEtran}

\IEEEoverridecommandlockouts
% The preceding line is only needed to identify funding in the first footnote. If that is unneeded, please comment it out.

\usepackage[cmex10]{mathtools}
\usepackage{mathrsfs}
\usepackage[export]{adjustbox}

\usepackage{cite}
\usepackage{amsmath,amssymb,amsfonts}

\usepackage{mathtools}
\usepackage{algorithmic}
\usepackage{graphicx}
\usepackage{textcomp}
\usepackage{amsthm}
\usepackage{graphicx}
\usepackage{bm}
\usepackage{dsfont}
\usepackage{enumitem}
\usepackage{comment}

%Use this for side by side figures
%\usepackage{caption}
\usepackage{subcaption}

%Used to customize enumeration
\usepackage{enumitem}

\usepackage{color,soul}
\usepackage{multirow,array}
\usepackage{blkarray}
\graphicspath{{./images/}}
\usepackage{amssymb}
\usepackage{amsmath}

%\usepackage[small,compact]{titlesec}

%%Need this for writing algorithms
%%\usepackage{algpseudocode}
%\usepackage{algorithm2e}

\newtheorem{theorem}{Theorem}
\newtheorem{lemma}{Lemma}

\newtheorem{corollary}{Corollary}[theorem]
\newtheorem{corollaryLemma}{Corollary}[lemma]

\newtheorem{definition}{Definition}
%\renewcommand\thetheorem{\arabic{section}.\arabic{theorem}}

% Added this to make the \ldots more compact
\newcommand\mydots{\hbox to 1em{.\hss.\hss.}}

%Need this stuff for custom naming of claims in Appendix

\providecommand{\customgenericname}{}
\newcommand{\newcustomtheorem}[2]{%
  \newenvironment{#1}[1]
  {%
   \renewcommand\customgenericname{#2}%
   \renewcommand\theinnercustomgeneric{##1}%
   \innercustomgeneric
  }
  {\endinnercustomgeneric}
}
\newcustomtheorem{customclaim}{Claim}

%Formatting for tables
\usepackage[T1]{fontenc}
\usepackage{tabularx,ragged2e,booktabs}

%Define the font for set notation
\newcommand*{\set}{\fontfamily{qag}\selectfont}
\DeclareTextFontCommand{\textset}{\set}

\newcommand*\adv{{\tt adv}}

%Used to cut into two versions: ISIT and extended version
\usepackage{ifthen}
\newboolean{ISIT}
\newboolean{editor}
%\ifthenelse{\boolean{extend_v}}{}{}

%\renewcommand{\baselinestretch}{0.97}
    
\begin{document}

\setboolean{ISIT}{false} %Set to true to compile extended version, false to compile ISIT version
\setboolean{editor}{false} %Set to true to compile editor comments, false to hide them

\title{Adversarial Channels with O(1)-Bit Partial Feedback \\
\thanks{This work is supported in part by the U.S National Science Foundation under Grants CCF-1908308, CNS-2212565, CNS-2225577, ITE-2226447 and EEC-1941529 and in part by the Office of Naval Research. A preliminary version of this work will be presented at the 2023 IEEE International Symposium on Information Theory \cite{Ruzomberka2023}.

E. Ruzomberka and H. V. Poor are with the Department of Electrical and Computer Engineering, Princeton University, USA (email: $\{$er6214,poor$\}$@princeton.edu). Y. Jang and D. J. Love are with the Elmore Family School of Electrical and Computer Engineering, Purdue University, West Lafayette, USA (email: $\{$jang216,djlove$\}$@purdue.edu).}
}

\author{\IEEEauthorblockN{Eric Ruzomberka, Yongkyu Jang, David J. Love and H. Vincent Poor}
}

\maketitle

\begin{abstract}
We consider point-to-point communication over $q$-ary adversarial channels with partial noiseless feedback. In this setting, a sender Alice transmits $n$ symbols from a $q$-ary alphabet over a noisy forward channel to a receiver Bob, while Bob sends feedback to Alice over a noiseless reverse channel. In the forward channel, an adversary can inject both symbol errors and erasures up to an error fraction $p \in [0,1]$ and erasure fraction $r \in [0,1]$, respectively. In the reverse channel, Bob's feedback is \textit{partial} such that he can send at most $B(n) \geq 0$ bits during the communication session. 

As a case study on minimal partial feedback, we initiate the study of the $O(1)$-bit feedback setting in which $B$ is $O(1)$ in $n$. As our main result, we provide a tight characterization of zero-error capacity under $O(1)$-bit feedback for all $q \geq 2$, $p \in [0,1]$ and $r \in [0,1]$, which we prove this result via novel achievability and converse schemes inspired by recent studies of causal adversarial channels without feedback. Perhaps surprisingly, we show that $O(1)$-bits of feedback are sufficient to achieve the zero-error capacity of the $q$-ary adversarial error channel with full feedback when the error fraction $p$ is sufficiently small.

\end{abstract}

\begin{IEEEkeywords}
Adversarial channels, feedback communications, partial feedback, limited feedback, channel capacity
\end{IEEEkeywords}

  \ifthenelse{\boolean{editor}}{
  { \color{red} \textbf{Things to do.}

  \begin{itemize}
  \item Add a diagram of the achievability scheme and converse attack.
  \item Polish the abstract.
  \item Format into 1 column for reviews.
  \item We need overviews (especially for achievability scheme).
  \item Add related work.
  \end{itemize}
  
  }}{}

  \section{Introduction} \label{sec:intro}

One of the oldest questions in coding theory is, ``What is the impact of transmitter feedback on the fundamental limits of reliable communication?" Shannon addressed this question in his 1956 paper \cite{Shannon1956}, in which he showed that feedback does not increase channel capacity for a point-to-point memoryless channel. In the same work, Shannon conversely showed that feedback can increase the so called \textit{zero-error capacity} for certain channels. In the zero-error capacity problem, the focus is on error-correction codes that can be decoded with zero probability of decoding error. Equivalently, one can focus on codes that are robust to noise patterns generated in a \textit{worst-case} manner as if the noise is designed by a malicious adversary who seeks to induce a decoding error.

%In this work, we study the zero-error capacity problem for $q$-ary additive noise channels in which a) an adversary can induce a fraction of symbol errors and a fraction of symbol erasures up to a bound $p \in [0,1]$ and $r \in [0,1]$, respectively. 
% producing a number of capacity characterizations along with constructive coding schemes which can achieve capacity with remarkable simplicity
Consider the communication setting depicted in Fig. \ref{fig:channel_model}, where a sender Alice wishes to communicate a message $m$ from a set $\mathcal{M}$ to a receiver Bob. To send her message, Alice transmits a sequence of $n$ symbols from a $q$-ary input alphabet $\mathcal{X} = \{0,1,\ldots,q-1 \}$ for $q \geq 2$ over an additive noise channel. The channel is controlled by an adversary, who, for an error fraction $p \in [0,1]$ and erasure fraction $r \in [0,1]$, can induce up to $pn$ symbol errors and up to $rn$ symbol erasures in Alice's transmission.

To assist Bob in correcting these errors and erasures, Alice may adapt her sequence during transmission using feedback received from Bob. Over the entire communication session, Bob sends $T(n)$ feedback symbols, each from a feedback alphabet $\mathcal{Z}$, which Alice receives causally without delay or noise. Critically, Bob's noiseless feedback is \textit{partial} such that he can send at most $B(n)$ bits of feedback over the session, i.e., $T(n) \log_2 |\mathcal{Z}| \leq B(n)$, for some function $B:\mathbb{Z}^+ \rightarrow \mathbb{Z}^+$ defined on the non-negative integers $\mathbb{Z}^+$. As extreme cases, the functions $B(n) = (n-1) \log_2(q+1)$ and $B(n) = 0$ correspond to full noiseless feedback\footnote{More specifically, full feedback corresponds to $T(n) = n-1$ feedback symbols from a feedback set $\mathcal{Z}$ of size $2^{B(n)/T(n)} = (q+1)$, i.e., Bob can report one $q$-ary symbol or erasure symbol of feedback per Alice's transmission.} and no feedback, respectively. For general $B$, Bob may not send a feedback symbol after every received symbol. After transmission, Bob attempts to decode by choosing a message estimate $\hat{m} \in \mathcal{M}$, where a decoding error occurs if $\hat{m} \neq m$. We refer to the above model as the $q$-ary adversarial error-erasure channel with $B$-bit feedback. 

\begin{figure}[t]
  \centering
  \includegraphics[width=0.7\columnwidth]{./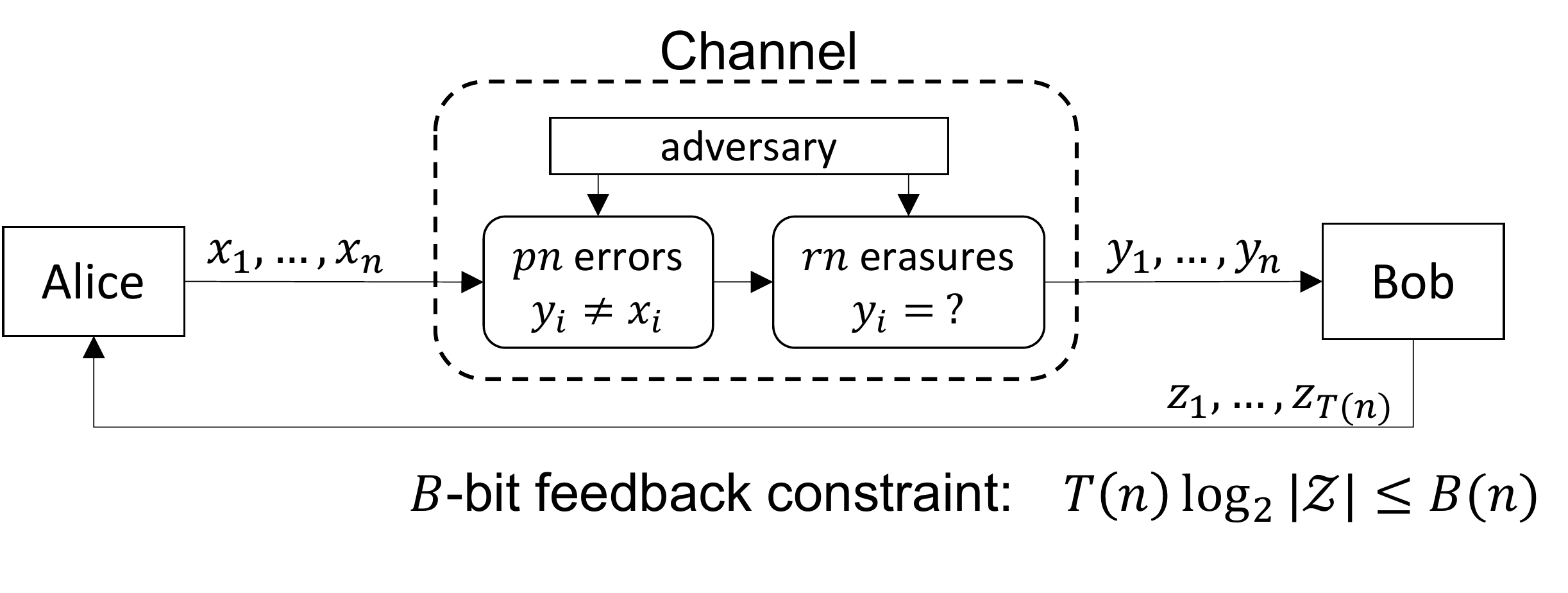}
  \caption{The $q$-ary adversarial error-erasure channel with $B$-bit feedback.}
  \label{fig:channel_model}
\end{figure}

Roughly, a rate $R > 0$ is said to be \textit{(zero-error) achievable with $B$-bit feedback under an error fraction $p$ and erasure fraction $r$} if for large enough $n$ and for message size $|\mathcal{M}| = q^{nR}$ there exists a coding scheme such that no decoding error occurs for any $m \in \mathcal{M}$ and any adversarial choice of error and erasure positions. In turn, the \textit{zero-error capacity} $C^B_q(p,r)$ of the $q$-ary adversarial error-erasure channel with $B$-bit feedback is the supremum of rates achievable with $B$-bit feedback under and error fraction $p$ and erasure fraction $r$. 

%For rate $R = \frac{1}{n} \log_q |\mathcal{M}|$ and $B \geq 0$ bits of feedback, an $(n,Rn,B)$-code is a scheme that makes $n$ transmissions in the \textit{forward channel} (i.e., from Alice to Bob) and at most $B$ transmissions comprising $B$ bits in total in the \textit{reverse channel} (i.e., from Bob to Alice). 

\subsection{Binary Adversarial Error Channels with $B$-bit Feedback}

Since Shannon's work, the zero-error capacity problem has been well-studied for special cases of the $q$-ary adversarial error-erasure channel with $B$-bit feedback. To motivate our work, we focus now on the \textit{binary adversarial error channel} (i.e., $q=2$, $p \in [0,1]$ and $r = 0$) due to its widespread attention in the literature. For the more general $q$-ary error-erasure model, we delay our discussion of known results until Section \ref{sec:results}. 

The earliest investigations into the binary adversarial error channel were conducted by Berlekamp \cite{Berlekamp1964} and later Zigangirov \cite{Zigangirov1976} who fully characterized the zero-error capacity under \textit{full (noiseless) feedback}, i.e., $B(n) = n-1$. Denote the capacity under full feedback as $C^{\mathrm{full}}_2(p,0) \triangleq C^{B}_2(p,0)$ for $B(n) = n-1$. The result of \cite{Berlekamp1964,Zigangirov1976} is

\begin{equation} \label{eq:cap_full_2}
  C^{\mathrm{full}}_2(p,0) = \begin{cases}
  1 - H_2(p), & p \in [0,p^\ddagger] \\
  \frac{1-H_2(p^\ddagger)}{1/3-p^\ddagger}(1/3-p), & p \in (p^\ddagger,1/3) \\
  0, &p \in [1/3,1]
  \end{cases}
  \end{equation}
  where $p^\ddagger = \frac{1}{4}(3-\sqrt{5})$ and where $H_2(\cdot)$ is the binary entropy function. Hence, $C^{\mathrm{full}}_2(p,0)$ is positive for all error fractions $p \in [0,1/3)$. In contrast, the zero-error capacity $C_{2}^{0}(p,0)$ without feedback is zero for all $p > 1/4$ by the Plotkin bound \cite{Plotkin1960}.\footnote{An exact characterization of the zero-error capacity $C^{0}_2(p,0)$ of the binary adversarial error channel without feedback remains an open problem. The best lower and upper bounds are given by the Gilbert-Varshamov bound \cite{Gilbert1952a,Varshamov1957EstimateCodes} and the linear programming bound (or MRRW bound) of McEliece, Rodemich, Rumsey and Welch \cite{McEliece1977a}, respectively.} Furthermore, by the linear programming bound  \cite{McEliece1977a}, feedback strictly increases the zero-error capacity of the binary adversarial error channel for all $p \in (0,1/3)$.

%\footnote{Under full noiseless feedback, a result of Berlekamp \cite{Berlekamp1964} is that no scheme with positive rate is resilient to an error fraction above $1/3$.}

In many practical communication systems, full feedback is impractical due to the high cost of feedback resources (e.g., bandwidth, power). Motivated by this consideration, Haeupler, Kamath and Velingker \cite{Haeupler2015} initiated the study of the \textit{partial (noiseless) feedback} setting in which $B(n) = \delta n$ bits of feedback are sent for some $\delta \in (0,1]$. We note that $\delta = 1$ and $\delta=0$ coincides with full feedback and no feedback, respectively. A result of \cite{Haeupler2015} is that the zero-error capacity for the binary adversarial error channel under partial feedback remains positive for all $p \in [0,1/3)$ when $\delta \in (2/3,1]$.\footnote{This result was extended in \cite{Joshi2022} to larger alphabet channels.} 

More recently, Gupta, Guruswami and Zhang \cite{Gupta2022} significantly improved upon this result and showed that \textit{sub-linear feedback} is sufficient for the capacity to be positive for all $p \in [0,1/3)$. In particular, just $B = O(\log n)$ bits of feedback are sufficient. Conversely, the same authors showed that $B = \Omega(\log n)$ bits of feedback are necessary: for $p > 1/4$ the zero-error capacity is exactly $0$ when $B = o(\log n)$. \textit{Hence, when only $B = o( \log n)$ bits of feedback are available, the support of the zero-error capacity of the binary adversarial error channel with $B$-bit feedback coincides with the support when no feedback is available.}

% In real-world communication systems, even $\Omega(\log n)$ bits of feedback can be untenable considering that feedback is often restricted to a few bits per transmission block, 

\subsection{$q$-ary Adversarial Error-Erasure Channels with $O(1)$-bit Feedback}

In light of this negative result, one may wonder if $B = o(\log n)$ bits of feedback are still useful from a capacity point-of-view. The extent to which feedback is used in real-world communication systems suggests that it may be too costly to require that the number of feedback bits scale in the number of transmitted symbols $n$. In practice, feedback is often restricted to a few bits per transmission block, e.g., in LTE/5G feedback-supported protocols such as hybrid-ARQ, channel precoding for multi-antenna wireless, and CSI usage \cite{Lott2007,Love2008,Ku2014}. 

In this work, we consider a more limited form of partial noiseless feedback than \cite{Haeupler2015,Joshi2022,Gupta2022} in which the number of feedback bits $B(n)$ is $O(1)$ i.e., does \textit{not} grow with the number of transmitted symbols $n$. We consider the general $q$-ary adversarial error-erasure channel with $O(1)$-bit feedback. The zero-error capacity of the $q$-ary adversarial error-erasure channel with $O(1)$-bit feedback is defined as 
\begin{equation} \label{eq:cap_def}
C_q^{O(1)}(p,r) \triangleq \sup_{B \in \mathscr{C}_{\mathrm{const}}} C^{B}_q(p,r)
\end{equation} 
where $\mathscr{C}_{\mathrm{const}} = \{B: \mathbb{Z}^+ \rightarrow \mathbb{Z}^+ | b \in \mathbb{Z}^+, B(n)=b \text{ for all } n\}$.
We remark that the supremum in (\ref{eq:cap_def}) implies that there may not exist a single $b \in \mathbb{Z}^+$ such that every rate $R < C_q^{O(1)}(p,r)$ is achievable by a coding scheme using a finite number of bits $b$. In this work, we show that a rate $R< C_q^{O(1)}(p,r)$ is achievable for some finite number $b(R)$ where $b(R)$ tends to $\infty$ as $R$ tends to $C_q^{O(1)}(p,r)$.

\begin{figure*}

\begin{minipage}[c]{\textwidth}
    \centering
    %\raggedright
    \begin{subfigure}{0.45\textwidth}
    \centering
    \includegraphics[width=\textwidth]{./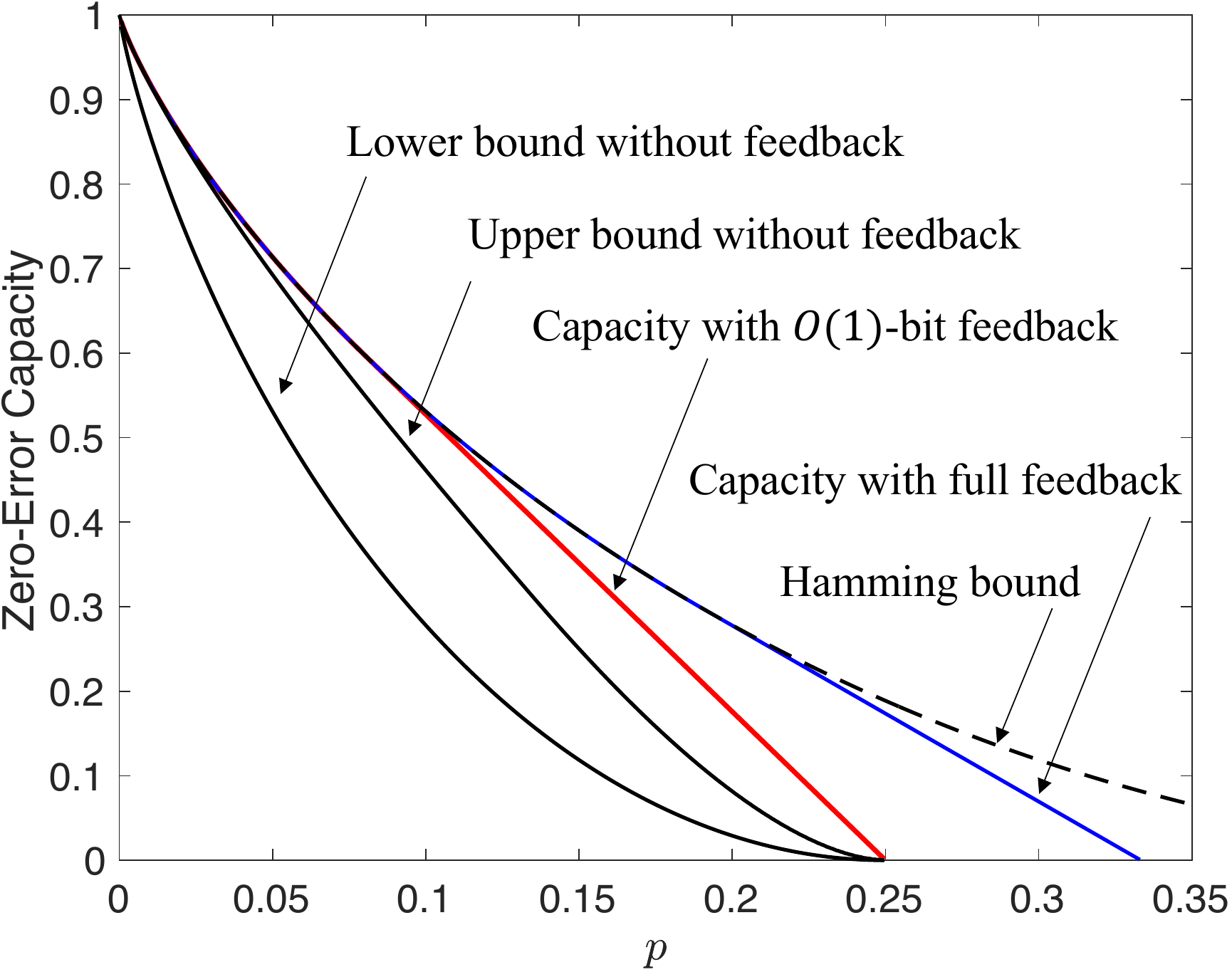}
    \caption{}
    %\vspace{-0.2in}
    \end{subfigure}
    ~
    \begin{subfigure}{0.45\textwidth}
    \centering
    \includegraphics[width=\textwidth]{./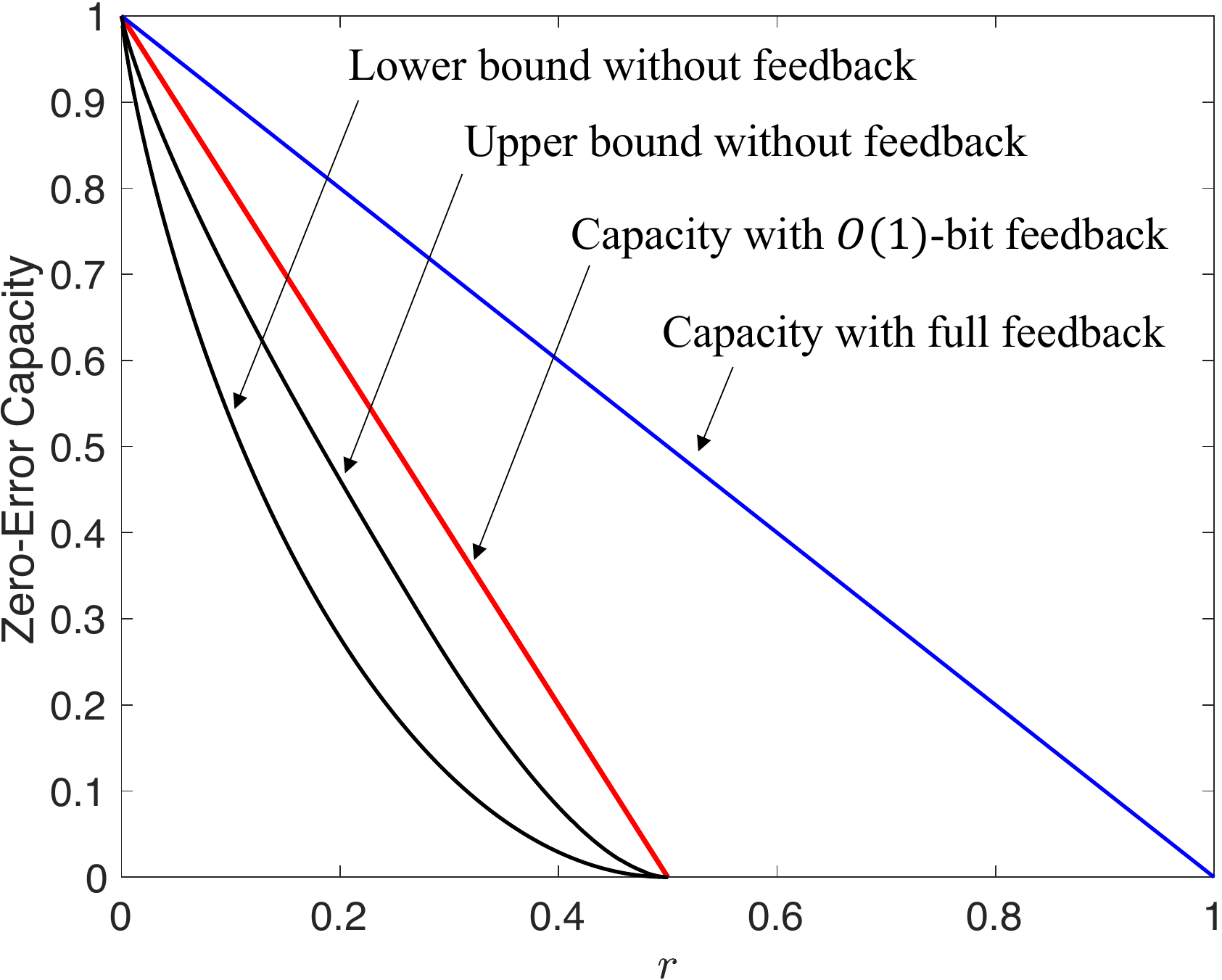}
    \caption{}
    \end{subfigure}
    \end{minipage}\hfill
    \caption{Zero-error capacity bounds of (a) binary adversarial error channels and (b) binary adversarial erasure channels. Red plots show the capacity $C^{O(1)}_2(p,r)$ of binary channels with $O(1)$-bit feedback (Theorem \ref{thm:cap_bf}). The capacity of the channel without feedback has best known lower bounds and upper bounds given by the GV bound \cite{Gilbert1952a,Varshamov1957EstimateCodes} and MRRW bound \cite{McEliece1977a}, respectively. The capacity $C^{\mathrm{full}}_2(p,0)$ of the error channel with full feedback is given by Berlekamp \cite{Berlekamp1964} and Zigangirov \cite{Zigangirov1976}. The dashed plot shows the Hamming bound $1-H_2(p)$.}
    \label{fig:bin_channels}
    %  Binary error channels (a): The capacity of the channel without feedback has best known lower bounds and upper bounds given by the Gilbert-Varshamov bound \cite{Gilbert1952a,Varshamov1957EstimateCodes} and MRRW bound \cite{McEliece1977a}, respectively. The red plot shows the capacity of the channel with $O(1)$-bit feedback (Theorem \ref{thm:cap_bf}). The capacity of the channel with full feedback is given by Berlekamp \cite{Berlekamp1964} and Zigangirov \cite{Zigangirov1976}. The dashed plot shows the Hamming bound $1-H_2(p)$. Binary erasure channels (b): 
  \end{figure*}

  \subsection{Results \& Discussion} \label{sec:results}
  In this work, we study the zero-error capacity $C^{O(1)}_q(p,r)$ for the $q$-ary adversarial error-erasure channel with $O(1)$-bit feedback. As our main result, we present a tight characterization of $C^{O(1)}_q(p,r)$ for all error fractions $p\in [0,1]$, erasure fractions $r \in [0,1]$ and alphabet sizes $q \geq 2$. Our proof of this result involves a novel coding scheme to prove a lower bound on $C^{O(1)}_q(p,r)$ and a converse analysis to prove a matching upper bound on $C^{O(1)}_q(p,r)$, both of which are inspired from prior work \cite{Dey2013,Chen2014,Chen2019} on causal channels without transmitter feedback (see Section \ref{sec:related_work} for a detailed discussion). For $q \geq 2$, let $H_q(x)$ denote the $q$-ary entropy function which is equal to $H_q(x) \triangleq x \log_q (q-1) - x \log_q x - (1-x) \log_q (1-x)$ for $x \in [0,1]$.
  
  \begin{theorem} \label{thm:cap_bf}
  Suppose that $q\geq 2$, $p \in [0,1]$ and $r \in [0,1]$. The zero-error capacity of the $q$-ary adversarial error-erasure channel with $O(1)$-bit feedback is 
  \begin{align} %\label{eq:cap_bf}
   C^{O(1)}_q(p,r) =  \begin{cases}
   \min\limits_{\bar{p} \in [0,p]} \left[ \alpha(\bar{p}) \left(1-H_q\left(\frac{\bar{p}}{\alpha(\bar{p})}\right)\right)
   \right],  & \hspace{-0.7em} 2p+r < \frac{q-1}{q} \\
   0, & \hspace{-0.7em} \text{otherwise}
   \end{cases} \nonumber
  \end{align}
  where $\alpha(\bar{p}) = 1 - \frac{2q}{q-1}(p-\bar{p}) - \frac{q}{q-1}r$.
  \end{theorem}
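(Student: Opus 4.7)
The plan is to establish Theorem 1 by a matching pair of bounds, adapting the babble-and-push paradigm developed for causal adversaries without transmitter feedback \cite{Dey2013,Chen2014,Chen2019} to the $O(1)$-bit feedback regime. The expression $\alpha(\bar{p})(1-H_q(\bar{p}/\alpha(\bar{p})))$ admits a natural operational reading: the adversary can spend a budget of $p-\bar{p}$ errors together with $r$ erasures on a Plotkin-saturating attack whose per-error cost is $\tfrac{2q}{q-1}$ and per-erasure cost is $\tfrac{q}{q-1}$, thereby rendering a $1-\alpha(\bar{p})$ fraction of positions effectively useless, and then push the residual $\bar{p}n$ errors against the remaining $\alpha(\bar{p})n$ positions at an effective error fraction $\bar{p}/\alpha(\bar{p})$, whose correction costs the Gilbert--Varshamov rate $1-H_q(\bar{p}/\alpha(\bar{p}))$. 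Optimizing over $\bar{p} \in [0,p]$ trades off the babble and push budgets and gives the claimed threshold; the boundary condition $\alpha(\bar{p}) \geq 0$ is exactly equivalent to $2p+r < (q-1)/q$ at $\bar{p} = 0$, matching the zero-capacity regime.

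For the achievability direction, I would fix a target rate $R$ strictly below $C^{O(1)}_q(p,r)$, let $\bar{p}^{\ast}$ be the minimizer of the formula, and design a chunk-based scheme: Alice partitions her $n$-symbol transmission into a constant number of chunks, transmits a $q$-ary GV codebook tuned to the parameters $(\alpha(\bar{p}^{\ast}),\bar{p}^{\ast})$ inside each chunk, and Bob spends a constant number $b(R)$ of feedback bits at a constant number of pre-agreed checkpoints to report a coarse quantization of the error/erasure pattern he has observed so far. Alice uses these bits to discard chunks that have been overwhelmed by the babble portion of the attack and to complete the codeword over the remaining positions. Since $b(R)$ admits only $2^{b(R)}$ branches, the scheme must commit to a coarse strategy in advance; the core claim is that for any gap $\varepsilon > 0$ from capacity, a suitable finite $b(R)$ suffices to resolve the adversary's choice of effective $\bar{p}$ on a fine enough grid that the residual rate loss is below $\varepsilon$.

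For the converse, I would show that any length-$n$ scheme whose total feedback is $b$ bits can be defeated once the rate exceeds $\min_{\bar{p}}\alpha(\bar{p})(1-H_q(\bar{p}/\alpha(\bar{p})))$. The adversary first selects the worst $\bar{p}$ for the given scheme and then plays a babble phase that saturates the Plotkin bound in such a way that at least two messages remain confusable regardless of which of the $2^{b}$ feedback transcripts is produced. A pigeonhole over transcripts, which is legitimate precisely because $b$ is constant in $n$, isolates a residual sub-code of blocklength $\alpha(\bar{p})n$ and rate at least $R - o(1)$ on which the adversary finishes by pushing $\bar{p}n$ errors; a sphere-packing / Plotkin argument applied to this residual sub-code yields the contradiction. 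The degenerate regime $2p+r \geq (q-1)/q$ follows at $\bar{p} = p$ from the classical Plotkin bound, which $O(1)$-bit feedback cannot circumvent.

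The principal obstacle is the two-sided interaction between Alice's $O(1)$-bit adaptation and the adversary's causal choice of attack. On the converse side, the adversary must commit to its babble phase in an online manner while simultaneously guaranteeing confusability across all $2^{b}$ possible feedback transcripts; I expect this to require a uniform rather than per-branch attack, together with an argument that the finitely many feedback bits cannot carry enough information about the adversary's future actions to break the confusion. On the achievability side, the subtle point is orchestrating the chunk-level protocol so that Alice and Bob stay synchronized on the adversary's regime using only constantly many bits, while paying only an $o(1)$ rate overhead; I anticipate that this will require a negligible-length probe segment, devoted solely to carrying the $b(R)$ feedback bits needed to select among code branches, so that synchronization does not compete with the payload for rate.
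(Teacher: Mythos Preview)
Your overall structure matches the paper's, but both directions are missing the mechanism that actually makes them work.

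For achievability, your feedback carries ``a coarse quantization of the error/erasure pattern,'' but Bob cannot observe errors: the fraction $p_t$ of unerased symbols in error is invisible to him, so there is nothing to quantize and report. The paper's feedback serves an entirely different purpose. At each checkpoint Bob list-decodes the received prefix against a reference trajectory $\hat{p}_t$ (a designed placeholder for the unobservable $p_t$, chosen so that both a List-Decoding and a List-Refinement condition hold), obtains a constant-size super-list of candidate messages, and then sends a symbol $z_k$ chosen so that the \emph{next} sub-codewords of every message still on the list are pairwise far apart (the Feedback Distance Condition). The feedback does not tell Alice what the adversary did; it tells her which codebook branch to use so that Bob's eventual list-disambiguation is guaranteed to isolate the true message. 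This is precisely what upgrades the causal-adversary scheme of \cite{Chen2014,Chen2019} from small-error to zero-error, and your proposal does not contain it. Fixing a single minimizer $\bar{p}^\ast$ in advance is also not what the paper does; the reference trajectory varies with $t$ and the analysis shows that $p_t \approx \hat{p}_t$ must occur at some checkpoint regardless of the adversary's strategy.

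For the converse, you correctly flag that the adversary must ensure confusability across \emph{all} $2^{b}$ feedback transcripts, but ``pigeonhole over transcripts'' yields one transcript on which many messages survive, whereas you need a single pair $(m_1,m_2)$ that remains close under \emph{every} transcript simultaneously. The paper closes this gap with an iterated Ramsey argument: for each sub-codeword index $k$ and each feedback prefix consistent with the babble phase, build a graph on the surviving message set $\mathcal{B}_{\bm{y}_{\mathrm{b}}}$ with an edge whenever the $k$th sub-codewords are within $\Delta_k$; the Plotkin bound caps the independence number by a constant $N$ independent of $n$, so the Ramsey number $\mathrm{R}(K,N{+}1)$ forces a clique of size $K$ once the vertex set is large enough. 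Iterating this clique-extraction over the constantly many (index, feedback-prefix) pairs --- legitimate exactly because $B$ is $O(1)$ --- leaves at least two messages whose sub-codewords are pairwise close for every feedback branch at once. Without this Ramsey step the ``uniform across all transcripts'' obstacle you yourself identify remains unresolved.
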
 

  We remark that in our achievability proof of Theorem \ref{thm:cap_bf}, our coding scheme uses a number of feedback bits $B$ that varies with the coding rate. For fixed $p \in [0,1]$, $r \in [0,1]$ and $q \geq 2$, our coding scheme of rate $R < C^{O(1)}_q(p,r)$ uses a number of feedback bits $B$ which is constant in $n$ but tends to infinity as the rate-to-capacity gap $\epsilon_R = C^{O(1)}_q(p,r) - R$ tends to $0$.
  
  To illustrate the capacity expression of Theorem \ref{thm:cap_bf}, we focus on two special cases of parameters: when the channel can only induce errors (i.e., $p \in [0,1]$ and $r = 0$) and when the channel can only induce erasures (i.e., $p=0$ and $r \in [0,1]$). In each of these cases, the capacity expression can be simplified, as shown in Fig. \ref{fig:bin_channels} for binary alphabets. When only erasures occur, it is easy to verify that $C^{O(1)}_q(0,r)$ is a linear piece-wise function of $r$:
  \begin{equation} \nonumber
  C^{O(1)}_q(0,r) = \begin{cases}
  1- \frac{q}{q-1}r, & r \in [0, \frac{q-1}{q}) \\
  0, & r \in [\frac{q-1}{q},1].
  \end{cases}
  \end{equation}
  When only errors occur, and for $p \in [0,\frac{q-1}{2q})$, $C^{O(1)}_q(p,0)$ is equal to the Hamming bound $1-H_q(p)$ for small $p$ and is otherwise equal to the line tangent to $1-H_q(p)$ and which intersects the point $(p,0)$ where $p=\frac{q-1}{2q}$. That is,
  \begin{equation} \label{eq:cap_error}
  C^{O(1)}_q(p,0) = \begin{cases}
  1 - H_q(p), & p \in [0,p^*] \\
  \frac{1-H_q(p^*)}{\frac{q-1}{2q} - p^*}(\frac{q-1}{2q}-p), & p \in (p^*, \frac{q-1}{2q}) \\
  0, & p \in [\frac{q-1}{2q},1]
  \end{cases}
  \end{equation}
  where $p^* = p^*(q)$ is the unique value in $[0,\frac{q-1}{2q}]$ that satisfies equation $p^*(1-p^*)^{\frac{q+1}{q-1}} = (q-1)q^{-\frac{2q}{q-1}}$. The proof of (\ref{eq:cap_error}) is in Appendix \ref{sec:main_cor1_proof}.

  We compare the above result to the zero-error capacity of the $q$-ary adversarial error-erasure channel with full feedback, which we denote as $C^{\mathrm{full}}_q(p,r)$. By definition, it is clear that $C^{\mathrm{full}}_q(p,r)$ is an upper bound of $C^{O(1)}_q(p,r)$. We briefly summarize known characterizations of $C^{\mathrm{full}}_q(p,r)$. For $q=2$, $C^{\mathrm{full}}_q(p,0)$ is given by (\ref{eq:cap_full_2}), which has a two-part form similar to (\ref{eq:cap_error}) comprised of the Hamming bound and its tangent. For $q \geq 3$, a result of Ahlswede, Deppe and Lebedev \cite{Ahlswede2006} is the upper bound 
  \begin{equation} \label{eq:q_ub}
  C^{\mathrm{full}}_q(p,0) \leq 
  \begin{cases}
  1-H_q(p), & p \in [0,\frac{1}{q}] \\
  \frac{1-H_q(1/q)}{1/2-1/q}(1/2-p), & p \in (\frac{1}{q},\frac{1}{2}) \\
  0, & p \in [1/2,1]
  \end{cases}
  \end{equation}
  %\begin{equation} \label{eq:q_ub}
  %C^{\mathrm{full}}_q(p,0) \leq 
  %\begin{cases}
  %1-H_q(p), & p \in [0,\frac{1}{q}] \\
  %(1-2p) \log_q(q-1), & p \in (\frac{1}{q},\frac{1}{2}) \\
  %0, & p \in [1/2,1]
  %\end{cases}
  %\end{equation}
  which is tight for all $p \in [1/q,1]$.\footnote{We remark that the coding scheme proposed in \cite{Ahlswede2006} to achieve $C^{\mathrm{full}}_q(p,0)$ for $p \in [1/q,1/2)$ uses full (noiseless) feedback. The authors of \cite{Ahlswede2006} refer to their scheme as the `rubber method'.} However, for $q \geq 3$ and $p \in [0,1/q]$, a tight characterization of $C^{\mathrm{full}}_q(p,0)$ remains open. A corollary of Theorem \ref{thm:cap_bf} is that that the upper bound (\ref{eq:q_ub}) is tight for small values of $p$.

  \begin{corollary}[Full Feedback] \label{thm:main_cor1}
  Suppose $q \geq 2$. The capacity of the $q$-ary error channel with full feedback is
  \begin{equation} \nonumber
  C^{\mathrm{full}}_q(p,0) = C^{O(1)}_q(p,0) = 1 - H_q(p), \hspace{1em} p \in [0,p^*]
  \end{equation}
  where $p^* \in [0,\frac{q-1}{2q}]$ is the unique value that satisfies equation $p^*(1-p^*)^{\frac{q+1}{q-1}} = (q-1)q^{-\frac{2q}{q-1}}$. 
  \end{corollary}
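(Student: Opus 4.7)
The plan is to sandwich $C^{\mathrm{full}}_q(p,0)$ between the $O(1)$-bit feedback capacity and the Ahlswede--Deppe--Lebedev converse for full feedback. Since any $O(1)$-bit feedback scheme is also a valid full-feedback scheme, one has $C^{O(1)}_q(p,0) \le C^{\mathrm{full}}_q(p,0)$ for every $p\in[0,1]$. Theorem \ref{thm:cap_bf} together with the reduction (\ref{eq:cap_error}) then gives $C^{O(1)}_q(p,0) = 1 - H_q(p)$ throughout $[0,p^*]$. On the converse side, equation (\ref{eq:q_ub}) from \cite{Ahlswede2006} provides $C^{\mathrm{full}}_q(p,0) \le 1 - H_q(p)$ on $[0,1/q]$. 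Chaining these, the corollary reduces to verifying the containment $[0,p^*] \subseteq [0,1/q]$, i.e., $p^* \le 1/q$.

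To establish this, I would set $g(p) = p(1-p)^{(q+1)/(q-1)}$, so that $p^*$ is characterized as the unique root of $g(p) = (q-1)q^{-2q/(q-1)}$ inside $[0,(q-1)/(2q)]$. A one-line differentiation gives $g'(p) = (1-p)^{2/(q-1)}\bigl[\,1 - \tfrac{2q}{q-1}p\,\bigr]$, which vanishes on $(0,1)$ only at $p = (q-1)/(2q)$, so $g$ is strictly increasing on $[0,(q-1)/(2q)]$. A direct substitution then gives
\begin{equation}\nonumber
\frac{g(1/q)}{(q-1)\,q^{-2q/(q-1)}} \;=\; (q-1)^{2/(q-1)} \;\ge\; 1,
\end{equation}
with equality precisely at $q = 2$. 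For $q \ge 3$ we have $1/q \le (q-1)/(2q)$, so $1/q$ lies in the monotonicity range of $g$, and $g(1/q) \ge g(p^*)$ forces $p^* \le 1/q$. For $q = 2$ the stronger bound $p^* \le (q-1)/(2q) = 1/4 < 1/2 = 1/q$ follows directly from the constraint that $p^* \in [0,(q-1)/(2q)]$.

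Once this containment is in hand, the sandwich
\begin{equation}\nonumber
1 - H_q(p) \;=\; C^{O(1)}_q(p,0) \;\le\; C^{\mathrm{full}}_q(p,0) \;\le\; 1 - H_q(p)
\end{equation}
holds throughout $[0,p^*]$ and forces the three quantities to coincide, finishing the corollary. The only real obstacle is the algebraic verification that $p^* \le 1/q$; all the substantive work is absorbed into Theorem \ref{thm:cap_bf} itself and into the reduction of its $\min_{\bar p}$ form to the Hamming expression (\ref{eq:cap_error}), both of which are treated elsewhere in the paper.
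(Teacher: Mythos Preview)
Your argument is correct and follows essentially the same approach as the paper: sandwich $C^{\mathrm{full}}_q(p,0)$ between $C^{O(1)}_q(p,0)$ from (\ref{eq:cap_error}) and the Hamming upper bound from (\ref{eq:q_ub}), reducing everything to the inequality $p^* \le 1/q$. Your verification of $p^* \le 1/q$ via monotonicity of $g(p)=p(1-p)^{(q+1)/(q-1)}$ on $[0,(q-1)/(2q)]$ and the evaluation $g(1/q)/g(p^*)=(q-1)^{2/(q-1)}\ge 1$ is slightly cleaner than the paper's Appendix~\ref{sec:p_star_q}, which performs the substitution $x=pq$ first, but the two computations are equivalent. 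One small caveat: the paper states (\ref{eq:q_ub}) only for $q\ge 3$, so for $q=2$ you should instead cite (\ref{eq:cap_full_2}) (or simply the Hamming bound, which upper-bounds the full-feedback capacity for all $q$); this does not affect the argument.
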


  Corollary \ref{thm:main_cor1} follows via inspection of (\ref{eq:cap_error}) and (\ref{eq:q_ub}), where we remark that $p^* \leq 1/q$ (c.f. Appendix \ref{sec:p_star_q}). We reiterate that for $q \geq 2$ and for $p \in (0,p^*]$, our results show that the zero-error capacity with full feedback $C^{\mathrm{full}}_q(p,0)$ can be achieved with some coding scheme that uses only $O(1)$-bit feedback. While a scheme using $O(1)$-bit feedback cannot be used to achieve $C^{\mathrm{full}}_q(p,0)$ for $p \in (p^*,1]$, the following corollary implies that such a scheme can achieve rates close to $C^{\mathrm{full}}_q(p,0)$ for all $p \in [0,1]$ when the alphabet size $q$ is large. 

  \begin{corollary}[Large Alphabets] \label{thm:lg_alphabet}
  Suppose that $p \in [0,1]$ and $r \in [0,1]$. If $2p+r<1$ then $$C^{O(1)}_q(p,r) = 1 - 2p - r - \Theta\left(\frac{1}{q}\right) \text{ as } q \rightarrow \infty.$$ Otherwise, if $2p+r \geq 1$ then $C^{O(1)}_q(p,r) =0$ for all $q \geq 2$. Proof is in Appendix \ref{sec:lg_alphabet_proof}.
  \end{corollary}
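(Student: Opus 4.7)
The plan is to split on the sign of $2p+r-1$. If $2p+r \geq 1$, then for every $q \geq 2$ we have $\tfrac{q-1}{q} < 1 \leq 2p+r$, so the ``otherwise'' branch of Theorem~\ref{thm:cap_bf} applies and $C^{O(1)}_q(p,r) = 0$. This disposes of the second half of the corollary.

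Now suppose $2p+r < 1$, and restrict to $q$ large enough that $2p+r < \tfrac{q-1}{q}$ so Theorem~\ref{thm:cap_bf} gives $C^{O(1)}_q(p,r) = \min_{\bar{p} \in [0,p]} f_q(\bar{p})$, where $f_q(\bar{p}) := \alpha(\bar{p})[1-H_q(\bar{p}/\alpha(\bar{p}))]$. Set $c := 1-2p-r$. Evaluating at $\bar{p}=0$ gives $\alpha(0) = c - \tfrac{2p+r}{q-1}$ and $H_q(0) = 0$, so
\begin{equation} \nonumber
C^{O(1)}_q(p,r) \leq f_q(0) = c - \tfrac{2p+r}{q-1},
\end{equation}
producing a deficit $c - C^{O(1)}_q(p,r) \geq \tfrac{2p+r}{q-1} = \Omega(1/q)$ whenever $2p+r > 0$ (the degenerate case $p=r=0$ gives $C = 1$ trivially).

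For the matching lower bound on $C$, I decompose $H_q(x) = x\log_q(q-1) + h(x)/\ln q$ with $h(x) = -x\ln x - (1-x)\ln(1-x)$ the natural-log binary entropy, and apply the inequality $-\ln(1-t) \leq t/(1-t)$ to obtain
\begin{equation} \nonumber
\alpha h(\bar{p}/\alpha) = \bar{p}\ln\tfrac{\alpha}{\bar{p}} + (\alpha-\bar{p})\ln\tfrac{\alpha}{\alpha-\bar{p}} \leq \bar{p}\ln\tfrac{\alpha}{\bar{p}} + \bar{p} = \bar{p}\ln\tfrac{e\alpha}{\bar{p}}.
\end{equation}
Using $\alpha(\bar{p}) \leq 1$ and $\log_q(q-1) \leq 1$, this gives $\alpha H_q(\bar{p}/\alpha) \leq \bar{p}\log_q\!\bigl((q-1)e/\bar{p}\bigr)$, hence $f_q(\bar{p}) \geq g(\bar{p}) := \alpha(\bar{p}) - \bar{p}\log_q\!\bigl((q-1)e/\bar{p}\bigr)$. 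The function $g$ is convex (its second derivative equals $1/(\bar{p}\ln q) > 0$), with unique interior critical point $\bar{p}^\star = (q-1)q^{-2q/(q-1)} = \Theta(1/q)$, which lies in $[0,p]$ for large enough $q$ when $p > 0$ (for $p=0$ the only feasible point is $\bar{p}=0$, where $g(0) = \alpha(0)$). Substituting back, $g(\bar{p}^\star) = \alpha(0) - \bar{p}^\star/\ln q = c - \tfrac{2p+r}{q-1} - \Theta(1/(q\ln q)) = c - \Theta(1/q)$, so $C^{O(1)}_q(p,r) \geq c - O(1/q)$. Combined with the upper bound this yields $c - C^{O(1)}_q(p,r) = \Theta(1/q)$, as claimed.

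The one subtle step is the lower bound on $C$. A naive uniform estimate such as $H_q(x) \leq x + (\ln 2)/\ln q$ would produce only an $O(1/\log q)$ deficit, which is much too weak. The sharper inequality $h(z) \leq z\ln(e/z)$ used above is essential: it exploits that $h(z)$ vanishes for $z\to 0$, which localizes the effective minimizer to the scale $\bar{p} = \Theta(1/q)$ and recovers the correct $\Theta(1/q)$ rate.
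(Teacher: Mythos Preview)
Your argument is correct and follows the same skeleton as the paper's proof in Appendix~\ref{sec:lg_alphabet_proof}: both use Theorem~\ref{thm:cap_bf}, both read off the upper bound on $C$ by evaluating the minimum at $\bar{p}=0$ to get $f_q(0)=\alpha(0)=1-2p-r-\tfrac{2p+r}{q-1}$, and both then argue that no other $\bar{p}$ can push the minimum below $1-2p-r$ by more than $O(1/q)$.

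Where you differ is in the lower-bound step, and your treatment is in fact tighter than the paper's. The paper replaces $H_q(x)$ by $x+o(1)$ and asserts that ``$\bar{p}=0$ is a minimizer for large enough $q$,'' concluding $\min_{\bar{p}} f_q(\bar{p})=\alpha(0)$ exactly. Strictly speaking that assertion is false: as your own computation shows (and as Appendix~\ref{sec:main_cor1_proof} confirms in the error-only case), the true minimizer sits at $\bar{p}=\Theta(1/q)$, not at $0$. The paper's conclusion survives only because the resulting dip below $\alpha(0)$ is $O(1/(q\log q))$, which is absorbed into the $\Theta(1/q)$ main term. Your route---bounding $\alpha h(\bar{p}/\alpha)\le \bar{p}\ln(e/\bar{p})$, passing to the convex minorant $g$, and locating its minimum explicitly at $\bar{p}^\star=(q-1)q^{-2q/(q-1)}=\Theta(1/q)$ with value $\alpha(0)-\bar{p}^\star/\ln q$---makes this rigorous and quantifies the lower-order correction that the paper's argument hides. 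So: same destination, but you supply the step the paper waves through.
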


  We remark that for $2p+r<1$, both letters $C^{O(1)}_q(p,r)$ and $C^{\mathrm{full}}_q(p,r)$ tend to the same limit $1-2p-r$ as $q$ tends to infinity, albeit at different rates. $C^{O(1)}_q(p,r)$ tends to the limit slightly slower than $C^{\mathrm{full}}_q(p,r)= 1 - 2p - r - \Theta\left(\frac{1}{q\log q}\right)$. Conversely, $C^{O(1)}_q(p,r)$ tends to the limit $1-2p-r$ faster than the best known lower bound on the zero-error capacity without feedback $C_q^{0}(p,r) \geq 1-2p-r-\Theta(\frac{1}{\sqrt{q}})$ for $q \geq 49$ and $\sqrt{q} \in \mathbb{Z}^+$ due to Tsfasman, Vl\u{a}duts and Zink \cite{Tsfasman1982}.\footnote{The lower bound of \cite{Tsfasman1982} follows from the study of algebraic geometry codes and only holds for $q \geq 49$ and when $\sqrt{q}$ is an integer. For general $q$, the best known lower bound on the zero-error capacity without feedback is the Gilbert-Varshamov bound \cite{Gilbert1952a,Varshamov1957EstimateCodes}.}
  
  \subsection{Related Work} \label{sec:related_work}

  As discussed above, our study of channels with $O(1)$-bit feedback is related to prior studies on channels with full feedback \cite{Berlekamp1964,Zigangirov1976,Ahlswede2006} and partial feedback \cite{Haeupler2015,Joshi2022,Gupta2022}. More generally, our study is related to \textit{adversarial channels} -- a channel modeling framework in which the channel noise is chosen by a malicious adversary seeking to disrupt communication. Adversarial channels may be modeled with or without feedback. These include myopic channels \cite{Sarwate2010a,Dey2019a, Ruzomberka2022} and causal adversarial channels \cite{Chen2014,Dey2013,Chen2019,Suresh2022,Zhang2022}. An important generalization of the adversarial channel framework are arbitrarily varying channels (AVCs) \cite{Joshi2022,Csiszar1988TheConstraints}.

  % One connection between the two models follows from the observation that feedback imposes a causal structure on encoding and decoding procedures, and thus, it is natural to use causal reasoning when thinking about channel processes that can be arbitrarily correlated to these encoding and decoding procedures.
  
  Among the above adversarial channel models, causal adversarial channels (without transmitter feedback) have a particularly close connection to adversarial error-erasure channels with $O(1)$-bit feedback. A channel is said to be \textit{causal} if the adversary's choice to induce an error-erasure in the $i$th transmitted symbol depends only on previously transmitted symbols, i.e., symbols $1$ through $i-1$. One connection between the two models, which appears at first glance to be coincidental, is that the bounded-error capacity of the $q$-ary error-erasure causal channel \cite{Chen2019} coincides exactly with the zero-error capacity of the $q$-ary adversarial error-erasure channel with $O(1)$-bit feedback (Theorem \ref{thm:cap_bf}). The authenticity of this connection becomes apparent in our proof of Theorem 1, which uses insights and tools developed in the study causal adversarial channels \cite{Dey2013,Chen2019,Chen2014} to prove both a lower bound (achievability) and a tight upper bound (converse) of $C^{O(1)}_q(p,r)$.  

  Our converse proof is based on the so called ``babble-and-push'' adversarial attack of Dey, Jaggi, Langberg, Sarwate and Chen \cite{Dey2013,Chen2019} where it was proposed to study upper bounds on the capacity of causal adversarial channels without transmitter feedback. In the converse analysis, ``babble-and-push'' is an attack strategy used by an adversary to confuse Bob about Alice's transmitted codeword. A key step in the converse analysis is to bound the number of codewords within a specified Hamming distance using the Plotkin bound. We propose a novel extension of the standard ``babble-and-push'' framework to incorporate $O(1)$-bit feedback by using ideas from Ramsey theory. 
  
  Furthermore, our achievability proof uses a novel coding scheme that borrows ideas from the capacity achieving scheme of Chen, Jaggi and Langberg \cite{Chen2014,Chen2019} for causal adversarial channels without feedback. An innovation of \cite{Chen2014,Chen2019} is an iterative decoding process in which Bob first \textit{guesses} the number of symbol error in his received sequence at each time in a set of quantized time steps, followed by a guess refinement process that halts (with high probability) at a time step in which Bob's guess coincides with the true number of symbol errors. In turn, Bob uses this number as side-information to decode Alice's message. Since guess refinement may fail with a small probability, decoding may also fail with a small probability. Our scheme adapts the scheme of \cite{Chen2014,Chen2019} by incorporating feedback to ensure that the guess refinement process succeeds with certainty.  

  \subsection{Organization \& Notation}

  The remainder of the paper is organized as follows. In Section \ref{sec:model}, we provide a detailed channel model of the $q$-ary error-erasure channel with $B$-bit feedback. In Section \ref{sec:scheme}, we present the coding scheme that we use to prove the lower bound of Theorem \ref{thm:cap_bf}. An overview and detailed proof of the lower bound is presented in Section \ref{sec:lb}. In Section \ref{sec:ub}, we present an overview and detailed proof of the upper bound of Theorem \ref{thm:cap_bf}. Lastly, in Section \ref{sec:conclusion}, we provide concluding remarks.

  The following notation is used throughout the paper. For an integer $q \geq 2$, define $\mathcal{Q} = \{0,1,\ldots,q-1\}$. For an integer $n \geq 1$, the notation $[n]$ denotes the set $\{1,2,\ldots,n\}$. For two sequences $\bm{a}, \bm{b} \in \mathcal{Q}^n$, the Hamming distance $d_H(\bm{a},\bm{b})$ between $\bm{a}$ and $\bm{b}$ is  defined as the number of positions $i \in [n]$ in which $a_i \neq b_i$. We extend this definition of Hamming distance to account for sequences containing the erasure symbol `?' by defining the distance $d_H(\bm{a},\bm{b})$ between the sequences $\bm{a} \in \mathcal{Q}^n$ and $\bm{b} \in (\mathcal{Q} \cup \{? \} )^n$ to be the number of positions $i \in [n]$ in which $b_i \neq ?$ and $a_i \neq b_i$. Define the non-negative integers $\mathbb{Z}^+ = \{0,1,2,\ldots\}$. For an event $\mathcal{A}$, let $\mathds{1} \{ \mathcal{A} \}$ denote the indicator of $\mathcal{A}$.

  \section{Channel Model} \label{sec:model}

  \subsection{Channel Model}

  For $q \geq 2$, the channel is characterized by an input alphabet $\mathcal{X} = \mathcal{Q} \triangleq \{0,1,\ldots,q-1\}$, an output alphabet $\mathcal{Y} = \mathcal{Q} \cup \{ ? \}$ for an erasure symbol `$?$', and a channel mapping $\adv: \mathcal{X}^n \rightarrow \mathcal{Y}^n$ that is chosen from a set of mappings $\mathcal{ADV}$ by an \textit{adversary} who seeks to disrupt communication between Alice and Bob. For each mapping $\adv \in \mathcal{ADV}$, the channel constraint requires that the number of erasure symbols `$?$' and the number of symbols in error in the channel output cannot exceed $rn$ and $pn$, respectively. The adversary chooses $\adv \in \mathcal{ADV}$ using knowledge of Alice's message $m$ and the following coding scheme, and $\adv$ is not revealed to either Alice or Bob.
  
  \subsection{Codes with Feedback} \label{sec:cwf}
  For a rate $R \in (0,1]$, blocklength $n \geq 1$, and number of feedback bits $B(n) \geq 0$ for some function $B:\mathbb{Z}^+ \rightarrow \mathbb{Z}^+$, an $(n,Rn,B(n))$-code (with feedback) is a tuple $\Psi = (\mathcal{C}_k,\phi,f_k,\mathcal{T},\mathcal{Z})$ that specifies the following communication scheme. 
  
  First, the code $\Psi$ specifies how Bob sends feedback to Alice. For an integer $T = T(n) \geq 0$, Bob sends feedback in $T$ rounds, sending a symbol from the feedback alphabet $\mathcal{Z}$ at each time in the set $\mathcal{T} = \{t_1,\ldots,t_T\}$ where $1\leq t_1 < t_2 < \cdots < t_T < n$. For $k \in [T]$, the feedback symbol sent at time $t_k$ is determined by the feedback function $f_k:\mathcal{Y}^{t_k} \rightarrow \mathcal{Z}$ and is denoted as $z_k = f_k(y_1,\ldots,y_{t_k})$ where $(y_1,\ldots,y_{t_k}) \in \mathcal{Y}^{t_k}$ is Bob's received sequence up to time $t_k$. Notice that $B(n)$ bits of feedback implies that $T \log_2 |\mathcal{Z}| \leq B(n)$. 
  
  Second, $\Psi$ specifies how Alice transmits to Bob. For $k \in [T+1]$, Alice sends symbols in blocks of symbol size $t_{k}-t_{k-1}$ (where we define $t_0 \triangleq 0$ and $t_{T+1} \triangleq n$) using the encoding function $\mathcal{C}_k: \mathcal{M} \times \mathcal{Z}^{k-1} \rightarrow \mathcal{X}^{t_{k}-t_{k-1}}$. Then Alice's transmitted symbols over this block are $(x_{t_{k-1}+1}, \ldots, x_{t_k}) = \mathcal{C}_k(m;z_1,\ldots,z_{k-1})$. Finally, Bob decodes with the the decoding function $\phi:\mathcal{Y}^n \rightarrow \mathcal{M}$. We assume that adversary knows the code $\Psi$ used by Alice and Bob and can use this knowledge to choose $\adv \in \mathcal{ADV}$.

  Let the symbol $\circ$ denote a concatenation between two sequences, i.e., $\bm{a} \circ \bm{b} = (\bm{a},\bm{b})$. We let $f(\bm{y})$ denote the concatenation of feedback functions $f_1\left(y_1,\ldots,y_{t_1}\right) \circ f_2(y_1,\ldots,y_{t_2}) \circ \cdots \circ f_T(y_1,\ldots,y_{t_T})$ and we let $\mathcal{C}(m;z_1,\ldots,z_{T})$ denote the concatenation of encoding functions $\mathcal{C}_1(m) \circ \mathcal{C}_2(m,z_1) \circ \cdots \circ \mathcal{C}_{T+1}(m;z_1,\ldots,z_T).$ In the sequel, for any message $m \in \mathcal{M}$ and feedback sequence $(z_1,\ldots,z_T) \in \mathcal{Z}^T$ we refer to $$\mathcal{C}(m;z_1,\ldots,z_{T})$$ as the \textit{codeword} corresponding to $m$ and $(z_1,\ldots,z_{T})$. Similarly, for $k \in [T+1]$ we refer to $$\mathcal{C}_k(m;z_1,\ldots,z_{k-1})$$ as the \textit{$k$-th sub-codeword} corresponding to $m$ and $(z_1,\ldots,z_{k-1})$.

  \subsection{Capacity}

  For a function $B: \mathbb{Z}^+ \rightarrow \mathbb{Z}^+$, a rate $R \in (0,1]$ is \textit{(zero-error) achievable with $B$-bit feedback under an error fraction $p$ and erasure fraction $r$} if for all $n$ large enough there exists an $(n,Rn,B(n))$-code $\Psi = (\mathcal{C}_k,\phi,f_k,\mathcal{T},\mathcal{Z})$ such that $\phi\left(\adv(\mathcal{C}(m;\bm{z}))\right) = m$ for all $m \in \mathcal{M}$ and all $\adv \in \mathcal{ADV}$ where $\bm{z} = f(\adv(\mathcal{C}(m;\bm{z})))$. The zero-error capacity $C^{B}_q(p,r)$ is the supremum of rates achievable with $B$-bit feedback under an error fraction $p$ and erasure fraction $r$. In turn, we define $$C^{O(1)}_q(p,r) = \sup_{B \in \mathscr{C}_{\mathrm{const}}} C^B_q(p,r)$$ where $\mathscr{C}_{\mathrm{const}} = \{B: \mathbb{Z}^+ \rightarrow \mathbb{Z}^+ | b \in \mathbb{Z}^+, B(n)=b \text{ for all } n\}$.

  \section{Achievability Scheme} \label{sec:scheme}
  
  To prove the lower bound in Theorem \ref{thm:cap_bf}, we use a specific code construction which we present here. Throughout our construction, we let rate $R\in (0,1]$, blocklength $n \geq 1$ and the number of feedback bits $B \geq 0$.

  \subsection{Feedback Times}
  
  For a ``feedback resolution'' parameter $\theta >0$, we set the feedback times $t_1,t_2, \ldots t_T$ such that feedback occurs every $\theta n$ channel uses, i.e., $t_k = k \theta n$ for $k \in \{1,2, \ldots, T\}$ and $T = \frac{1}{\theta}-1$. In turn, we set the size of the feedback alphabet such that $\log_2 |\mathcal{Z}| = \lfloor B/T \rfloor = \lfloor \frac{\theta B}{1-\theta} \rfloor$. In summary, Bob sends $\lfloor \frac{\theta B}{1-\theta} \rfloor$ bits of feedback at each time $t = k \theta n \in \mathcal{T} \triangleq \{\theta n, 2\theta n, \ldots, (1-\theta)n\}$. We assume that $\theta n$ and $\frac{1}{\theta}-1$ are integer values.

  \subsection{Encoding Function}
  
  Alice uses her feedback to adapt her encoding every $\theta n$ channel uses at time $t = k \theta n \in \mathcal{T}$. Recall that for a message $m \in [q^{Rn}]$ and feedback symbols $(z_1, \ldots, z_{T}) \in \mathcal{Z}^{T}$, the codeword $\mathcal{C}(m;z_1,\ldots,z_{T})$ is a concatenation of $T+1 = 1/\theta$ chunks of \textit{sub-codewords},
  \begin{equation} \nonumber
  \begin{aligned}
  \mathcal{C}(m;z_1,z_2,\ldots,z_{T}) =  \mathcal{C}_1(m) \circ \mathcal{C}_2(m;z_1) \circ \cdots \circ \mathcal{C}_{T+1}(m;z_{T})
  \end{aligned}
  \end{equation}
  where $\mathcal{C}_k(m,z_{k-1})$ is the sub-codeword of $\theta n$ symbols corresponding to message $m$ and feedback $z_{k-1}$. Compared to the more general description of sub-codewords presented in our channel model of Section \ref{sec:cwf}, the sub-codeword $\mathcal{C}_k(m,z_{k-1})$ of our encoding function construction depends only on $m$ and the most recent feedback $z_{k-1}$, and does not depend on past feedback $(z_1, \ldots, z_{k-2})$.

  \subsection{Feedback Function}
  
   To describe our feedback function, we first make an observation about Bob's ability to observe erasures and errors in his received sequence at any given time. Let $\lambda_t$ denote the number of symbol erasures in Bob's received sequence $(y_1,y_2, \ldots, y_t)$ at time $t \in \mathcal{T}$. We remark that Bob can count the number of symbol erasures $\lambda_t$ in the received sequence $(y_1,y_2, \ldots, y_t)$, and thus, he can identify the number of unerased symbols $t-\lambda_t$ in $(y_1,\ldots,y_t)$. Crucially, however, Bob cannot count the number of \textit{symbol errors} $(t-\lambda_t)p_t$ in the unerased symbols in $(y_1,\ldots,y_t)$: here, $p_t$ is defined as the fraction of the $t-\lambda_t$ unerased symbols in the received sequence $(y_1,\ldots,y_t)$ that are in error. Following the naming convention established by \cite{Chen2014,Chen2019}, we refer to $p_t$ as the adversary's \textit{(error) trajectory} as it describes the path or strategy taken by the adversary over time to confuse Bob's decoder. 
   
   To proceed with our code construction, we define a \textit{reference trajectory} $\hat{p}_t$ as a place holder for the unobservable trajectory $p_t$. While $\hat{p}_t$ is in general not equal to $p_t$, we carefully choose this reference trajectory (details in Section \ref{sec:ref_traj}) such that for any value of $\{p_t\}_{t=1}^n$ chosen by the adversary, $\hat{p}_t \approx p_t$ for some $t \in \mathcal{T}$. Other properties of $\hat{p}_t$ are discussed below. We assume that $\hat{p}_t$ is known to Alice, Bob and the adversary. 
  
  At each time $t = k \theta n \in \mathcal{T}$, given the received word $(y_1,y_2, \ldots,y_{t}= y_{k\theta n})$, Bob computes the feedback symbol $z_{k} = f_{k}(y_1,\ldots,y_t)$ via the following process. First, Bob identifies the first time $t_{\mathrm{min}} \in \mathcal{T}$ such that the number of unerased symbols $t_{\mathrm{min}} - \lambda_{t_{\mathrm{min}}}$ exceeds $n(R+\epsilon_L)$ for some small parameter $\epsilon_L > 0$. If $t < t_{\mathrm{min}}$, then Bob sends the feedback $z_k =0$ to inform Alice that there are few erasures in his received sequence. If instead $t \geq t_{\mathrm{min}}$, Bob forms a list $\mathcal{L}_k$ of all messages $m' \in [q^{Rn}]$ that have a codeword prefix $\mathcal{C}_1(m') \circ \cdots \circ \mathcal{C}_k(m';z_{k-1})$ within a Hamming distance $(t-\lambda_t) \hat{p}_t$ of the received sequence prefix $\bm{y}_{\mathrm{prefix},k} \triangleq (y_1,\ldots,y_{t} = y_{k n \theta})$.\footnote{Recall that our definition of Hamming distance accounts for symbol erasures in the received word by only measuring the distance between $\mathcal{C}_1(m') \circ \cdots \circ \mathcal{C}_k(m';z_{k-1})$ and $\bm{y}_{\mathrm{prefix},k}$ over the positions where $\bm{y}_{\mathrm{prefix},k}$ is not erased.} In our analysis, we show that our choice of trajectory $\hat{p}_t$ ensures that $\mathcal{L}_k$ contains at most $L$ messages for some constant $L \geq 1$. Subsequently, for the integer $k_{\mathrm{min}}$ defined as $t_{\mathrm{min}} = k_{\mathrm{min}} \theta n$, Bob forms a super list
  \begin{equation}
  \mathcal{L}^{\mathrm{super}}_k = \bigcup_{i=k_{\mathrm{min}}}^k \mathcal{L}_i
  \end{equation}
  of all messages contained in his lists $\{ \mathcal{L}_i \}_{i=k_{\mathrm{min}}}^k$ up to sub-codeword $k$. It is clear that $\mathcal{L}^{\mathrm{super}}_k$ contains at most $(k-k_{\mathrm{min}})L$ messages which, in turn, is bounded above by a constant $\frac{1}{\theta} L$ independent of $n$. After forming the super list, Bob chooses feedback $z_{k} \in \mathcal{Z}$ such that for every unique pair $m',m'' \in \mathcal{L}^{\mathrm{super}}_k$, the $(k+1)^{\text{th}}$ sub-codewords corresponding to $m'$ and $m''$ are \textit{far apart} such that they satisfy the \textit{Feedback Distance Condition}:
  \begin{equation} \label{eq:fb_distance}
  \begin{aligned}
  d_H\left(\mathcal{C}_{k+1}(m';z_{k}), C_{k+1}(m'';z_{k}) \right) > \theta n \frac{(q-1)(1-\epsilon_L)(1+\epsilon_L)}{q}
  \end{aligned}
  \end{equation}
  where $\epsilon_L>0$ is a small ``list-decodability'' parameter that will be chosen in our analysis. If multiple such $z_k$ exist, Bob may choose any $z_k$ in this multiple. This choice of $z_{k}$ together with the choice of $\hat{p}_t$ helps Bob to identify and remove messages from $\mathcal{L}^{\mathrm{super}}_{T}$ that are not Alice's true message $m$. If the above $z_{k}$ does not exist, then Bob declares a decoding error. Otherwise, Bob sends $z_k$ to Alice via the feedback channel before Alice's next transmission at time $t+1$.
  
  \subsection{Decoding}
  
  %Bob waits to perform decoding until all $n$ symbols have been received. Decoding is performed as the following 3 step procedure. In step 1, $k$ is initialized to $k_{\mathrm{min}} = \frac{t_{\mathrm{min}}}{\theta n}$. In step 2, list-refinement is performed. Let $\bm{y}_{\mathrm{suffix},k} = (y_{k+1}, \ldots, y_n)$ denote the received word suffix at time $t = k\theta n$. If there exists a message $m' \in \mathcal{L}_k$ such that
  %\begin{equation}
  %\begin{aligned}
  %&d\left( \bm{y}_{\mathrm{suffix},k}, C_{k+1}(m';z_{k+1}) \circ \cdots C_{1/\theta}(m';z_{1/\theta})  \right) \\
  %& \hspace{2em} \leq \frac{(q-1)(1-\epsilon_L)(1+\epsilon_{\theta})}{2q} - \frac{nr - \lambda_t}{2},
  %\end{aligned}
  %\end{equation}
  %then the decoder outputs $\hat{m}=m'$ and decoding halts. In step 3, $k$ is incremented. If $k = n$, then a decoding error is declared. Otherwise, the procedure goes to step 2.
  
  Bob waits to perform decoding until all $n$ symbols have been received. At the time of decoding, Bob possesses a set of lists $\{\mathcal{L}_k\}_{k=k_{\mathrm{min}}}^T$ which serve as a short list of candidate messages for Alice's message $m$. In our analysis, we show that our choice of $\hat{p}_t$ ensures that $m$ is contained in $\mathcal{L}_k$ for some $k \in \{k_{\mathrm{min}},k_{\mathrm{min}}+1,\ldots,T \}$. Thus, Bob's decoding is a list-refinement (or list-disambiguation) process that uses information in the received sequence suffix $\bm{y}_{\mathrm{suffix},k} = (y_{k\theta n+1} = y_{t+1},y_{t+2}, \ldots,y_n)$ to infer which messages in list $\mathcal{L}_k$ are not $m$, and subsequently, remove these messages from $\mathcal{L}_k$.
  
  Decoding is as follows. Bob performs list-refinement on his set of lists $\{ \mathcal{L}_k \}$ by evaluating for each $k = k_{\mathrm{min}},\ldots, T$ the distance between codeword suffixes corresponding to messages in $\mathcal{L}_k$. Specifically, for the received sequence suffix $\bm{y}_{\mathrm{suffix},k} = (y_{t+1}, \ldots, y_n)$, Bob finds the \textit{decoding point} $t^* = k^* \theta n \in \mathcal{T}$ that is the smallest integer $t^* \in \{t_{\mathrm{min}},\ldots,(1-\theta) n\}$ such that there exists a message $m' \in \mathcal{L}_{k^*}$ satisfying the \textit{Decoding Distance Condition}:
  \begin{equation} \label{eq:dec_cond}
  \begin{aligned}
  d_H\left(C_{k^*+1}(m';z_{k^*}) \circ \cdots \circ C_{T+1}(m';z_{T}),\bm{y}_{\mathrm{suffix},k^*}  \right)  \leq \frac{(q-1)(1-\epsilon_L)(1+\epsilon_{L})(n-t^*)}{2q} - \frac{nr - \lambda_{t^*}}{2}.
  \end{aligned}
  \end{equation}
  A decoding error is declared if $t^*$ does not exist or if there are two or more $m' \in \mathcal{L}_{k^*}$ such that (\ref{eq:dec_cond}) holds. If no decoding error is declared, then there exists a unique $m' \in \mathcal{L}_{k^*}$ such that (\ref{eq:dec_cond}) holds, and in turn, the decoder outputs $\hat{m} = m'$.
  
  We remark that Bob's choice of feedback $z_{k}$ to satisfy the Feedback Distance Condition (c.f. (\ref{eq:fb_distance})) plays a critical role during our analysis for ensuring that the Decoding Distance Condition (\ref{eq:dec_cond}) only holds if $m' = m$ and does not hold if $m' \in \mathcal{L}_k \setminus \{ m \}$. 

  \subsection{Relationship to Codes for Causal Adversarial Channels}
  
  Lastly, we remark that the above encoding and decoding procedures are inspired by coding scheme constructions for causal adversarial channels (without feedback) proposed in \cite{Chen2014,Chen2019}. In particular, we borrow the idea of establishing a reference trajectory $\hat{p}_t$ and performing iterative list-decoding and list-refinement based on this reference trajectory. 
  
  In the causal adversarial model of \cite{Chen2014,Chen2019}, a feedback channel is not available for Bob to send the symbol $z_k$, and thus, Alice must instead choose a private random value for $z_k$ (unknown to Bob or the adversary) prior to encoding. In turn, the achievability analysis of \cite{Chen2014,Chen2019} permits decoding to fail with some small probability over the choice of $z_k$. Hence, the benefit of $O(1)$-bit feedback is that Alice and Bob can agree on a common value of $z_k$, allowing the above scheme to decode successfully with certainty.  %In particular, if $m' \in \mathcal{L}_k$ and $m' \neq m$, then
  %\begin{align}
  %& d\left(\mathcal{C}_{k+1}(m';z_{k+1}) \circ \cdots \circ \mathcal{C}_{1/\theta}(m';z_{1/\theta}), \bm{y}_{\mathrm{suffix},k} \right) \nonumber \\
  %& = \sum_{i = k+1}^{1/\theta} d \left(\mathcal{C}_{i}(m';z_i) \mathcal{C}_{i}(m;z_i) + \bm{e}_k \right) \nonumber \\
  %& > \sum_{i=k+1}^{1/\theta} \theta n \frac{(q-1)(1-\epsilon_L)(1+\epsilon_L)}{q} \\
  %& = 
  %\end{align}
  
  \section{Proof of Theorem \ref{thm:cap_bf}: Lower Bound} \label{sec:lb}

   The setup of the proof is as follows. In the sequel, we fix the alphabet size $q \geq 2$, the error fraction $p \in [0,\frac{q-1}{2q})$ and the erasure fraction $r \in [0,\frac{q-1}{q})$ such that $2p + r < \frac{q-1}{q}$. Let $\epsilon_R > 0$ be any rate backoff parameter and set the rate
  \begin{equation} \nonumber
  R = \min_{\bar{p}\in[0,p]} \left[ \alpha(\bar{p})\left(1-H_q\left(\frac{\bar{p}}{\alpha(\bar{p})} \right)\right) \right] - \epsilon_R.
  \end{equation}
  Let the blocklength $n \geq 1$. 

  \subsection{Overview of Achievability Proof}

  The proof uses the achievability coding scheme presented in Section \ref{sec:scheme}. The goal of the proof is to show that for any message $m \in [q^{Rn}]$ transmit by Alice and for any adversarial strategy $\adv \in \mathcal{ADV}$, Bob does not declare a decoding error and Bob's decoder outputs $\hat{m} = m$. Recall that Bob declares a decoding error if one of the three events occurs:
  \begin{enumerate}[label=\textit{E\arabic*)},itemindent=*]
  \item After receiving the $k^{\text{th}}$ sub-codeword for some $k \in [T]$, there does not exist a  feedback symbol $z_k \in \mathcal{Z}$ that satisfies the Feedback Distance Condition (\ref{eq:fb_distance}).
  \item The decoding point $t^* = k^* \theta n$ does not exist.
  \item The list $\mathcal{L}_{k^*}$ contains two or more messages that satisfy the Decoding Distance Condition (\ref{eq:dec_cond}).
  \end{enumerate}

  We remark that each of the above three events critically depends on the choice of \textit{reference trajectory} $\hat{p}_t$, which we recall, is used by Bob as a placeholder for the unobservable trajectory $p_t$ chosen by the adversary. The reference trajectory expression that we adopt is the following. Recall that $t_{\mathrm{min}}$ is defined as the smallest integer in $\mathcal{T} \triangleq \{\theta n, 2 \theta n, \ldots, (1-\theta) n\}$ such that $t_{\mathrm{min}} - \lambda_{t_{\mathrm{min}}} > n(R+\epsilon_L)$.

  \begin{definition}[Reference Trajectory] \label{def:rt}
  For $t \in \mathcal{T}$, define $\hat{p}_t =0$ if $t<t_{\mathrm{min}}$. Otherwise, if $t \geq t_{\mathrm{min}}$, define $\hat{p}_t$ to be the unique value in $[0,\frac{q-1}{q}]$ such that
  $(t-\lambda_t)(1-H_q(\hat{p}_t)) = n(R+\epsilon_L)$.
  \end{definition}

  We remark that the definition of $t_{\mathrm{min}}$ ensures that $\hat{p}_t$ is well-defined for all $t \geq t_{\mathrm{min}}$. Towards showing that the decoding error events E1 through E3 do not occur under the above choice of $\hat{p}_t$, we show that the following two conditions hold for all $t \in \mathcal{T}$ and $t \geq t_{\mathrm{min}}$: the \textit{List-Decoding Condition} 
  \begin{equation} \label{eq:LD}
  (t-\lambda_t)\left(1-H_q(\hat{p}_t) \right) - n \epsilon_L \geq n R
  \end{equation}
  and the \textit{List-Refinement Condition}
  \begin{equation} \label{eq:LR}
  nr - \lambda_t + 2(np - (t-\lambda_t) \hat{p}_t) \leq \frac{(q-1)(1-\epsilon_L)(n-t)}{q}.
  \end{equation} 

  The List-Decoding Condition (\ref{eq:LD}) is a condition on the size of Bob's list-decoding radius $(t-\lambda_t)\hat{p}_t$ and the coding rate $R$ which ensures that the number of messages in the list $\mathcal{L}_{k}$ is bounded above by some constant $L \geq 1$ for all appropriate $k$. In the analysis, the existence of constant $L$ ensures that a constant sized feedback alphabet $\mathcal{Z}$ (or equivalently, a constant sized number of bits $B$) is sufficient for the Feedback Distance Condition to hold (and thus, E1 does not occur). Note that we can immediate verify that $\hat{p}_t$ satisfies the List-Decoding Condition via inspection of Definition \ref{def:rt}.
  
  The List-Refinement Condition (\ref{eq:LR}) states that as long as $p_t \geq \hat{p}_t$ then the number of symbol errors and erasures in the received suffix $\bm{y}_{\mathrm{suffix},k}$ is bounded. In turn, $\bm{y}_{\mathrm{suffix},k}$ satisfies a number of nice distance properties: $\bm{y}_{\mathrm{suffix},k}$ is close to the codeword suffix corresponding to Alice's message $m$ and is far apart from other codeword suffixes corresponding to messages in Bob's list $\mathcal{L}_k$. Thus, when $p_t \geq \hat{p}_t$, Bob can use these distance properties to \textit{refine} or disambiguate the list $\mathcal{L}_k$. In the analysis, we show that if $m \in \mathcal{L}_k$ and $p_t \geq \hat{p}_t$ then List-Refinement Condition together with the List-Decoding Condition and the Feedback Distance Condition ensure that the codeword suffix corresponding to Alice's message $m$ is the unique message in $\mathcal{L}_k$ that is \textit{close} to $\bm{y}_{\mathrm{suffix},k}$, i.e., $m$ is the unique message that satisfies the Decoding Distance Condition. 
  
  A potential issue in this list-refinement process is the fact that, in general, $m$ is not in the list $\mathcal{L}_k$ when $p_t \geq \hat{p}_t$. Alice's message $m$ is only in $\mathcal{L}_k$ if Bob's list-decoding radius $(t-\lambda_t)\hat{p}_t$ is greater than the number of symbol errors $(t-\lambda_t)p_t$, i.e., if $p_t \leq \hat{p}_t$. Thus, for decoding to succeed, it is necessary that $p_t \approx \hat{p}_t$ for some time step $t \in \mathcal{T}$. In the analysis, we show that for any trajectory $p_t$ chosen by the adversary, $p_t \approx \hat{p}_t$ is guaranteed to occur for some time $t \in \mathcal{T}$ and $t \geq t_{\mathrm{min}}$. In fact, we show that the decoding point $t^*$ is the smallest value $t^* \in \mathcal{T}$ and $t^* \geq t_{\mathrm{min}}$ such that $p_{t^*} \approx \hat{p}_{t^*}$. Here, the role of the slack variables $\epsilon_L$ and $\theta$ is to allow the approximation of $p_{t^*}$ by $\hat{p}_{t^*}$ to be tuned arbitrarily well.

  Lastly, we briefly discuss the steps taken to show that $p_{t^*} \approx \hat{p}_{t^*}$. We break the analysis into two separate cases, each depending on the value of the trajectory at time $t_{\mathrm{min}}$. The first cases occurs when $p_{t_{\mathrm{min}}} \leq \hat{p}_{t_{\mathrm{min}}}$, in which case we say the trajectory is a \textit{low-type trajectory}. We show that when a low-type trajectory occurs, $p_{t_{\mathrm{min}}} \approx \hat{p}_{t_{\mathrm{min}}}$ and $t^* = t_{\mathrm{min}}$. The second case occurs when $p_{t_{\mathrm{min}}} > \hat{p}_{t_{\mathrm{min}}}$, in which case we say the trajectory is a \textit{high-type trajectory}. We show that when a high-type trajectory occurs, $p_t$ is eventually smaller than $\hat{p}_t$ and, in turn, $t^*$ coincides with the smallest time $t^{**} \in \mathcal{T}$ and $t^{**} \geq t_{\mathrm{min}}$ such that $p_{t^{**}} \leq \hat{p}_{t^{**}}$.

  \subsection{Existence of a Good Code} \label{sec:codebooks}
  
  In this section, we establish the existence of an encoding function $\mathcal{C}_1 \circ \cdots \circ \mathcal{C}_{T+1}$ that allows Alice and Bob to communicate with zero error. We establish this existence by using a \textit{random coding argument} in which codewords of $\mathcal{C}_1 \circ \cdots \circ \mathcal{C}_{T+1}$ are picked randomly, and in turn, certain desirable properties of the encoding function are shown to hold with high probability over the choice of encoding function. We focus on properties related to the list-decodability and codeword distance.
  
  In this section, we let encoding function $\mathcal{C}_1 \circ \cdots \circ \mathcal{C}_{T+1}$ be randomly drawn from the following distribution. For every message $m \in [q^{Rn}]$ and every feedback tuple $(z_1,\ldots,z_{T}) \in \mathcal{Z}^T$, the codeword $\mathcal{C}_1(m) \circ \mathcal{C}_2(m;z_1) \circ \cdots \circ \mathcal{C}_{T+1}(m;z_{T})$ is uniformly distributed in $\mathcal{Q}^n$. Furthermore, codewords are independent for all $(m,z_1,\ldots,z_{T}) \in [q^{Rn}] \times \mathcal{Z}^T$.

  The following lemma will prove useful.  
  \begin{lemma}[\hspace{-0.01em}{\cite[Theorem~12.1.3]{Cover1991}}] \label{thm:bin_ub}
  For $0\leq p \leq \frac{q-1}{q}$,
  \begin{equation} \nonumber
  \sum_{i=0}^{np} {n \choose i}(q-1)^i \leq q^{n H_q(p)}. 
  \end{equation}
  \end{lemma}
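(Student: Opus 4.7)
The plan is to follow the standard Chernoff-style argument for bounding a partial sum of binomial terms, adapted to the $q$-ary setting. The main idea is to introduce a ``tilt factor'' $\alpha = \frac{p}{(q-1)(1-p)}$ and exploit the hypothesis $p \le \frac{q-1}{q}$ to control it.

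First, I would observe that the hypothesis $p \le (q-1)/q$ is equivalent to $\alpha \le 1$, since $\alpha \le 1 \iff p \le (q-1)(1-p) \iff p \le (q-1)/q$. Consequently, for every $i \in \{0,1,\ldots,np\}$ we have $i - np \le 0$ and therefore $\alpha^{i-np} \ge 1$. This lets me multiply each term in the partial sum by a quantity $\ge 1$ without decreasing it:
\begin{equation} \nonumber
\sum_{i=0}^{np} \binom{n}{i}(q-1)^i \;\le\; \sum_{i=0}^{np} \binom{n}{i}(q-1)^i \, \alpha^{i-np} \;=\; \alpha^{-np} \sum_{i=0}^{np} \binom{n}{i}\bigl((q-1)\alpha\bigr)^i.
\end{equation}

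Next, I would extend the sum from $i=np$ to $i=n$ (all terms are non-negative) and evaluate it by the binomial theorem:
\begin{equation} \nonumber
\alpha^{-np} \sum_{i=0}^{np} \binom{n}{i}\bigl((q-1)\alpha\bigr)^i \;\le\; \alpha^{-np}\bigl(1 + (q-1)\alpha\bigr)^n \;=\; \alpha^{-np}\left(1 + \frac{p}{1-p}\right)^n \;=\; \alpha^{-np}(1-p)^{-n},
\end{equation}
where I used $(q-1)\alpha = p/(1-p)$. Substituting back $\alpha^{-np} = (q-1)^{np}\,p^{-np}(1-p)^{np}$ yields
\begin{equation} \nonumber
\alpha^{-np}(1-p)^{-n} \;=\; (q-1)^{np}\,p^{-np}(1-p)^{-n(1-p)}.
\end{equation}

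Finally, I would recognize the right-hand side as $q^{nH_q(p)}$: taking $\log_q$ gives $np\log_q(q-1) - np\log_q p - n(1-p)\log_q(1-p) = nH_q(p)$, matching the definition of the $q$-ary entropy function stated in the paper. Chaining the inequalities then establishes the claim. There is no serious obstacle here; the only subtle point is verifying that the tilt $\alpha$ is at most one precisely on the hypothesized range of $p$, which is what makes the step $\alpha^{i-np} \ge 1$ valid for the summation indices of interest.
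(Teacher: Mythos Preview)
Your proof is correct: the Chernoff-style tilting with $\alpha = \frac{p}{(q-1)(1-p)}$ works exactly as you describe, and the identification $\alpha \le 1 \iff p \le \frac{q-1}{q}$ is precisely what makes the key inequality $\alpha^{i-np}\ge 1$ valid on the summation range. The only edge case worth a sentence is $p=0$, where the sum collapses to its $i=0$ term and both sides equal $1$, but this is trivial.

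As for comparison: the paper does not actually supply a proof of this lemma. It is simply quoted from \cite[Theorem~12.1.3]{Cover1991} and invoked as a black box. Your argument is essentially the standard proof one finds in Cover and Thomas (for the binary case) transcribed to the $q$-ary entropy, so there is nothing to contrast methodologically --- you have filled in what the paper chose to cite rather than reprove.
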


  We first state a formal definition of list-decodability.
  
  \begin{definition}[List-Decodability]
  For $k \in [T+1]$, $\lambda_t \in [0,t]$ and $L \geq 1$, an encoding function $\mathcal{C}_1 \circ \cdots \circ \mathcal{C}_k$ is list-decodable for up to $(t-\lambda_t)\hat{p}_t$ symbol errors and $\lambda_t$ symbol erasures with list size $L$ if for every received sequence prefix $\bm{y}_{\mathrm{prefix},k} = (y_1, \ldots,y_t)$ with at most $\lambda_t$ symbol erasures and the corresponding feedback tuple $(z_1,z_2,\ldots,z_{k-1},\cdot) = f(\bm{y}_{\mathrm{prefix},k},\cdot)$ there are at most $L$ messages $m \in [q^{Rn}]$ such that the codeword suffix $\mathcal{C}_1(m) \circ \mathcal{C}_2(m;z_1) \cdots \circ \mathcal{C}_{k}(m;z_{k-1})$ is within a Hamming distance $(t-\lambda_t)\hat{p}_t$ of $\bm{y}_{\mathrm{prefix},k}$.
  \end{definition}
  
  The proof of the following lemma relies on the reference trajectory satisfying the List-Decoding Condition (\ref{eq:LD}).
  
  \begin{lemma}[\hspace{-0.01em}{\cite[Claim~III.14]{Chen2019}}] \label{thm:LD}
  Let $t = k \theta n \in \mathcal{T}$ and let $\lambda_t \in \{0,\ldots,rn\}$ such that $t \geq t_{\mathrm{min}}$. With probability at least $1 - q^{-n}$ over the design of encoding function, the encoding function $\mathcal{C}_1 \circ \cdots \circ \mathcal{C}_{k}$ is list-decodable for up to $(t-\lambda_t)\hat{p}_t$ symbol errors and $\lambda_t$ symbol erasures with list size $L = O(1/\epsilon_L)$. Proof is in Appendix \ref{sec:LD_proof}.
  \end{lemma}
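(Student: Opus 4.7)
The plan is to prove Lemma 2 via a standard random coding union bound over a constant-sized enumeration of feedback tuples together with an enumeration over received sequence prefixes, using the fact that $\hat{p}_t$ was defined precisely so that the List-Decoding Condition \eqref{eq:LD} holds with equality.

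First, I would fix an arbitrary feedback tuple $(z_1, \ldots, z_{k-1}) \in \mathcal{Z}^{k-1}$ and an arbitrary received prefix $\bm{y}_{\mathrm{prefix},k} = (y_1, \ldots, y_t) \in \mathcal{Y}^t$ with exactly $\lambda_t$ erasure symbols. Under the random coding distribution described in Section \ref{sec:codebooks}, for each message $m \in [q^{Rn}]$ the sub-codeword concatenation $\mathcal{C}_1(m) \circ \mathcal{C}_2(m;z_1) \circ \cdots \circ \mathcal{C}_k(m;z_{k-1})$ is uniformly distributed in $\mathcal{Q}^t$, and these concatenations are mutually independent across $m$ (since each message/feedback pair indexes a distinct codeword). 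Since the extended Hamming distance only counts disagreements on the $t - \lambda_t$ non-erased coordinates, the probability that a single uniformly random codeword prefix lies within Hamming distance $(t-\lambda_t)\hat{p}_t$ of $\bm{y}_{\mathrm{prefix},k}$ is at most
\begin{equation} \nonumber
\frac{1}{q^{t-\lambda_t}} \sum_{i=0}^{(t-\lambda_t)\hat{p}_t} \binom{t-\lambda_t}{i}(q-1)^i \leq q^{-(t-\lambda_t)(1 - H_q(\hat{p}_t))} = q^{-n(R+\epsilon_L)},
\end{equation}
where the first inequality uses Lemma \ref{thm:bin_ub} and the final equality is by Definition \ref{def:rt} of $\hat{p}_t$, valid because $t \geq t_{\mathrm{min}}$.

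Next, for a list size threshold $L+1$, I would bound the probability that there exist $L+1$ distinct messages whose codeword prefixes (under the fixed feedback tuple) all fall inside the Hamming ball around $\bm{y}_{\mathrm{prefix},k}$. By independence across messages and a union bound over $(L+1)$-subsets,
\begin{equation} \nonumber
\binom{q^{Rn}}{L+1} q^{-n(R+\epsilon_L)(L+1)} \leq q^{-n \epsilon_L (L+1)}.
\end{equation}
Then I take a union bound over all possible feedback tuples (at most $|\mathcal{Z}|^{k-1} = O(1)$, since both $|\mathcal{Z}|$ and $k \leq 1/\theta$ are constants independent of $n$) and all possible received prefixes (at most $(q+1)^t \leq 2^{cn}$ for a fixed constant $c$). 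The total failure probability is at most $2^{cn} \cdot q^{-n\epsilon_L(L+1)}$, which is at most $q^{-n}$ once $L+1 \geq (c \log_q 2 + 1)/\epsilon_L$. Hence $L = O(1/\epsilon_L)$ suffices.

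The main conceptual point to handle carefully is that Bob's feedback function is itself a function of the random codebook, so the realized feedback tuple is not independent of the codeword prefixes. The standard way around this, which I would adopt, is to quantify over \emph{all} feedback tuples in $\mathcal{Z}^{k-1}$ in the union bound: the list-decodability property required by the lemma must hold simultaneously for every received prefix and the corresponding induced feedback tuple, so it suffices to show the property holds for every $(\bm{y}_{\mathrm{prefix},k},(z_1,\ldots,z_{k-1}))$ pair. Since $|\mathcal{Z}|^{k-1}$ is a constant, the inflation in the union bound is absorbed into the constant $c$, and the argument goes through. This reduction is the only non-routine step; the remainder of the proof is a direct plug-in of Definition \ref{def:rt} into the Lemma \ref{thm:bin_ub} volume bound.
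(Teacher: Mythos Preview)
Your proposal is correct and follows essentially the same random-coding plus union-bound argument as the paper's proof in Appendix~\ref{sec:LD_proof}: bound the single-codeword ball-hitting probability via Lemma~\ref{thm:bin_ub}, use Definition~\ref{def:rt} to replace $(t-\lambda_t)(1-H_q(\hat p_t))$ by $n(R+\epsilon_L)$, and union-bound over $(L{+}1)$-subsets of messages and over received prefixes. The one substantive difference is that the paper sets $(z_1,\ldots,z_{k-1}) = f(\bm{y}_{\mathrm{prefix},k},\cdot)$ and does \emph{not} separately union-bound over feedback tuples, whereas you explicitly enumerate all $(z_1,\ldots,z_{k-1})\in\mathcal{Z}^{k-1}$ to decouple the (codebook-dependent) feedback from the codeword distribution; since $|\mathcal{Z}|^{k-1}=O(1)$, this extra factor is harmless and your treatment is arguably the more rigorous way to handle the dependency you flag. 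The paper's enumeration of received prefixes is also slightly tighter ($\binom{t}{\lambda_t}q^{t-\lambda_t}$ rather than $(q+1)^t$), but both yield the same $L=O(1/\epsilon_L)$ conclusion.
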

  
  \begin{lemma}[List-decodable Code] \label{thm:LD2}
  With probability at least $1 - \frac{1-\theta}{\theta} n q^{-n}$ over the design of encoding function, for every $t = k \theta n \in \mathcal{T}$ and every $\lambda_t \in [0,rn]$ such that $t \geq t_{\mathrm{min}}$, the encoding function $\mathcal{C}_1 \circ \cdots \circ \mathcal{C}_{k}$ is list-decodable for up to $(t-\lambda_t)\hat{p}_t$ symbol errors and $\lambda_t$ symbol erasures with list size
  $L = O(1/\epsilon_L)$.
  \end{lemma}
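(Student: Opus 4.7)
The plan is to deduce Lemma \ref{thm:LD2} from Lemma \ref{thm:LD} by a union bound over all admissible pairs $(t,\lambda_t)$. The key observation is that once we fix $t = k\theta n \in \mathcal{T}$ and a value $\lambda_t \in \{0,1,\ldots,\lfloor rn \rfloor\}$, the reference trajectory $\hat{p}_t$ is pinned down by Definition \ref{def:rt}, so Lemma \ref{thm:LD} applies verbatim to this fixed pair and produces a per-pair codebook-failure probability of at most $q^{-n}$. The present lemma is then a routine conversion of a ``for each'' statement into a ``for all simultaneously'' statement.

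First I would enumerate the relevant pairs. The feedback-time set has cardinality $|\mathcal{T}| = T = \frac{1}{\theta} - 1 = \frac{1-\theta}{\theta}$, and for each such $t$ the erasure count $\lambda_t$ takes at most $\lfloor rn \rfloor + 1 \leq n$ distinct integer values (using $r \le (q-1)/q < 1$ and $n$ sufficiently large, exactly as in the setup of Section \ref{sec:lb}). Hence there are at most $\frac{1-\theta}{\theta}\,n$ admissible pairs. Pairs with $t < t_{\mathrm{min}}$ impose no list-decodability obligation and can be dropped from the union without loss.

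Next I would take a union bound. For each remaining pair with $t \geq t_{\mathrm{min}}$, Lemma \ref{thm:LD} states that $\mathcal{C}_1 \circ \cdots \circ \mathcal{C}_k$ fails to be list-decodable for up to $(t - \lambda_t)\hat{p}_t$ symbol errors and $\lambda_t$ symbol erasures with list size $L = O(1/\epsilon_L)$ with probability at most $q^{-n}$ over the random codebook drawn as in Section \ref{sec:codebooks}. Summing this per-pair failure probability over all admissible pairs gives a total failure probability of at most $\frac{1-\theta}{\theta}\,n\,q^{-n}$, and taking the complement yields the claimed simultaneous list-decodability guarantee.

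I do not expect a genuine obstacle: the substantive probabilistic content lives entirely inside Lemma \ref{thm:LD}, and what remains here is bookkeeping. The only point to check carefully is that the random ensemble being analyzed is identical across all pairs: under the construction of Section \ref{sec:codebooks}, every codeword $\mathcal{C}(m;z_1,\ldots,z_T)$ is independent and uniformly distributed on $\mathcal{Q}^n$, so the per-pair hypothesis of Lemma \ref{thm:LD} is satisfied verbatim for every $(t,\lambda_t)$, legitimizing the union bound.
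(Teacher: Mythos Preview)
Your proposal is correct and matches the paper's own proof essentially line for line: both invoke Lemma \ref{thm:LD} for each fixed pair $(t,\lambda_t)$ and then take a union bound over the at most $|\mathcal{T}|\cdot n = \frac{1-\theta}{\theta}\,n$ admissible pairs to obtain the stated failure probability. The only cosmetic difference is that the paper bounds the number of $\lambda_t$ values by $rn$ before relaxing to $n$, whereas you go straight to $n$; the resulting bound is identical.
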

  
  \begin{proof}[Proof of Lemma \ref{thm:LD2}] By Lemma \ref{thm:LD}, for fixed $t= k \theta n \in \mathcal{T}$ and $\lambda_t \in [0,t]$ such that $t \geq t_{\mathrm{min}}$, the probability that $\mathcal{C}_1 \circ \cdots \circ \mathcal{C}_k$ is not list decodable with the stated parameters is at most $2^{-n}$. By a union bound over all $t = k \theta n \in \mathcal{T}$ and all $\lambda_t \in \{0,1, \ldots, rn\}$ such that $t \geq t_{\mathrm{min}}$, the probability that $\mathcal{C}_1 \circ \cdots \circ \mathcal{C}_k$ is not list decodable for some pair $(t,\lambda_t)$ is at most $|\mathcal{T}|rn2^{-n} < \frac{1-\theta}{\theta} n 2^{-n}$.
  \end{proof}

  \begin{lemma} \label{thm:cb_fb}
  Suppose that the size of the feedback symbol set $\mathcal{Z}$ is large enough such that
  \begin{equation} \label{thm:cb_fb_Z}
  |\mathcal{Z}| > \frac{RL}{\theta^2\left(1-H_q\left(\frac{(q-1)(1-\epsilon_L)(1+\epsilon_L)}{q} \right)\right)}.
  \end{equation}
  Then with probability at most $q^{-\Omega(n)}$ over the design of encoding function, for any $k \in [T]$ and for any $\mathcal{L}_k^{\mathrm{super}} \subset [q^{Rn}]$ such that $|\mathcal{L}_k^{\mathrm{super}}| \leq L/\theta$, there exists feedback symbol $z_{k} \in \mathcal{Z}$ such that the sub-codewords corresponding to feedback $z_{k}$ and messages in $\mathcal{L}^{\mathrm{super}}_{k}$ satisfy the Feedback Distance Condition: for all unique pairs $m',m'' \in \mathcal{L}^{\mathrm{super}}_{k}$
  \begin{equation} \label{eq:cb_fb_dist}
  \begin{aligned}
  d_H\left(\mathcal{C}_{k+1}(m';z_{k}),\mathcal{C}_{k+1}(m'';z_{k})\right) > \theta n \frac{(q-1)(1-\epsilon_L)(1+\epsilon_L)}{q}.
  \end{aligned}
  \end{equation}
  \end{lemma}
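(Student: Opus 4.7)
The plan is a standard random coding / union bound argument, with one key twist: exploiting independence across feedback symbols. First I would fix $k \in [T]$, a super-list $\mathcal{L} \subset [q^{Rn}]$ of size at most $L/\theta$, a pair $\{m',m''\} \subset \mathcal{L}$, and a single feedback symbol $z \in \mathcal{Z}$. Under the random coding distribution of Section \ref{sec:codebooks}, the sub-codewords $\mathcal{C}_{k+1}(m';z)$ and $\mathcal{C}_{k+1}(m'';z)$ are independent and uniform on $\mathcal{Q}^{\theta n}$, so $d_H(\mathcal{C}_{k+1}(m';z),\mathcal{C}_{k+1}(m'';z))$ is a sum of $\theta n$ i.i.d.\ Bernoulli$\bigl(\tfrac{q-1}{q}\bigr)$ random variables. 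Writing $\gamma \triangleq (q-1)(1-\epsilon_L)(1+\epsilon_L)/q$, a Hamming-ball volume bound combined with Lemma \ref{thm:bin_ub} gives
\[
\Pr\bigl[d_H(\mathcal{C}_{k+1}(m';z),\mathcal{C}_{k+1}(m'';z)) \le \theta n \gamma\bigr] \;\le\; q^{-\theta n\,(1 - H_q(\gamma))}.
\]

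Next I would lift this single-$z$ estimate in two stages. The key observation is that the codebook construction draws $\{\mathcal{C}_{k+1}(m;z)\}_{m,z}$ mutually independently across $z$, so the event that the Feedback Distance Condition (\ref{eq:cb_fb_dist}) fails for \emph{every} $z \in \mathcal{Z}$ simultaneously factorizes and has probability at most the $|\mathcal{Z}|$-th power of the single-$z$ failure probability. After a pairwise union bound (at most $\binom{L/\theta}{2}$ pairs) inside each $z$, the probability that no $z$ works for a fixed $(k,\mathcal{L})$ is at most $\bigl[(L/\theta)^2 \, q^{-\theta n\,(1-H_q(\gamma))}\bigr]^{|\mathcal{Z}|}$.

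Finally, I would take a union bound over the $T = 1/\theta - 1$ choices of $k$ and over all super-lists $\mathcal{L} \subset [q^{Rn}]$ with $|\mathcal{L}| \le L/\theta$; the latter count is at most $q^{RnL/\theta}$. The overall failure probability is therefore bounded by
\[
\tfrac{1}{\theta} \cdot q^{RnL/\theta} \cdot (L/\theta)^{2|\mathcal{Z}|} \cdot q^{-|\mathcal{Z}|\,\theta n\,(1-H_q(\gamma))},
\]
and this is $q^{-\Omega(n)}$ precisely when the exponent in $n$ is strictly negative, i.e., $|\mathcal{Z}|\,\theta\,(1-H_q(\gamma)) > RL/\theta$, which rearranges to exactly the hypothesis (\ref{thm:cb_fb_Z}).

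The step I expect to require the most care is the independence across $z$: without the product-form bound, the union-bound factor $q^{RnL/\theta}$ coming from enumerating super-lists would dominate, and no constant-size $\mathcal{Z}$ would ever suffice. Once that independence is correctly invoked, the remaining work is routine bookkeeping, since $L$, $\theta$, $\epsilon_L$, and $|\mathcal{Z}|$ are all constants in $n$ and so the $(L/\theta)^{2|\mathcal{Z}|}$ and $1/\theta$ prefactors are harmlessly absorbed into the $q^{-\Omega(n)}$ decay.
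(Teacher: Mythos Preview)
Your proposal is correct and follows essentially the same route as the paper's proof: bound the single-pair, single-$z$ failure probability by $q^{-\theta n(1-H_q(\gamma))}$ via Lemma~\ref{thm:bin_ub}, union-bound over the at most $\binom{L/\theta}{2}$ pairs, exploit independence of the sub-codeword families across $z\in\mathcal{Z}$ to raise this to the $|\mathcal{Z}|$-th power, and then union-bound over $k$ and over all super-lists of size $L/\theta$ (counted by $\binom{q^{Rn}}{L/\theta}\le q^{RnL/\theta}$). You have also correctly singled out the independence-across-$z$ step as the crux; that is exactly where the argument departs from a naive union bound and is what allows $|\mathcal{Z}|$ to remain constant in $n$.
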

  
  \begin{proof}[Proof of Lemma \ref{thm:cb_fb}]
  %The proof relies on the reference trajectory satisfying the List-Refinement Condition (\ref{eq:LR}) \hl{what??}. 
  For $k \in [T]$, unique messages $m',m'' \in [q^{Rn}]$ and feedback symbol $z_{k} \in \mathcal{Z}$, define $\mathcal{D}_{k}(m',m'',z_{k})$ to be the event that (\ref{eq:cb_fb_dist}) does not hold. Let $\mathbb{P}$ denote the probability measure w.r.t. the design of the random encoding function $\mathcal{C}_1 \circ \cdots \circ \mathcal{C}_{T+1}$. Fix the size of the feedback set $\mathcal{Z}$ to be a constant such that (\ref{thm:cb_fb_Z}) holds. To prove Lemma \ref{thm:cb_fb}, we show that the probability
  \begin{equation} 
  \mathbb{P} \left( \bigcup_{k=1}^{T} \bigcup_{\substack{\mathcal{L}_{k}^{\mathrm{super}}\subset [q^{Rn}] \\: |\mathcal{L}_{k}^{\mathrm{super}}| = L/\theta}} \bigcap_{z_{k} \in \mathcal{Z}} \bigcup_{\substack{m',m'' \in \mathcal{L}_{k}^{\mathrm{super}} \\ :m' \neq m''}} D_{k}(m',m'',z_{k}) \right) \nonumber
  \end{equation}
  tends to $0$ exponentially fast as the blocklength $n$ tends to infinity.
  
  By a simple union bound, the above probability is bounded above by
  \begin{equation} \nonumber
  \sum_{k=1}^T \sum_{\substack{\mathcal{L}_k^{\mathrm{super}}\subset [q^{Rn}] \\ : |\mathcal{L}_k^{\mathrm{super}}| = L/\theta}}\mathbb{P} \left( \bigcap_{z_{k} \in \mathcal{Z}} \bigcup_{\substack{m',m'' \in \mathcal{L}_{k}^{\mathrm{super}} \\ :m' \neq m''}} D_{k}(m',m'',z_{k}) \right)
  \end{equation}
  which, in turn, following the independence of sub-codewords over feedback $z_{k} \in \mathcal{Z}$, is equal to 
  \begin{equation} \label{eq:cb_fb_2}
  \sum_{k=1}^T \sum_{\substack{\mathcal{L}_k^{\mathrm{super}}\subset [q^{Rn}] \\ : |\mathcal{L}_k^{\mathrm{super}}| = L/\theta}}\mathbb{P} \left( \bigcup_{\substack{m',m'' \in \mathcal{L}_{k}^{\mathrm{super}}\\ :m' \neq m''}} \hspace{-1em} D_{k}(m',m'',z_{k}) \right)^{|\mathcal{Z}|}.
  \end{equation}
  An additional union bound yields that (\ref{eq:cb_fb_2}) is bounded above by
  \begin{align} \label{eq:cb_fb_3}
  &\sum_{k=1}^T \sum_{\substack{\mathcal{L}_k^{\mathrm{super}}\subset [q^{Rn}] \\ : |\mathcal{L}_k^{\mathrm{super}}| = L/\theta}} \left[ \sum_{\substack{m',m'' \in \mathcal{L}_{k}^{\mathrm{super}}\\ :m' \neq m''}} \mathbb{P} \left( D_{k}(m',m'',z_{k}) \right) \right]^{|\mathcal{Z}|}
  \end{align}
  In (\ref{eq:cb_fb_3}), we remark that the quantity $\mathbb{P}\left(D_{k}(m',m'',z_{k}) \right)$ follows the binomial cumulative distribution function (CDF), i.e.,
  \begin{align}
  \mathbb{P} \left( D_{k}(m',m'',z_{k}) \right) = \hspace{-3em} \sum_{i=0}^{\theta n \frac{(q-1)(1-\epsilon_L)(1+\epsilon_L)}{q}} \hspace{-1em} {\theta n \choose i} \left(1 - \frac{1}{q}\right)^i \left(\frac{1}{q}\right)^{\theta n-i}\nonumber 
  \end{align}
  which in turn, following Lemma \ref{thm:bin_ub}, is bounded above by
  $$q^{\theta n \left(1- H_q\left( \frac{(q-1)(1-\epsilon_L)(1+\epsilon_L)}{q} \right) \right)}.$$ Thus, following a simple counting exercise,   (\ref{eq:cb_fb_3}) is bounded above by
  \begin{align}
  &T{q^{Rn} \choose L/\theta} {L/\theta \choose 2}^{|\mathcal{Z}|} q^{-|\mathcal{Z}|\theta n \left(1-H_q \left( \frac{(q-1)(1-\epsilon_L)(1+\epsilon_L)}{q} \right) \right)} \nonumber \\
  & \leq T {L/\theta \choose 2}^{|\mathcal{Z}|} q^{\frac{RnL}{\theta}-|\mathcal{Z}|\theta n \left(1-H_q \left( \frac{(q-1)(1-\epsilon_L)(1+\epsilon_L)}{q} \right) \right)} \label{eq:cb_fb_5}
  \end{align}
  where (\ref{eq:cb_fb_5}) follows from the binomial coefficient inequality ${q^{Rn} \choose L/\theta} \leq q^{RnL/\theta}$. In turn, following the assumption that $|\mathcal{Z}|$ is lower bounded by (\ref{thm:cb_fb_Z}), (\ref{eq:cb_fb_5}) is bounded above by $q^{-\Omega(n)}$.
  \end{proof}

  \subsection{Properties of Reference Trajectory $\hat{p}_t$} \label{sec:ref_traj}
 
  In this section, we characterize the following properties of the the reference trajectory $\hat{p}_t$. First, we show that $\hat{p}_t$ satisfies the List-Refinement Condition (\ref{eq:LR}) for sufficiently list-decoding parameter $\epsilon_L>0$ (c.f. Lemma \ref{thm:LR_suff}). Thus, $\hat{p}_t$ satisfies both the List-Decoding and the List-Refinement Conditions. Next, we show that reference trajectory $\hat{p}_t$ is eventually greater than the trajectory $p_t$ for any $p_t$ chosen by the adversary (c.f. Lemma \ref{thm:traj_large}). In turn, we show that the choice of reference trajectory guarantees that the number of symbol erasures $\lambda_t$ and the number of symbol errors $(t-\lambda_t) p_t$ in the received suffix is bounded for all time $t \in \mathcal{T}$ and $t \geq t_{\mathrm{min}}$ below some time threshold $t^{**}$ (c.f., Lemma \ref{thm:rt_summary}). Later, in Section \ref{sec:combine}, we will show that Bob's decoding point $t^*$ coincides with this threshold $t^{**}$.

  Recall that the reference trajectory $\hat{p}_t$ is given by Definition \ref{def:rt}. For the reader's convenience, we restate the definition here. 

  \begin{definition}[Reference Trajectory (Definition \ref{def:rt} Repeated)]
  For $t \in \mathcal{T}$, define $\hat{p}_t =0$ if $t<t_{\mathrm{min}}$. Otherwise, if $t \geq t_{\mathrm{min}}$, define $\hat{p}_t$ to be the unique value in $[0,\frac{q-1}{q}]$ such that $(t-\lambda_t)(1-H_q(\hat{p}_t)) = n(R+\epsilon_L)$.
  \end{definition}

  We remark that reference trajectory $\hat{p}_t$ is increasing in $t$ for all $t \in \mathcal{T}$ and $t \geq t_{\mathrm{min}}$. This follows from the fact that $H_q(x) \in [0,1]$ is increasing in $x \in [0,\frac{q-1}{q}]$.

  \begin{lemma}[List-Refinement Condition] \label{thm:LR_suff}
  For small enough $\epsilon_L>0$, $\hat{p}_t$ satisfies the List-Refinement Condition (\ref{eq:LR}) for all $t \in \mathcal{T}$ such that $t \geq t_{\mathrm{min}}$.
  \end{lemma}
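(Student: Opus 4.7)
The plan is to reduce the List-Refinement Condition \eqref{eq:LR} to the rate definition by exploiting monotonicity of a single auxiliary function. I will first rewrite \eqref{eq:LR} into a form that compares two quantities of shape $\alpha(\cdot)$. Using the identity $1 - \alpha(\bar{p}) = \frac{2q(p-\bar{p})}{q-1} + \frac{q}{q-1}r$ built into the definition of $\alpha$, and setting $\alpha_0 := (t-\lambda_t)/n$ and $\bar{p}^* := \alpha_0\hat{p}_t$, a direct calculation shows that \eqref{eq:LR} is equivalent to
\[
\alpha(\bar{p}^*) - \alpha_0 \ \geq\ \epsilon_L\!\left(1-\tfrac{t}{n}\right) - \tfrac{\lambda_t}{n(q-1)}.
\]
Since the right-hand side is at most $\epsilon_L$, it is sufficient to produce a lower bound of size $\epsilon_L$ on $\alpha(\bar{p}^*) - \alpha_0$.

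The heart of the argument is the comparison $\alpha(\bar{p}^*) \geq \alpha_0 + (\epsilon_R - \epsilon_L)$. In the principal sub-case $\bar{p}^* = \alpha_0 \hat{p}_t \in [0,p]$, fix $\bar{p}^*$ and define
\[
g(\alpha) := \alpha\!\left(1 - H_q(\bar{p}^*/\alpha)\right), \qquad \alpha \in \bigl[q\bar{p}^*/(q-1),\,1\bigr].
\]
A short computation gives $g'(\alpha) = 1 + \log_q\!\bigl(1 - \bar{p}^*/\alpha\bigr) \in (0,1]$ on this domain, so $g$ is strictly increasing and Lipschitz with constant at most $1$. Two identities pinpoint two values of $g$: the reference-trajectory definition (Definition \ref{def:rt}) gives $g(\alpha_0) = \alpha_0(1 - H_q(\hat{p}_t)) = R + \epsilon_L$, while evaluating the rate definition at $\bar{p}^* \in [0,p]$ gives $g(\alpha(\bar{p}^*)) = \alpha(\bar{p}^*)\bigl(1-H_q(\bar{p}^*/\alpha(\bar{p}^*))\bigr) \geq R + \epsilon_R$. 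Subtracting and using the Lipschitz bound yields
\[
\alpha(\bar{p}^*) - \alpha_0 \ \geq\ g(\alpha(\bar{p}^*)) - g(\alpha_0) \ \geq\ \epsilon_R - \epsilon_L,
\]
so imposing $\epsilon_L \leq \epsilon_R / 2$ delivers $\alpha(\bar{p}^*) - \alpha_0 \geq \epsilon_L$ as required.

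The complementary sub-case $\alpha_0 \hat{p}_t > p$ is easier: the quantity $2(np - (t-\lambda_t)\hat{p}_t)$ is strictly decreasing in $(t-\lambda_t)\hat{p}_t$, so the left-hand side of \eqref{eq:LR} is strictly smaller than its value at the boundary $\alpha_0 \hat{p}_t = p$, which is already handled by the principal sub-case (evaluated in the limit $\bar{p}^* \to p^-$). Two small technical checks remain: the ratio $\bar{p}^*/\alpha_0 = \hat{p}_t$ lies in $[0, (q-1)/q]$ by Definition \ref{def:rt}, so $g$ is indeed on its monotone domain; and the hypothesis $2p + r < (q-1)/q$ guarantees $\alpha(\bar{p}) > 0$ throughout $[0,p]$.

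The main obstacle is the slack bookkeeping between the two small parameters $\epsilon_L$ and $\epsilon_R$: the $(1-\epsilon_L)$ factor on the right-hand side of \eqref{eq:LR} must be matched by slack flowing from the rate backoff $\epsilon_R$ through the Lipschitz step on $g$, forcing the quantitative smallness condition $\epsilon_L \leq \epsilon_R/2$ that the statement's ``small enough $\epsilon_L$'' refers to.
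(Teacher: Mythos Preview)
Your principal sub-case argument is correct and genuinely different from the paper's. The paper rewrites \eqref{eq:LR} as a lower bound on $\hat{p}_t$, applies $H_q$ to both sides, substitutes the reference-trajectory identity, and then chooses $\bar{p}$ so that $\alpha(\bar{p}) = (t-\lambda_t)/n$; the $(1-\epsilon_L)$ slack is absorbed through a continuity argument on $H_q$ (the $\eta(t,\lambda_t,\epsilon_L)$ term). Your route---recasting \eqref{eq:LR} as a comparison $\alpha(\bar{p}^*)-\alpha_0 \geq \cdots$ and then reading off the gap via the Lipschitz-$1$ bound on $g(\alpha)=\alpha(1-H_q(\bar{p}^*/\alpha))$---is cleaner and, unlike the paper, yields the explicit quantitative condition $\epsilon_L \le \epsilon_R/2$ rather than an unspecified ``small enough.''

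However, your complementary sub-case ($\bar{p}^* = \alpha_0\hat{p}_t > p$) has a gap. You claim the boundary case $\bar{p}^* = p$ ``is already handled by the principal sub-case,'' but the principal sub-case crucially used the identity $g(\alpha_0) = R + \epsilon_L$, which holds \emph{only} because $\bar{p}^*/\alpha_0 = \hat{p}_t$ is the true reference trajectory. If you artificially lower $\hat{p}_t$ to $p/\alpha_0$ (keeping $t,\lambda_t$ fixed), that identity breaks: you now have $\alpha_0(1-H_q(p/\alpha_0)) > R+\epsilon_L$, so the two evaluations of $g$ are no longer separated by $\epsilon_R-\epsilon_L$, and the Lipschitz step no longer delivers the needed lower bound on $\alpha(p)-\alpha_0$. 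Concretely, the reduction would require showing $nr-\lambda_t \le \frac{(q-1)(1-\epsilon_L)(n-t)}{q}$ for all admissible $(t,\lambda_t)$ with $\alpha_0\hat{p}_t > p$, and this is not automatic (take $\lambda_t$ small relative to $rn$ and $t$ large). The case is repairable---for instance by running your Lipschitz argument with the true $\bar{p}^*$ and separately verifying $g(\alpha(\bar{p}^*)) = \alpha(\bar{p}^*)(1-H_q(\bar{p}^*/\alpha(\bar{p}^*))) \ge R+\epsilon_R$ even when $\bar{p}^* > p$---but this requires additional work that is not in your writeup.
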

  
  \begin{proof}[Proof of Lemma \ref{thm:LR_suff}]
  %Let $\{\lambda_t\}_{t=1}^n$ be an increasing sequence in $\{0,1, \ldots, rn\}$ where $\lambda_1 \leq 1$ and $\lambda_{t+1} - \lambda_{t} \leq 1$ for all $t \in [n-1]$. 
  We begin by restating the List-Refinement Condition (\ref{eq:LR}). By a rearrangement of terms, (\ref{eq:LR}) can be shown to be equivalent to
  \begin{equation} \label{eq:LR2}
   \hat{p}_t \geq \frac{nr-\lambda_t}{2(t-\lambda_t)} + \frac{np}{t-\lambda_t} - \frac{(q-1)(1-\epsilon_L)(n-t)}{2q(t-\lambda_t)} .
  \end{equation}
  We remark that if the right-hand-side (RHS) of (\ref{eq:LR2}) is non-positive for small enough $\epsilon_L>0$ and all $t \in \mathcal{T}$ and $t \geq t_{\mathrm{min}}$ then (\ref{eq:LR2}) trivially holds following the fact that $\hat{p}_t \geq 0$. Thus, we only need to consider the case where the RHS of (\ref{eq:LR2}) is positive for all $\epsilon_L>0$ and some bad pairs $(t,\lambda_t) \in \mathcal{B}$ where $ \mathcal{B} \subseteq \mathcal{T} \times \{0,1,\ldots,rn\}$ and $t \geq t_{\mathrm{min}}$. In the sequel, suppose that the RHS of (\ref{eq:LR2}) is positive for all $\epsilon_L>0$ and all $(t,\lambda_t) \in \mathcal{B}$.
  
  In turn, we remark that the left-hand-side (LHS) of (\ref{eq:LR2}) is bounded between $(0,\frac{q-1}{q}]$ following the fact that $\hat{p}_t \leq \frac{q-1}{q}$. In turn, we can utilize the fact that $H_q(\cdot)$ is an increasing function on $[0,\frac{q-1}{q}]$ and rewrite (\ref{eq:LR2})
  \begin{equation} \label{eq:LR3}
  \begin{aligned}
  & H_q\left(\hat{p}_t \right) \geq \\
  & \hspace{1em} H_q \left( \frac{nr-\lambda_t}{2(t-\lambda_t)} + \frac{np}{t-\lambda_t} - \frac{(q-1)(1-\epsilon_L)(n-t)}{2q(t-\lambda_t)} \right).
  \end{aligned}
  \end{equation}
  Following the definition of $\hat{p}_t$, (\ref{eq:LR3}) is equivalent to the following inequality:
  \begin{align}
  &n(R+\epsilon_L) \nonumber \\
  & \triangleq (t- \lambda_t)(1 - H_q(\hat{p}_t)) \nonumber \\
  & \leq (t - \lambda_t) \left(1 - H_q\left(\beta_t - \frac{(q-1)(1-\epsilon_L)(n-t)}{2q(t-\lambda_t)} \right) \right) \label{eq:f_ineq}
  \end{align}
  where $\beta_t = \frac{nr-\lambda_t}{2(t-\lambda_t)} + \frac{np}{t-\lambda_t}$.
  
  In the sequel, we show that the above inequality holds for small enough $\epsilon_L >0$. Recall that $R = \min_{\bar{p} \in [0,p]} [\alpha(\bar{p})(1-H_q(\frac{\bar{p}}{\alpha(\bar{p})}))] - \epsilon_R$ where $\alpha(\bar{p}) \triangleq 1 -\frac{2q}{q-1}(p-\bar{p}) - \frac{q}{q-1}r$. Thus, $R$ is bounded above by $\alpha(\bar{p})(1-H(\frac{\bar{p}}{\alpha(\bar{p})})) - \epsilon_R$ for any $\bar{p} \in [0,p]$. Setting $\bar{p}$ such that $\alpha(\bar{p}) = \frac{t-\lambda_t}{n}$, it follows that $\frac{\bar{p}}{\alpha(\bar{p})} = \beta_t - \frac{(q-1)(n-t)}{2q(t-\lambda_t)} + \frac{\lambda_t}{2q(t-\lambda_t)}$ and, in turn,
  \begin{align}
  & n(R+\epsilon_L) \nonumber \\
  & \leq (t-\lambda_t) \left(1 - H_q\left(\beta_{t} - \frac{(q-1)(n-t)}{2q(t-\lambda_t)}+ \frac{\lambda_t}{2q(t - \lambda_t)}\right)\right) + n (\epsilon_L - \epsilon_R) \nonumber \\
  & \stackrel{(a)}{\leq} (t-\lambda_t) \left(1 - H_q\left(\beta_{t} - \frac{(q-1)(n-t)}{2q(t-\lambda_t)}\right)\right)  + n(\epsilon_L-\epsilon_R) \nonumber
  \end{align}
  where (a) follows from the fact that $H_q(\cdot)$ is an increasing function on $[0,\frac{q-1}{2q}]$ and the fact that $\beta_t - \frac{(q-1)(n-t)}{2q(t-\lambda_t)} + \frac{\lambda_t}{2q(t-\lambda_t)} = \frac{\bar{p}}{\alpha(\bar{p})} \leq \frac{q-1}{q}$.\footnote{The second fact can be seen via the following argument. Define the function $g(\gamma) \triangleq \frac{\gamma}{\alpha(\gamma)}$ for $\gamma \in [0,p]$ which has the derivative $\frac{\partial g(\gamma)}{\partial \gamma}$ equal to $\frac{(q-1)(-1+q(1-2p-r))}{(2pq-q-2 \gamma q+qr+1)^2}$. Following that $2p+r < \frac{q-1}{q}$, we have that $\frac{\partial g(\gamma)}{\partial \gamma} > 0$ for all $\gamma \in [0,p]$, and thus, $\max\limits_{\gamma \in [0,p]} g(\gamma) = g(p) = \frac{p}{1 - \frac{q}{q-1}r} < \frac{\frac{q-1}{2q}-\frac{r}{2}}{1 - \frac{q}{q-1}r}=\frac{q-1}{2q}$.} Define
  \begin{align}
  &\eta(t,\lambda_t,\epsilon_L) \triangleq H_q\left(\beta_t - \frac{(q-1)(1-\epsilon_L)(n-t)}{2q(t-\lambda_t)} \right) - H_q\left( \beta_t - \frac{(q-1)(n-t)}{2q(t-\lambda_t)} \right) \nonumber
  \end{align}
  and note that the quantity $\eta(t,\lambda_t,\epsilon_L)$ tends to $0$ as $\epsilon_L$ tends to $0$ following the fact that the term $\frac{\epsilon_L(q-1)(n-t)}{(2q(t-\lambda_t))} \leq \frac{\epsilon_L(q-1)(n-t)}{(2q(t_{\mathrm{min}}-\lambda_{t_{\mathrm{min}}}))} < \frac{\epsilon_L(q-1)}{R+\epsilon_L}$ tends to 0 and the fact that $H_q(\cdot)$ is a continuous function. With this notation in hand, we rewrite (a):
  \begin{equation} \label{eq:LD5}
  \begin{aligned}
  &(t - \lambda_t) \left(1 - H_q\left(\beta_t - \frac{(q-1)(1-\epsilon_L)(n-t)}{2q(t-\lambda_t)} \right) \right) + (t-\lambda_t) \eta(t,\lambda_t,\epsilon_L) + n(\epsilon_L - \epsilon_R).
  \end{aligned}
  \end{equation}
  We remark that for small enough $\epsilon_L>0$, the term $(t-\lambda_t)\eta(t,\lambda_t,\epsilon_L) + n(\epsilon_L - \epsilon_R)$ is negative for all bad pairs $(t,\lambda_t) \in \mathcal{B}$. Thus, for small enough $\epsilon_L>0$, (\ref{eq:LD5}) is bounded above by (\ref{eq:f_ineq}) for all $(t,\lambda_t) \in \mathcal{B}$, and thus, the List-Refinement Condition (\ref{eq:LR3}) holds for all $(t,\lambda_t) \in \mathcal{B}$.
  \end{proof}

  The following lemma states that the reference trajectory $\hat{p}_t$ is eventually greater than the adversary's trajectory $p_t$.
  
  \begin{lemma} \label{thm:traj_large}
  Define $t_{\mathrm{max}}$ to be the largest integer in $\mathcal{T}$ such that $t_{\mathrm{max}} \leq n(1- \epsilon_L)$. For small enough $\epsilon_L>0$ and for any $\theta \in (0,\epsilon_L)$, $(t_{\mathrm{max}} - \lambda_{t_{\mathrm{max}}}) \hat{p}_{t_{\mathrm{max}}} \geq np$.
  \end{lemma}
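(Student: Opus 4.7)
The plan is to reduce the claim to a one-variable analytic inequality and then exploit the definition of $R$ as a minimum. Let $\alpha^* := (t_{\mathrm{max}} - \lambda_{t_{\mathrm{max}}})/n$. Because $t_{\mathrm{max}}$ is the largest multiple of $\theta n$ at most $n(1-\epsilon_L)$, I obtain $\alpha^* \geq 1 - r - \epsilon_L - \theta$; combined with the hypothesis $2p+r<(q-1)/q$, this yields $p/\alpha^* < (q-1)/q$ for $\epsilon_L$ small enough (so, in particular, $\hat{p}_{t_{\mathrm{max}}}$ is well defined via Definition~\ref{def:rt}). Since $\hat{p}_{t_{\mathrm{max}}}$ satisfies $\alpha^*(1-H_q(\hat{p}_{t_{\mathrm{max}}})) = R+\epsilon_L$ and both $p/\alpha^*$ and $\hat{p}_{t_{\mathrm{max}}}$ lie in the monotone range $[0,(q-1)/q]$ of $H_q$, the target inequality $\alpha^* \hat{p}_{t_{\mathrm{max}}} \geq np$ is equivalent to
\[
R + \epsilon_L \leq \alpha^*(1 - H_q(p/\alpha^*)).
\]

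To establish this reduced inequality, I introduce $h(\alpha) := \alpha(1 - H_q(p/\alpha))$ and verify by direct differentiation that $h'(\alpha) = 1 + \log_q(1 - p/\alpha)$, which is positive on the regime $p/\alpha < (q-1)/q$; hence $h$ is strictly increasing there. I then split on the sign of $\alpha^* - \alpha(p)$, where $\alpha(p) = 1 - \frac{q}{q-1} r$. In the case $\alpha^* \geq \alpha(p)$, evaluating the minimum in the definition of $R$ at $\bar{p} = p$ gives $R + \epsilon_R \leq h(\alpha(p))$, and monotonicity of $h$ upgrades this to $R + \epsilon_L < h(\alpha^*) = \alpha^*(1 - H_q(p/\alpha^*))$, as desired. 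In the case $\alpha^* < \alpha(p)$, equivalently $1-\alpha^* \geq \frac{q}{q-1}r$, the auxiliary point $\tilde{p} := p + r/2 - \frac{q-1}{2q}(1-\alpha^*)$ lies in $[0,p]$ and satisfies $\alpha(\tilde{p}) = \alpha^*$; the minimum evaluated at $\bar{p}=\tilde{p}$ yields $R + \epsilon_R \leq \alpha^*(1 - H_q(\tilde{p}/\alpha^*))$. Using $1 - \alpha^* \leq \epsilon_L + \theta + r$, the gap satisfies $0 \leq p - \tilde{p} \leq \frac{(q-1)(\epsilon_L+\theta)}{2q} = O(\epsilon_L)$, so Lipschitz continuity of $H_q$ on a fixed sub-interval of $(0,(q-1)/q)$ containing both $\tilde{p}/\alpha^*$ and $p/\alpha^*$ gives $\alpha^*|H_q(p/\alpha^*) - H_q(\tilde{p}/\alpha^*)| = O(\epsilon_L)$. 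Picking $\epsilon_L$ small enough that this last quantity is strictly below $\epsilon_R - \epsilon_L$ completes the argument.

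The hard part will be the second case, where the Lipschitz bound must hold uniformly in the adversary's choice of $\lambda_{t_{\mathrm{max}}}$; this requires that $\tilde{p}/\alpha^*$ and $p/\alpha^*$ stay uniformly bounded away from the endpoints $0$ and $(q-1)/q$ of the monotone range of $H_q$. Both separations follow from the strict inequality $2p+r<(q-1)/q$, essentially via the bound $p/(1-r) < (q-1)/q$ already implicit in the footnote appearing in the proof of Lemma~\ref{thm:LR_suff}.
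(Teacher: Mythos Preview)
Your proof is correct and shares the paper's core strategy: reduce the claim to the inequality $R+\epsilon_L \leq \alpha^*(1-H_q(p/\alpha^*))$ via monotonicity of $H_q$ on $[0,\tfrac{q-1}{q}]$, then invoke the definition of $R$ as a minimum over $\bar p\in[0,p]$ to manufacture the needed upper bound on $R$. The technical execution differs in one useful way. The paper plugs in the single choice $\bar p$ with $\alpha(\bar p)=\alpha^*$ (your $\tilde p$), then uses $\beta\geq 0$ and continuity of $H_q$ to absorb the discrepancy between $\bar p/\alpha^*$ and $p/\alpha^*$; however, it never checks that this $\bar p$ actually lies in $[0,p]$, and indeed $\bar p>p$ whenever $\alpha^*>\alpha(p)=1-\tfrac{q}{q-1}r$, which can occur when $r>0$ and the adversary has used few erasures by time $t_{\max}$. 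Your explicit case split handles this cleanly: in Case~1 you bypass the substitution entirely via the monotonicity of $h(\alpha)=\alpha(1-H_q(p/\alpha))$ and the evaluation at $\bar p=p$, while in Case~2 you recover the paper's substitution and close the $O(\epsilon_L)$ gap by Lipschitz (rather than mere) continuity of $H_q$ on a fixed compact subinterval. The monotonicity lemma $h'(\alpha)=1+\log_q(1-p/\alpha)>0$ is a small but genuinely new ingredient relative to the paper; it makes Case~1 a one-liner and patches the domain issue just mentioned.
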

  
  \begin{proof}[Proof of Lemma \ref{thm:traj_large}]
  We remark that the definition of $t_{\mathrm{max}}$ along with the definition of $\mathcal{T} \in \{\theta n, 2 \theta n, \ldots, (1-\theta)n \}$ implies that $t_{\mathrm{max}}$ is bounded below by $n(1-\epsilon_L -\theta)$.
  
  Note that the inequality $(t_{\mathrm{max}}-\lambda_{t_{\mathrm{max}}}) \hat{p}_{t_{\mathrm{max}}} \geq np$ is equivalent to
  \begin{equation} \label{eq:tr_lg_2}
  H_q(\hat{p}_{t_{\mathrm{max}}}) \geq H_q\left(\frac{pn}{t_{\mathrm{max}} - \lambda_{t_{\mathrm{max}}}}\right)
  \end{equation}
  following the fact that $H_q(\cdot)$ is increasing on $[0,\frac{q-1}{q}]$ and $\hat{p}_{t_{\mathrm{max}}} \leq \frac{q-1}{q}$. In turn, following the definition of $\hat{p}_t$, (\ref{eq:tr_lg_2}) is equivalent to
  \begin{equation} \label{eq:tr_lg_3}
  \begin{aligned}
  & n(R+\epsilon_L) \\
  & \triangleq (t_{\mathrm{max}} - \lambda_{t_{\mathrm{max}}})(1 - H_q(\hat{p}_{t_{\mathrm{max}}})) \\
  & \leq (t_{\mathrm{max}} - \lambda_{t_{\mathrm{max}}})\left(1 - H_q\left( \frac{pn}{t_{\mathrm{max}}-\lambda_{t_{\mathrm{max}}}} \right)\right).
  \end{aligned}
  \end{equation}

  To prove Lemma \ref{thm:traj_large}, we show that for small enough $\epsilon_L>0$ and any $\theta \in (0,\epsilon_L)$, the inequality (\ref{eq:tr_lg_3}) holds. Recall that $R = \min_{\bar{p} \in [0,p]} [\alpha(\bar{p})(1-H_q(\frac{\bar{p}}{\alpha(\bar{p})}))] - \epsilon_R$ and, thus, $R$ is bounded above by $\alpha(\bar{p})(1-H_q(\frac{\bar{p}}{\alpha(\bar{p})})) - \epsilon_R$ for any $\bar{p} \in [0,p]$. Setting $\bar{p}$ such that $\alpha(\bar{p}) = \frac{t_{\mathrm{max}}-\lambda_{t_{\mathrm{max}}}}{n}$, it follows that $\frac{\bar{p}}{\alpha(\bar{p})} = \beta + \frac{np}{t_{\mathrm{max}}-\lambda_{t_{\mathrm{max}}}} - \frac{(q-1)(n-t)}{2q(t_{\mathrm{max}}-\lambda_{t_{\mathrm{max}}})}$ where $\beta = \frac{nr-\lambda_{t_{\mathrm{max}}}}{2(t_{\mathrm{max}}-\lambda_{t_{\mathrm{max}}})} + \frac{\lambda_{t_{\mathrm{max}}}}{2q(t_{\mathrm{max}}-\lambda_{t_{\mathrm{max}}})}$ and, in turn, for small enough $\epsilon_L>0$ any $\theta \leq \epsilon_L$,
  \begin{align}
  n(R+\epsilon_L) \leq (t_{\mathrm{max}} - \lambda_{t_{\mathrm{max}}})\left(1-H_q\left(\beta + \frac{np}{t_{\mathrm{max}}-\lambda_{t_{\mathrm{max}}}} - \frac{(q-1)(n-t_{\mathrm{max}})}{2q(t_{\mathrm{max}}-\lambda_{t_{\mathrm{max}}})} \right)\right) + n(\epsilon_L-\epsilon_R). \nonumber
  \end{align}
  Note that since $\beta \geq 0$, the quantity inside the function $H_q(\cdot)$ is bounded below by $\frac{np}{t_{\mathrm{max}}-\lambda_{t_{\mathrm{max}}}} - \frac{(q-1)(n-t_{\mathrm{max}})}{2q(t_{\mathrm{max}}-\lambda_{t_{\mathrm{max}}})}$. We remark that this lower bound is non-negative for small enough $\epsilon_L$ and $\theta \leq \epsilon_L$ since $t_{\mathrm{max}} \geq n(1-\epsilon_L -\theta)$, and thus, $n-t_{\mathrm{max}} \leq n(\epsilon_L+\theta)$. In turn, for small enough $\epsilon_L>0$ and any $\theta \in (0,\epsilon_L)$
  \begin{align}
  n(R+\epsilon_L) \leq (t_{\mathrm{max}} - \lambda_{t_{\mathrm{max}}})\left(1-H_q\left( \frac{np}{t_{\mathrm{max}}-\lambda_{t_{\mathrm{max}}}} - \frac{(q-1)(n-t_{\mathrm{max}})}{2q(t_{\mathrm{max}}-\lambda_{t_{\mathrm{max}}})} \right)\right) + n(\epsilon_L-\epsilon_R). \label{eq:tr_lg_4}
  \end{align}
  Define 
   \begin{equation} \nonumber
   \begin{aligned}
   \eta(\epsilon_L) =  H_q\left( \frac{np}{t_{\mathrm{max}} - \lambda_{t_{\mathrm{max}}}} \right) - H_q\left(\frac{np}{t_{\mathrm{max}}-\lambda_{t_{\mathrm{max}}}} -  \frac{(q-1)(n-t_{\mathrm{max}})}{2q(t_{\mathrm{max}}-\lambda_{t_{\mathrm{max}}})} \right)
   \end{aligned}
   \end{equation}
   and note that $\eta(\epsilon_L)$ tends to $0$ as $\epsilon_L$ tends to 0; this follows from the fact that $t_{\mathrm{max}} \geq n(1 - \epsilon_L-\theta)$ and, thus, $\frac{n-t_{\mathrm{max}}}{t_{\mathrm{max}} - \lambda_{t_{\mathrm{max}}}} \leq \frac{n(\epsilon_L+\theta)}{t_{\mathrm{min}} - \lambda_{t_{\mathrm{min}}}} < \frac{\epsilon_L+\theta}{R+\epsilon_L}$ tends to $0$ as $\epsilon_L$ and $\theta \leq \epsilon_L$ tend to $0$. Using this notation, the RHS of (\ref{eq:tr_lg_4}) can be written as
   \begin{equation}
   \begin{aligned}
   (t_{\mathrm{max}}-\lambda_{t_{\mathrm{max}}})\left(1-H_q\left(\frac{pn}{t_{\mathrm{max}}-\lambda_{t_{\mathrm{max}}}}\right)\right) + (t_{\mathrm{max}} - \lambda_{t_{\mathrm{max}}})\eta(\epsilon_L) + n(\epsilon_L-\epsilon_R)
   \end{aligned}
   \end{equation}
   where the term $(t_{\mathrm{max}} - \lambda_{t_{\mathrm{max}}})\eta(\epsilon_L) + n(\epsilon_L-\epsilon_R)$ is negative for small enough $\epsilon_L$ and any $\theta \in (0,\epsilon_L)$. In conclusion, for such values of $\epsilon_L, \theta$, the desired inequality (\ref{eq:tr_lg_3}) holds.
  \end{proof}

  We now show that the number of symbol erasures $\lambda_t$ and symbol errors $(t-\lambda_t)p_t$ in the received suffix is bounded for all time $t \in \mathcal{T}$ and $t \geq t_{\mathrm{min}}$ below some time threshold $t^*$. To do this, analyze the reference trajectory $\hat{p}_t$ under two mutually exclusive cases: when $p_t$ is a \textit{low-type trajectory} (i.e., $p_{t_{\mathrm{min}}} \leq \hat{p}_{t_{\mathrm{min}}}$ and when $p_t$ is a \textit{high-type trajectory} (i.e., $p_{t_{\mathrm{min}}} > \hat{p}_{t_{\mathrm{min}}}$).
  
  \begin{lemma}[Low-Type Trajectory] \label{thm:low_traj}
  Let $\epsilon_L>0$ be small enough and let $\theta \in (0,\epsilon_L)$. If $\hat{p}_{t_{\mathrm{min}}} \geq p_{t_{\mathrm{min}}}$, then
  \begin{equation} \label{eq:low_traj}
  nr - \lambda_{t_{\mathrm{min}}} + 2(np - (t_{\mathrm{min}}-\lambda_{t_{\mathrm{min}}})p_{t_{\mathrm{min}}}) \leq \frac{(q-1)(1-\epsilon_L)(n-t_{\mathrm{min}})}{q}.
  \end{equation}
  \end{lemma}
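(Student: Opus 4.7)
The plan is to verify the inequality by rearranging it and combining three ingredients: the minimality of $t_{\mathrm{min}}$ (giving an upper bound on $t_{\mathrm{min}} - \lambda_{t_{\mathrm{min}}}$), a specialization of the rate expression at $\bar{p} = 0$ (bounding $2np + nr$), and the trivial bound $p_{t_{\mathrm{min}}} \geq 0$. Notably, the hypothesis $\hat{p}_{t_{\mathrm{min}}} \geq p_{t_{\mathrm{min}}}$ appears to serve mainly to delineate the low-type regime for contrast with the high-type argument to be handled separately; the algebraic inequality itself will follow without needing the hypothesis beyond the trivial bound $p_{t_{\mathrm{min}}} \geq 0$.

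First I would establish $t_{\mathrm{min}} - \lambda_{t_{\mathrm{min}}} \leq n(R + \epsilon_L + \theta)$ using the minimality of $t_{\mathrm{min}}$ in $\mathcal{T}$: if $t_{\mathrm{min}} > \theta n$, then $t_{\mathrm{min}} - \theta n \in \mathcal{T}$, so $(t_{\mathrm{min}} - \theta n) - \lambda_{t_{\mathrm{min}} - \theta n} \leq n(R + \epsilon_L)$, and combining with $\lambda_{t_{\mathrm{min}} - \theta n} \leq \lambda_{t_{\mathrm{min}}}$ gives the claim (the boundary case $t_{\mathrm{min}} = \theta n$ is immediate). Next I would apply the rate constraint at $\bar{p} = 0$, namely $R + \epsilon_R \leq \alpha(0) = 1 - \frac{2q}{q-1}p - \frac{q}{q-1}r$, equivalently $2np + nr \leq \frac{q-1}{q} n(1 - R - \epsilon_R)$. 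Combined with $p_{t_{\mathrm{min}}} \geq 0$ (which renders the $-2(t_{\mathrm{min}} - \lambda_{t_{\mathrm{min}}}) p_{t_{\mathrm{min}}}$ term non-positive), this upper-bounds the LHS of the target inequality by $\frac{q-1}{q} n(1 - R - \epsilon_R) - \lambda_{t_{\mathrm{min}}}$, while the first step lower-bounds the RHS by $c[n(1 - R - \epsilon_L - \theta) - \lambda_{t_{\mathrm{min}}}]$ where $c = \frac{(q-1)(1-\epsilon_L)}{q}$.

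Finally, subtracting LHS from RHS and noting that the coefficient of $\lambda_{t_{\mathrm{min}}}$ in the difference is non-negative (since $1 - c = \frac{1 + (q-1)\epsilon_L}{q} > 0$), the required inequality reduces to $(1 - R - \epsilon_R) \leq (1-\epsilon_L)(1 - R - \epsilon_L - \theta)$, equivalently $\epsilon_R \geq \theta + \epsilon_L(2 - R - \epsilon_L - \theta)$. Since $\epsilon_R > 0$ is fixed in the definition of $R$ and $R < 1$, choosing $\epsilon_L$ first and then $\theta \in (0, \epsilon_L)$ small enough makes this condition hold, completing the proof. The main obstacle will be the bookkeeping needed to order the slack parameters so that $\epsilon_R$ absorbs the combined $O(\epsilon_L + \theta)$ error terms; no individual step is intricate, but correctly tracking the signs and preserving the $\lambda_{t_{\mathrm{min}}}$ dependence through both the LHS and RHS comparisons requires care.
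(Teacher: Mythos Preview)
Your proposal is correct and takes essentially the same approach as the paper: both use the minimality of $t_{\mathrm{min}}$ to bound $t_{\mathrm{min}}-\lambda_{t_{\mathrm{min}}}$ from above, specialize the rate definition at $\bar{p}=0$, drop the $p_{t_{\mathrm{min}}}$ term via $p_{t_{\mathrm{min}}}\ge 0$, and absorb the resulting $O(\epsilon_L+\theta)$ slack into $\epsilon_R$. The only cosmetic difference is that the paper first rearranges the target inequality into a sufficient condition on $t_{\mathrm{min}}-\lambda_{t_{\mathrm{min}}}$ (handling the $\lambda_{t_{\mathrm{min}}}$ coefficient via $\frac{q}{(q-1)(1-\epsilon_L)}\ge 1$, which is exactly your $1-c>0$), whereas you bound LHS and RHS separately; your observation that the low-type hypothesis is not actually used beyond $p_{t_{\mathrm{min}}}\ge 0$ is also borne out by the paper's proof.
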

  
  \begin{proof}[Proof of Lemma \ref{thm:low_traj}] Suppose that $\hat{p}_{t_{\mathrm{min}}} \geq p_{t_{\mathrm{min}}}$. To show that (\ref{eq:low_traj}) holds for small enough $\epsilon_L>0$ and any $\theta \in (0,\epsilon_L)$, it is sufficient to show that the following inequality holds:
  \begin{equation} \label{eq:low_traj_2}
  t_{\mathrm{min}} - \lambda_{t_{\mathrm{min}}} \leq n\left(1 - \frac{q}{(q-1)(1-\epsilon_L)}r - \frac{2q}{(q-1)(1-\epsilon_L)}p\right)
  \end{equation}
  which follows from (\ref{eq:low_traj}) and the trivial bounds $p_{t_{\mathrm{min}}} \geq 0$ and $\frac{q}{(q-1)(1-\epsilon_L)} \geq 1$. Note that the RHS of (\ref{eq:low_traj_2}) is positive for small enough $\epsilon_L>0$ following the bound $r+2p<\frac{q-1}{q}$, and in turn, let $\epsilon_L$ be sufficiently small for such a condition. 
  
  We proceed by showing that (\ref{eq:low_traj_2}) holds. Recall that the definition of $t_{\mathrm{min}}$ implies that $(t_{\mathrm{min}} - n\theta) - \lambda_{(t_{\mathrm{min}} - n\theta)} \leq n(R+\epsilon_L)$, and in turn, $t_{\mathrm{min}} - \lambda_{t_{\mathrm{min}}} \leq n(R+\epsilon_L+2\theta)$. In turn, following the definition of $R$, we have that 
  \begin{align}
  & t_{\mathrm{min}} - \lambda_{t_{\mathrm{min}}} \nonumber \\
  & \leq n(R+\epsilon_L+2\theta) \nonumber \\
  & = n\left(\min_{\bar{p}\in[0,p]} \left[\alpha(\bar{p})\left(1-H_q\left(\frac{\bar{p}}{\alpha(\bar{p})}\right)\right)\right] - \epsilon_R + \epsilon_L + 2\theta \right) \nonumber \\
  &\leq n\left(1-\frac{q}{q-1}r - \frac{2q}{q-1}p - \epsilon_R + \epsilon_L+2\theta \right) \label{eq:low_traj_3}
  \end{align}
  where (\ref{eq:low_traj_3}) follows by setting the minimization parameter $\bar{p}$ to $0$. Furthermore, for small enough $\epsilon_L>0$ and for any $\theta \in (0,\epsilon_L)$, (\ref{eq:low_traj_3}) is bounded above by the RHS of (\ref{eq:low_traj_2}). In conclusion, for these choices of $\epsilon_L,\theta$, the inequality (\ref{eq:low_traj_2}) holds.
  \end{proof}
  
  \begin{lemma}[High-Type Trajectory] \label{thm:high_traj}
  Let $\epsilon_L>0$ be small enough and let $\theta \in (0,\epsilon_L)$ be small enough. If $\hat{p}_{t_{\mathrm{min}}} < p_{t_{\mathrm{min}}}$, then for the smallest integer $t^{**} \in \mathcal{T}$ and $t^{**} \geq t_{\mathrm{min}}$ such that $\hat{p}_{t^{**}} \geq p_{t^{**}}$,
  \begin{equation} \label{thm:high_traj_ineq}
  \begin{aligned}
  nr - \lambda_{t^{**}} + 2(np - (t^{**}-\lambda_{t^{**}})p_{t^{**}}) \leq \frac{(q-1)(1-\epsilon_L)(1+\epsilon_{L})(n-t^{**})}{q}.
  \end{aligned}
  \end{equation}
  \end{lemma}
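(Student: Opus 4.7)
The plan is to apply the List-Refinement Condition at the previous feedback time $t' := t^{**} - \theta n$ and then transfer the bound to time $t^{**}$ via monotonicity of the accumulated error and erasure counts; the one-step slack $\theta n$ generated by this transfer will be absorbed into the multiplicative factor $(1+\epsilon_L)$ by choosing $\theta$ sufficiently small relative to $\epsilon_L$.

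First I would establish that $t^{**}$ exists and that $t' \in \mathcal{T}$ satisfies $t' \geq t_{\mathrm{min}}$. Existence of $t^{**}$ follows from Lemma \ref{thm:traj_large}: at $t_{\mathrm{max}}$, the bound $(t_{\mathrm{max}} - \lambda_{t_{\mathrm{max}}})\hat{p}_{t_{\mathrm{max}}} \geq np \geq (t_{\mathrm{max}} - \lambda_{t_{\mathrm{max}}})p_{t_{\mathrm{max}}}$ yields $\hat{p}_{t_{\mathrm{max}}} \geq p_{t_{\mathrm{max}}}$, so $t_{\mathrm{max}}$ is a valid candidate, giving $t^{**} \leq t_{\mathrm{max}} \leq n(1-\epsilon_L)$. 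The high-type hypothesis $p_{t_{\mathrm{min}}} > \hat{p}_{t_{\mathrm{min}}}$ rules out $t^{**} = t_{\mathrm{min}}$, so $t^{**} \geq t_{\mathrm{min}} + \theta n$, and in particular $t' \geq t_{\mathrm{min}}$.

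Next I would observe that $p_{t'} > \hat{p}_{t'}$: if $t' = t_{\mathrm{min}}$ this is the high-type hypothesis, and if $t' > t_{\mathrm{min}}$ it follows from the minimality of $t^{**}$ since $t_{\mathrm{min}} \leq t' < t^{**}$. Applying Lemma \ref{thm:LR_suff} at $t'$ and replacing $\hat{p}_{t'}$ by the larger quantity $p_{t'}$ on the left-hand side (which only strengthens the inequality) gives
\begin{equation} \nonumber
nr - \lambda_{t'} + 2\bigl(np - (t'-\lambda_{t'})p_{t'}\bigr) \leq \frac{(q-1)(1-\epsilon_L)(n-t')}{q}.
\end{equation}
Letting $E_t := (t-\lambda_t)p_t$ denote the number of symbol errors up to time $t$, both $\lambda_t$ and $E_t$ are non-decreasing in $t$, so the quantity $nr - \lambda_t + 2(np - E_t)$ is non-increasing in $t$. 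Evaluating at $t^{**}$ then yields
\begin{equation} \nonumber
nr - \lambda_{t^{**}} + 2\bigl(np - (t^{**}-\lambda_{t^{**}})p_{t^{**}}\bigr) \leq \frac{(q-1)(1-\epsilon_L)(n - t^{**} + \theta n)}{q}.
\end{equation}

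Finally I would choose $\theta \leq \epsilon_L^2$. Since $n - t^{**} \geq n - t_{\mathrm{max}} \geq n\epsilon_L$, this forces $\theta n \leq \epsilon_L(n-t^{**})$, whence $n - t^{**} + \theta n \leq (1+\epsilon_L)(n - t^{**})$, producing the desired bound (\ref{thm:high_traj_ineq}). The main obstacle is this final calibration of $\theta$ against $\epsilon_L$; the key quantitative input is $n - t^{**} \geq n\epsilon_L$, which leverages Lemma \ref{thm:traj_large}'s guarantee that $\hat{p}_{t_{\mathrm{max}}}$ already dominates $p_{t_{\mathrm{max}}}$ at a time $t_{\mathrm{max}} \leq n(1-\epsilon_L)$ bounded well away from $n$.
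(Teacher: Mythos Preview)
Your proof is correct and, in fact, takes a cleaner route than the paper's. The paper argues at the point $t^{**}$ itself: it establishes $\hat{p}_{t^{**}} \approx p_{t^{**}}$ by chaining a uniform-continuity estimate for $H_q^{-1}$ (to get $\hat{p}_{t^{**}} - \delta_1 < \hat{p}_{t^{**}-\theta n}$) with the inequalities $\hat{p}_{t^{**}-\theta n} < p_{t^{**}-\theta n}$ and $p_{t^{**}-\theta n} < p_{t^{**}} + \frac{\theta}{R+\epsilon_L-\theta}$, and then plugs this approximation into the List-Refinement Condition at $t^{**}$, absorbing the error terms $\delta_1$ and $\frac{\theta}{R+\epsilon_L-\theta}$ into the $(1+\epsilon_L)$ factor via $n - t^{**} \geq \epsilon_L n$. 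You instead apply the List-Refinement Condition directly at $t' = t^{**}-\theta n$, where $p_{t'} > \hat{p}_{t'}$ holds exactly, and then transfer to $t^{**}$ by monotonicity of the cumulative error and erasure counts; the only slack is the $\theta n$ shift in the $(n-t)$ term, which you absorb using the same estimate $n - t^{**} \geq \epsilon_L n$. Your argument bypasses the uniform-continuity machinery entirely and needs only the concrete calibration $\theta \leq \epsilon_L^2$, so it is more elementary; the paper's route, on the other hand, makes explicit the heuristic that $p_{t^{**}} \approx \hat{p}_{t^{**}}$, which is conceptually useful elsewhere in the overview.
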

  
  \begin{proof}[Proof of Lemma \ref{thm:high_traj}] Suppose that $\hat{p}_{t_{\mathrm{min}}}<p_{t_{\mathrm{min}}}$. We first note that $t^{**}$ exists for small enough $\epsilon_L>0$ and any $\theta \in (0,\epsilon_L)$ following Lemma \ref{thm:traj_large}. In our proof of Lemma \ref{thm:high_traj}, we take $\epsilon_L>0$ to be small enough such that Lemma \ref{thm:traj_large} holds, and we take $t^{**} \in \mathcal{T}$ to be the smallest integer greater than $t_{\mathrm{min}}$ such that $\hat{p}_{t^{**}} \geq p_{t^{**}}$.
  
  We first show that $\hat{p}_{t} \approx \hat{p}_{t-\theta n}$ for all $t \in \mathcal{T}$ and $t>t_{\mathrm{min}}$. Let $t \in \mathcal{T}$ such that $t > t_{\mathrm{min}}$ and let $t' = t-\theta n$. Recall that by the definition of $\hat{p}_t$,
  $\hat{p}_t = H_q^{-1}\left(1 - \frac{n(R+\epsilon_L)}{t-\lambda_t} \right)$
  where $H_q^{-1}$ is the \textit{inverse q-ary entropy function}, i.e., for $y \in [0,1]$, $H_q^{-1}(y) = x$ where $H_q(x) = y$ for $x \in [0,\frac{q-1}{q}]$. We remark that $H_q^{-1}$ is continuous on the compact space $[0,1]$ and, thus, $H_q^{-1}$ is \textit{uniformly continuous} on $[0,1]$, i.e., for any $\delta_1>0$, $\exists \delta_2>0$ such that $|H_q^{-1}(y_1) - H_q^{-1}(y_2)| < \delta_1$ for any $y_1,y_2 \in [0,1]$ such that $|y_1-y_2|<\delta_2$. Thus, for any $\delta_1>0$, $\exists \delta_2 >0$ such that
  \begin{align}
  |\hat{p}_t - \hat{p}_{t'}| \triangleq \left|H_q^{-1}\left( 1- \frac{n(R+\epsilon_L)}{t-\lambda_t} \right) -H_q^{-1}\left( 1- \frac{n(R+\epsilon_L)}{t'-\lambda_{t'}} \right) \right| < \delta_1 \nonumber
  \end{align}
  if 
  \begin{equation} \label{eq:un_cont}
  \left|\frac{n(R+\epsilon_L)}{t-\lambda_t} - \frac{n(R+\epsilon_L)}{t'-\lambda_{t'}}\right| < \delta_2.
  \end{equation}
  By a rearrangement of terms, the LHS of (\ref{eq:un_cont}) is equal to $n(R+\epsilon_L)|\frac{t-\lambda_t - t'+\lambda_{t'}}{(t-\lambda_t)(t'-\lambda_{t'})}|$ which, in turn, is bounded above by $\frac{\theta}{R+\epsilon_L}$ following the fact that $|t - \lambda_t - t' + \lambda_{t'}| \leq \theta n$ and $t-\lambda_t \geq t' - \lambda_{t'} \geq t_{\mathrm{min}} -\lambda_{t_{\mathrm{min}}} > n(R+\epsilon_L)$. Hence, for however small $\delta_2$ is, we can always choose $\theta$ small enough such that inequality (\ref{eq:un_cont}) holds for all $t \in \mathcal{T}$ and $t>t_{\mathrm{min}}$. Hence, for a value $\delta_1>0$ which we will soon set, we let $\theta \in (0,\epsilon_L)$ be small enough such that $\hat{p}_t - \delta_1 < \hat{p}_{t- \theta n}$ for all $t \in \mathcal{T}$ and $t > t_{\mathrm{min}}$. We will refer to the inequality $\hat{p}_t - \delta_1 < \hat{p}_{t-\theta n}$ for all $t \in \mathcal{T}$ and $t \geq t_{\mathrm{min}}$ as Inequality (a).
  
  We are now ready to prove Lemma \ref{thm:high_traj}. By the above choice of $t^{**}$, we have that $\hat{p}_{t^{**}-\theta n} < p_{t^{**}-\theta n}$. We will refer to this inequality as Inequality (b). Furthermore, since the number of symbol errors is increasing over time, we have that $(t^{**}-\lambda_{t^{**}})p_{t^{**}} \geq (t^{**}-\theta n - \lambda_{t^{**}-\theta n})p_{t^{**} - \theta n}$ and, in turn, since $\lambda_{t^{**}-\theta n} \leq \lambda_{t^{**}}$, we have that $p_{t^{**} - \theta n} \leq (1+ \frac{\theta n}{t^{**}-\lambda_{t^{**}}-\theta n})p_{t^{**}}$. Given that $p_{t^{**}} < 1$ and the $t_{\mathrm{min}}$ definitional inequalities $t^{**}-\lambda_{t^{**}} > t_{\mathrm{min}} - \lambda_{t_{\mathrm{min}}} > n(R+\epsilon_L)$, it follows that $p_{t^{**}-\theta n} < p_{t^{**}} + \frac{\theta}{R+\epsilon_L-\theta}$ which we refer to as Inequality (c). By putting together Inequality (a) through (c), we have
  \begin{equation} \nonumber
  \hat{p}_{t^{**}} - \delta_1 \stackrel{(a)}{<} \hat{p}_{t^{**}-\theta n} \stackrel{(b)}{<} p_{t^{**}-\theta n} \stackrel{(c)}{\leq} p_{t^{**}}+\frac{\theta}{R+\epsilon_L-\theta} \nonumber,
  \end{equation}
  and, thus, $\hat{p}_{t^{**}} < p_{t^{**}} + \delta_1 + \frac{\theta}{R+\epsilon_L-\theta}$. In turn, we have the following inequalities which resemble the desired inequality (\ref{thm:high_traj_ineq}): for small enough $\epsilon_L>0$,
  \begin{align}
  & nr-\lambda_{t^{**}} + 2\left(np-(t^{**}-\lambda_{t^{**}})\left(p_{t^{**}} + \delta_1 + \frac{\theta}{R+\epsilon_L-\theta}\right)\right) \nonumber \\
  & \leq nr - \lambda_{t^{**}} + 2(np - (t^{**}-\lambda_{t^{**}})\hat{p}_{t^{**}}) \nonumber \\
  & \leq \frac{(q-1)(1-\epsilon_L)(n-t^{**})}{q} \label{eq:high_traj_lr}
  \end{align}
  where (\ref{eq:high_traj_lr}) follows for small enough $\epsilon_L>0$ from the claim that $\hat{p}_{t^{**}}$ satisfies the List-Refinement Condition (c.f. Lemma \ref{thm:LR_suff}). Thus, for small enough $\epsilon_L>0$ such that (\ref{eq:high_traj_lr}) holds, and for small enough $\delta_1 > 0$ and small enough $\theta \in (0,\epsilon_L)$ such that the inequality 
  \begin{align}
  2(t^{**}-\lambda_{t^{**}}) \left(\delta_1 + \frac{\theta}{R+\epsilon_L-\theta} \right) \leq \frac{(q-1)(1-\epsilon_L)\epsilon_L(n-t^{**})}{q} \nonumber
  \end{align}
  holds (where we use the fact that $n-t^{**} \geq n-t_{\mathrm{max}} \geq \epsilon_L n$), we have the desired inequality
  \begin{align}
  nr - \lambda_{t^{**}} + 2(np - (t^{**}-\lambda_{t^{**}})p_{t^{**}}) \leq \frac{(q-1)(1-\epsilon_L)(1+\epsilon_L)(n-t^{**})}{q}. \nonumber
  \end{align}
  \end{proof}
  
  \begin{lemma} \label{thm:traj_lr}
  Let $\epsilon_L>0$ be small enough. For any $t \in \mathcal{T}$ such that $t \geq t_{\mathrm{min}}$, if $\hat{p}_t < p_t$ then
  \begin{equation} \nonumber
  nr-\lambda_t + 2(np - (t-\lambda_t)p_t) \leq \frac{(q-1)(1-\epsilon_L)(n-t)}{q}.
  \end{equation}
  \end{lemma}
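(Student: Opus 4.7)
The plan is to derive Lemma \ref{thm:traj_lr} as an immediate monotonicity corollary of Lemma \ref{thm:LR_suff}. The List-Refinement Condition established in Lemma \ref{thm:LR_suff} already asserts the desired inequality with the reference trajectory $\hat{p}_t$ substituted in place of the adversary's true trajectory $p_t$. The only observation needed is that the left-hand side of the inequality in question is monotonically decreasing in the second factor of the term $(t-\lambda_t)p_t$, so replacing $\hat{p}_t$ with any larger quantity $p_t$ can only strengthen the bound.

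Concretely, I would proceed in two short steps. First, I would take $\epsilon_L > 0$ small enough so that Lemma \ref{thm:LR_suff} applies, yielding
\begin{equation} \nonumber
nr - \lambda_t + 2\bigl(np - (t-\lambda_t)\hat{p}_t\bigr) \leq \frac{(q-1)(1-\epsilon_L)(n-t)}{q}
\end{equation}
for every $t \in \mathcal{T}$ with $t \geq t_{\mathrm{min}}$. Second, I would invoke the hypothesis $\hat{p}_t < p_t$ together with the fact that $t-\lambda_t \geq t_{\mathrm{min}} - \lambda_{t_{\mathrm{min}}} > n(R+\epsilon_L) > 0$ (from the definition of $t_{\mathrm{min}}$) to conclude $(t-\lambda_t)\hat{p}_t < (t-\lambda_t)p_t$, and therefore
\begin{equation} \nonumber
nr - \lambda_t + 2\bigl(np - (t-\lambda_t)p_t\bigr) < nr - \lambda_t + 2\bigl(np - (t-\lambda_t)\hat{p}_t\bigr).
\end{equation}
Chaining the two displays gives the claim.

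There is essentially no obstacle in this proof: once Lemma \ref{thm:LR_suff} is in hand, the present lemma is a one-line consequence of linearity of the left-hand side in the trajectory value. The only care needed is to choose $\epsilon_L$ small enough that Lemma \ref{thm:LR_suff} applies uniformly over all relevant $(t,\lambda_t)$, which is exactly the smallness condition inherited by the statement of Lemma \ref{thm:traj_lr}.
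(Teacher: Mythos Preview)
Your proposal is correct and matches the paper's approach exactly: the paper states that Lemma~\ref{thm:traj_lr} ``immediately follows from Lemma~\ref{thm:LR_suff},'' and your write-up simply makes explicit the monotonicity step that justifies this one-line deduction.
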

  
  Lemma \ref{thm:traj_lr} immediately follows from Lemma \ref{thm:LR_suff}, i.e., the claim that $\hat{p}_t$ satisfies the List-Refinement Condition (\ref{eq:LR}) for small enough $\epsilon_L>0$. The next lemma ties together each of the above reference trajectory lemmas.
  
  \begin{lemma}[Reference Trajectory Summary] \label{thm:rt_summary}
  Let $\epsilon_L>0$ be small enough and let $\theta \in (0,\epsilon_L)$ be small enough. Let $t^{**} \in \mathcal{T}$ be the smallest integer $t^{**} \geq t_{\mathrm{min}}$ such that $\hat{p}_{t^{**}} \geq p_{t^{**}}$. Then $t^{**}$ exists and for every $t \in \mathcal{T}$ such that $t \in [t_{\mathrm{min}},t^{**}]$,
  \begin{equation} \label{eq:rt_summary}
  \begin{aligned}
  nr - \lambda_{t} + 2(np - (t-\lambda_{t})p_{t})  \leq \frac{(q-1)(1-\epsilon_L)(1+\epsilon_{L})(n-t)}{q}.
  \end{aligned}
  \end{equation}
  \end{lemma}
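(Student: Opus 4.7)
The plan is to prove Lemma \ref{thm:rt_summary} as a straightforward case analysis that stitches together the three preceding trajectory lemmas (Lemma \ref{thm:low_traj}, Lemma \ref{thm:high_traj}, Lemma \ref{thm:traj_lr}) and uses Lemma \ref{thm:traj_large} to secure existence of $t^{**}$. I first fix $\epsilon_L > 0$ small enough and $\theta \in (0, \epsilon_L)$ small enough so that all four of the cited lemmas apply simultaneously (each requires ``$\epsilon_L$ small enough''; taking the minimum of the finitely many thresholds suffices).

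For existence of $t^{**}$, I would invoke Lemma \ref{thm:traj_large} to conclude that $(t_{\mathrm{max}} - \lambda_{t_{\mathrm{max}}})\hat{p}_{t_{\mathrm{max}}} \geq np$, so in particular $\hat{p}_{t_{\mathrm{max}}} \geq \frac{np}{t_{\mathrm{max}}-\lambda_{t_{\mathrm{max}}}} \geq p_{t_{\mathrm{max}}}$ (since the total number of symbol errors $np$ upper-bounds $(t_{\mathrm{max}}-\lambda_{t_{\mathrm{max}}})p_{t_{\mathrm{max}}}$). Hence the set $\{t \in \mathcal{T} : t \geq t_{\mathrm{min}}, \hat{p}_t \geq p_t\}$ is nonempty, and $t^{**}$ is well-defined as its minimum.

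For the inequality (\ref{eq:rt_summary}), I would split into the low-type and high-type cases. In the low-type case, $\hat{p}_{t_{\mathrm{min}}} \geq p_{t_{\mathrm{min}}}$, so by minimality $t^{**} = t_{\mathrm{min}}$ and the interval $[t_{\mathrm{min}}, t^{**}] \cap \mathcal{T} = \{t_{\mathrm{min}}\}$. The inequality at that single point is exactly Lemma \ref{thm:low_traj}, and the RHS $\frac{(q-1)(1-\epsilon_L)(n - t_{\mathrm{min}})}{q}$ is no larger than $\frac{(q-1)(1-\epsilon_L)(1+\epsilon_L)(n-t_{\mathrm{min}})}{q}$ since $\epsilon_L > 0$, so it implies (\ref{eq:rt_summary}). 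In the high-type case, $\hat{p}_{t_{\mathrm{min}}} < p_{t_{\mathrm{min}}}$, so by definition of $t^{**}$ we have $\hat{p}_t < p_t$ for every $t \in \mathcal{T}$ with $t_{\mathrm{min}} \leq t < t^{**}$. For each such $t$, Lemma \ref{thm:traj_lr} yields (\ref{eq:rt_summary}) (again with the slightly weaker RHS factor $1$ in place of $1+\epsilon_L$, which is sufficient). Finally, at $t = t^{**}$, Lemma \ref{thm:high_traj} gives exactly the claimed inequality.

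There is no substantive obstacle: the only mildly delicate step is verifying that one common choice of $\epsilon_L$ and $\theta$ works for all the cited lemmas, which is immediate since each lemma's threshold depends only on the fixed channel parameters $p, r, q$ and the rate backoff $\epsilon_R$, and we can take the minimum of the finitely many thresholds. All bounds on $t^{**}$'s existence follow from the slack built into $t_{\mathrm{max}}$ via Lemma \ref{thm:traj_large}.
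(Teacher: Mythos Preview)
Your proposal is correct and follows essentially the same route as the paper's proof: existence of $t^{**}$ via Lemma \ref{thm:traj_large}, then the low-type case handled by Lemma \ref{thm:low_traj} (with $t^{**}=t_{\mathrm{min}}$) and the high-type case handled by Lemma \ref{thm:traj_lr} for $t<t^{**}$ and Lemma \ref{thm:high_traj} at $t=t^{**}$. Your version is in fact slightly more explicit about why $\hat{p}_{t_{\mathrm{max}}}\geq p_{t_{\mathrm{max}}}$ and why the factor $(1+\epsilon_L)$ only loosens the bound, but the structure is identical.
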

  
  \begin{proof}[Proof of Lemma \ref{thm:rt_summary}]
  The existence of $t^{**}$ for small enough $\epsilon_L > 0$ and any $\theta \in (0,\epsilon_L)$ follows directly from Lemma \ref{thm:traj_large}. We show that inequality (\ref{eq:rt_summary}) holds for every $t \in \mathcal{T}$ such that $t \in [t_{\mathrm{min}},t^{**}]$ by considering two cases. 
  
  (Case 1): Suppose that the trajectory $p_t$ is a low-type trajectory, i.e., $\hat{p}_{t_{\mathrm{min}}} \geq p_{t_{\mathrm{min}}}$. Then $t^{**} = t_{\mathrm{min}}$ and for $t = t^{**} = t_{\mathrm{min}}$, the inequality (\ref{eq:rt_summary}) holds  for small enough $\epsilon_L > 0$ and any $\theta \in (0,\epsilon_L)$ following Lemma \ref{thm:low_traj}. This completes Case 1. 
  
  (Case 2): Suppose instead that the trajectory $p_t$ is a high-type trajectory, i.e., $\hat{p}_{t_{\mathrm{min}}}< p_{t_{\mathrm{min}}}$. Then the definition of $t^{**}$ implies that $\hat{p}_t < p_t$ for $t \in [t_{\mathrm{min}},t^{**}]$, and in turn, (\ref{eq:rt_summary}) holds for small enough $\epsilon_L>0$ following Lemma \ref{thm:traj_lr}. For $t = t^{**}$, (\ref{eq:rt_summary}) holds for small enough $\epsilon_L>0$ and small enough $\theta \in (0,\epsilon_L)$ following Lemma \ref{thm:high_traj}.
  \end{proof}
  
  \subsection{Combining Intermediate Results} \label{sec:combine}
  
  We now prove the lower bound of Theorem \ref{thm:cap_bf} by combining the above intermediate results. In the sequel, we adopt the following setup:
  \begin{enumerate}
  \item Recall that the parameters $p \in [0,\frac{q-1}{2q})$ and $r \in [0,\frac{q-1}{q})$ are fixed such that $2p + r \leq \frac{q-1}{q}$. For the rate backoff parameter $\epsilon_R > 0$, the rate is set such that
  \begin{equation} \nonumber
  R = \min_{\bar{p}\in[0,p]} \left[ \alpha(\bar{p})\left(1-H_q\left(\frac{\bar{p}}{\alpha(\bar{p})} \right)\right) \right] - \epsilon_R.
  \end{equation}
  \item Choose the ``list-decoding'' parameter $\epsilon_L>0$ small enough to satisfy the hypotheses of Lemma \ref{thm:LR_suff} and Lemma \ref{thm:rt_summary}.
  \item Set the ``feedback resolution'' parameter $\theta \in (0,\epsilon_L)$ to be small enough to satisfy the hypothesis of Lemma \ref{thm:rt_summary}. Define $T \triangleq \frac{1}{\theta}-1$.
  \item Set list-size $L$ to be the $O(1/\epsilon_L)$ parameter identified in Lemma \ref{thm:LD} and Lemma \ref{thm:LD2}.
  \item Set the size of the feedback set to be any constant such that
  \begin{equation} \nonumber
  |\mathcal{Z}| > \frac{RL}{\theta^2\left(1-H_q\left(\frac{(q-1)(1-\epsilon_L)(1+\epsilon_L)}{q} \right)\right)}
  \end{equation}
  where we remark that, in turn, the number of feedback bits $B$ is a constant such that $\lfloor B/T \rfloor = \lfloor \frac{\theta B}{1-\theta} \rfloor \triangleq \log_2 |\mathcal{Z}|$. Recall that the above bound on $|\mathcal{Z}|$ is chosen to satisfy the hypothesis of Lemma \ref{thm:cb_fb}.
  \item For every blocklength $n$ large enough such the following encoding function exists, let the encoding function $\mathcal{C}_1 \circ \cdots \circ \mathcal{C}_{T+1}$ be any (deterministic) function with the following 2 properties:
  \begin{enumerate}
  \item \textit{List-decodable Code}: For every $t = k \theta n \in \mathcal{T}$ and every $\lambda_t \in [0,rn]$ such that $t \geq t_{\mathrm{min}}$, the encoding function prefix $\mathcal{C}_1 \circ \cdots \mathcal{C}_k$ is list-decodable for up to $(t-\lambda_t)\hat{p}_t$ symbol errors and $\lambda_t$ symbol erasures with list size $L$.
  \item \textit{Feedback Distance Condition}: For every $k \in [T]$, there exists a feedback symbol $z_{k} \in \mathcal{Z}$ such that for every unique message pair $m',m'' \in \mathcal{L}^{\mathrm{super}}_{k}$,
  \begin{equation} \label{eq:fb_distance2}
  \begin{aligned}
  d_H\left(\mathcal{C}_{k+1}(m';z_{k}),\mathcal{C}_{k+1}(m'';z_{k})\right)  > \theta n \frac{(q-1)(1-\epsilon_L)(1+\epsilon_L)}{q}. 
  \end{aligned}
  \end{equation}
  \end{enumerate}
  We remark that following Lemma \ref{thm:LD2} and Lemma \ref{thm:cb_fb}, an encoding function picked at random has both of the above 2 properties with probability at least $1 - q^{-\Omega(n)} - q^{-\Omega(n)} > 0$. In turn, this implies the existence of an encoding function with the above 2 properties.
  \end{enumerate}

  With the above setup, we prove the lower bound of Theorem \ref{thm:cap_bf} by showing that for any strategy chosen by the adversary, Bob does not declare a decoding error (i.e., the decoding process does not halt prior outputting $\hat{m}$) and Bob's decoder outputs Alice's message $m$.
  
  Recall that Bob declares a decoding error if at least one of the following two error events occur: \textbf{a)} the decoding point $t^*$ does not exist or \textbf{b)} there exists two or more $m' \in \mathcal{L}_{k^*}$ that satisfy the Decoding Distance Condition (\ref{eq:dec_cond}).\footnote{Recall that Bob may also declare a decoding error if for some $k \in [T]$, there does not exist a feedback symbol $z_{k} \in \mathcal{Z}$ such that every unique pair $m',m'' \in \mathcal{L}_k^{\mathrm{super}}$ satisfies the Feedback Distance Condition (\ref{eq:fb_distance2}). However, we have already ruled out this error event due to our choice of encoding function.} The following two lemmas state that these decoding error events will not occur for any adversarial strategy. 
  
  Recall that $t^{**} = k^{**} \theta n \in \mathcal{T}$ is the smallest integer $t^{**} \geq t_{0}$ such that $\hat{p}_{t^{**}} \geq p_{t^{**}}$. Furthermore, recall that $t^{**}$ exists (c.f. Lemma \ref{thm:rt_summary}).
  \begin{lemma} \label{thm:dec_exist}
  The decoding point $t^{*}$ is equal to $t^{**}$. Thus, $t^*$ exists.
  \end{lemma}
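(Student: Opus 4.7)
The plan is to establish $t^* = t^{**}$ by proving the two inequalities $t^* \leq t^{**}$ and $t^* \geq t^{**}$ separately. The first direction is obtained by exhibiting Alice's true message $m$ as a witness of the Decoding Distance Condition at time $t^{**}$. The second direction proceeds by ruling out any earlier witness via a triangle-inequality argument that couples the Feedback Distance Condition with Lemma~\ref{thm:rt_summary}.

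For $t^* \leq t^{**}$, I first argue that $m \in \mathcal{L}_{k^{**}}$. By definition of $t^{**}$ we have $\hat{p}_{t^{**}} \geq p_{t^{**}}$, so the Hamming distance between $m$'s codeword prefix and the received prefix -- which equals the number of symbol errors $(t^{**}-\lambda_{t^{**}})p_{t^{**}}$ in $\bm{y}_{\mathrm{prefix},k^{**}}$ -- is at most the list-decoding radius $(t^{**}-\lambda_{t^{**}})\hat{p}_{t^{**}}$; hence $m$ enters the list $\mathcal{L}_{k^{**}}$. Next, since the erasure-aware Hamming distance counts only non-erased disagreements, the left-hand side of the Decoding Distance Condition~(\ref{eq:dec_cond}) evaluated at $m$ and $t^{**}$ is at most the number of channel errors in the suffix, namely $np - (t^{**}-\lambda_{t^{**}})p_{t^{**}}$. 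Substituting this upper bound into~(\ref{eq:dec_cond}) and clearing the factor of $2$, the condition rearranges to $2(np - (t^{**}-\lambda_{t^{**}})p_{t^{**}}) + (nr - \lambda_{t^{**}}) \leq \frac{(q-1)(1-\epsilon_L)(1+\epsilon_L)(n-t^{**})}{q}$, which is precisely inequality~(\ref{eq:rt_summary}) at $t = t^{**}$ and so holds by Lemma~\ref{thm:rt_summary}. Therefore $m$ witnesses the Decoding Distance Condition at $t^{**}$, giving $t^* \leq t^{**}$ and in particular the existence of $t^*$.

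For $t^* \geq t^{**}$, fix any $t = k\theta n \in \mathcal{T}$ with $t_{\mathrm{min}} \leq t < t^{**}$ and suppose for contradiction that some $m' \in \mathcal{L}_k$ satisfies~(\ref{eq:dec_cond}). Since $t < t^{**}$ forces $\hat{p}_t < p_t$, $m$'s codeword prefix lies outside the list-decoding radius, so $m \notin \mathcal{L}_k$ and in particular $m' \neq m$. Because $m' \in \mathcal{L}_k \subseteq \mathcal{L}^{\mathrm{super}}_j$ for every $j \geq k$ and $m \in \mathcal{L}_{k^{**}} \subseteq \mathcal{L}^{\mathrm{super}}_j$ for every $j \geq k^{**}$, both messages belong to $\mathcal{L}^{\mathrm{super}}_j$ for $j = k^{**},\dots,T$, and summing the Feedback Distance Condition~(\ref{eq:fb_distance}) across those $T+1-k^{**}$ sub-codewords produces the strict lower bound $(n-t^{**})\frac{(q-1)(1-\epsilon_L)(1+\epsilon_L)}{q}$ on the Hamming distance between the sub-codeword concatenations of $m$ and $m'$ from $t^{**}+1$ through $n$. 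On the other side, the triangle inequality (applied to the erasure-aware distance) combined with the assumed Decoding Distance bound for $m'$, the identity $d_H(\mathcal{C}_{k+1}(m;z_k)\circ\cdots\circ\mathcal{C}_{T+1}(m;z_T),\bm{y}_{\mathrm{suffix},k}) = np - (t-\lambda_t)p_t$, and Lemma~\ref{thm:rt_summary} at $t$, yields a matching upper bound on the same inter-codeword distance, from which the contradiction is extracted.

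The hard part will be the second direction: the triangle-inequality bookkeeping must carefully absorb up to $nr - \lambda_t$ erased positions in the suffix, and the cumulative Feedback Distance Condition -- which is only available over the $T+1-k^{**}$ sub-codewords on which both $m$ and $m'$ have entered $\mathcal{L}^{\mathrm{super}}$ -- must be tight enough to conflict with the distance budget that Lemma~\ref{thm:rt_summary} affords at $t$. Balancing the three Hamming distances (received-to-$m$, received-to-$m'$, and $m$-to-$m'$) against the erasure slack is the delicate step; the slack parameters $\epsilon_L$ and $\theta$ chosen in the overall setup should provide the margin needed to close the gap.
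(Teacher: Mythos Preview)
Your first direction ($t^* \leq t^{**}$) is correct and matches the paper's argument line for line.

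In the second direction you diverge from the paper, and the divergence leaves a genuine gap. Summing the Feedback Distance Condition only over $j = k^{**},\dots,T$ yields the lower bound $(n-t^{**})\tfrac{(q-1)(1-\epsilon_L)(1+\epsilon_L)}{q}$ on the inter-codeword suffix distance, while the triangle-inequality upper bound you assemble at the earlier time $t$ (combining the assumed Decoding Distance Condition for $m'$, the error count for $m$, the erasure slack, and Lemma~\ref{thm:rt_summary}) works out to $(n-t)\tfrac{(q-1)(1-\epsilon_L)(1+\epsilon_L)}{q}$. Since $t<t^{**}$, these two bounds are \emph{consistent}, not contradictory: the shortfall is $(t^{**}-t)\tfrac{(q-1)(1-\epsilon_L)(1+\epsilon_L)}{q}$, and $t^{**}-t$ can be as large as $t^{**}-t_{\mathrm{min}}=\Theta(n)$. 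This is not absorbed by $\epsilon_L$ or $\theta$, so your stated hope that ``the slack parameters \ldots\ should provide the margin needed to close the gap'' does not pan out.

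The paper instead applies the Feedback Distance Condition~(\ref{eq:fb_distance2}) to the pair $(m,m')$ over \emph{every} sub-codeword $k^*+1,\dots,T+1$, obtaining $D > (n-t^*)\tfrac{(q-1)(1-\epsilon_L)(1+\epsilon_L)}{q}$; together with Lemma~\ref{thm:rt_summary} at $t^*$ this immediately contradicts~(\ref{eq:dec_cond}). Your observation that $m\notin\mathcal{L}^{\mathrm{super}}_i$ for $i<k^{**}$ is correct and the paper does not explicitly address it, but that full-range use of the distance condition---not the restricted sum from $k^{**}$ onward---is what the paper's proof invokes to close.
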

  
  \begin{proof}[Proof of Lemma \ref{thm:dec_exist}]
  Recall that $t^* = k^* \theta n \in \mathcal{T}$ is the smallest integer $t^* \geq t_{\mathrm{min}}$ such that there exists a message $m' \in \mathcal{L}_{k^*}$ that satisfies the \textit{Decoding Distance Condition} (c.f. (\ref{eq:dec_cond})): 
  \begin{equation} \label{eq:dec_exist_22}
  \begin{aligned}
  d_H\left(C_{k^*+1}(m';z_{k^*}) \circ \cdots C_{T+1}(m';z_{T}),\bm{y}_{\mathrm{suffix},k^*}  \right) \leq \frac{(q-1)(1-\epsilon_L)(1+\epsilon_{L})(n-t^*)}{2q} - \frac{nr - \lambda_{t^*}}{2}.
  \end{aligned}
  \end{equation}

  We first show that $t^* \leq t^{**}$. From the definition of $t^{**}$, Bob's list $\mathcal{L}_{k^{**}}$ must contain Alice's message $m$ since $\mathcal{L}_{k^{**}}$ is formed using a list-decoding radius $(t^{**} - \lambda_{t^{**}})\hat{p}_{t^{**}}$ which is greater than the number of symbol errors $(t^{**} - \lambda_{t^{**}})p_{t^{**}}$. Furthermore, by decomposing Bob's received sequence suffix $\bm{y}_{\mathrm{suffix},k^{**}}$ into Alice's codeword suffix $\bm{x}_{\mathrm{suffix},k^{**}}$ and the error-erasure suffix $\bm{e}_{\mathrm{suffix},k^{**}}$ of $\lambda_{t^{**}}$ symbol erasures and $(t^{**} - \lambda_{t^{**}}) p_{t^{**}}$ symbol errors, we have that
  \begin{align}
  &d_H\left(C_{k^{**}+1}(m;z_{k^{**}}) \circ \cdots C_{T+1}(m;z_{T}),\bm{y}_{\mathrm{suffix},k^{**}}  \right) \nonumber \\
  & = d_H(0,\bm{e}_{\mathrm{suffix},k^{**}}) \nonumber \\
  & \stackrel{(a)}{\leq} np - (t^{**} - \lambda_{t^{**}})p_{t^{**}} \nonumber \\
  & \stackrel{(b)}{\leq} \frac{(q-1)(1-\epsilon_L)(1+\epsilon_{L})(n-t^{**})}{2q} - \frac{nr - \lambda_{t^{**}}}{2} \label{eq:dec_exist_3} 
  \end{align}
  where (a) follows from the assumption that the adversary can induce at most $pn$ symbol errors, and (b) follows from Lemma \ref{thm:rt_summary}. In conclusion, (\ref{eq:dec_exist_22}) holds for $t^* = t^{**}$, and thus, $t^{**}$ is an upper bound of $t^{*}$.

  We now show that $t^* \geq t^{**}$. Suppose by contradiction that $t^* < t^{**}$. Then $\hat{p}_{t^*} < p_{t^*}$, and in turn, Alice's message $m$ is not in $\mathcal{L}_{k^*}$. In turn, by the definition of $t^*$, there exists a message $m'$ in $\mathcal{L}_{k^*}$ which is not equal to $m$ and which satisfies the Decoding Distance Condition (\ref{eq:dec_exist_22}). Using a trivial lower bound on Hamming distance, we note that the LHS of (\ref{eq:dec_exist_22}) is bounded such that
  \begin{align}
  d_H\left(\mathcal{C}_{k^*+1}(m';z_{k^*}) \circ \cdots \circ \mathcal{C}_{T+1}(m';z_{T}), \bm{y}_{\mathrm{suffix},k^*} \right)  \stackrel{(c)}{\geq} D - E \nonumber
  \end{align}
  where 
  \begin{align}
  D &\triangleq d_H\bigg(\mathcal{C}_{k^*+1}(m';z_{k^*}) \circ \cdots \circ \mathcal{C}_{T+1}(m';z_{T}), \mathcal{C}_{k^*+1}(m;z_{k^*}) \circ \cdots \circ \mathcal{C}_{T+1}(m;z_{T}) \bigg) \nonumber \\
  &= \sum_{i=k^*}^{T} d_H\left( \mathcal{C}_{i+1}(m';z_{k^*}), \mathcal{C}_{i+1}(m;z_{k^*})\right) \nonumber
  \end{align}
  is the Hamming distance between the codeword suffixes corresponding to messages $m'$ and $m$, and where $E$ is the total number of symbol erasures plus the total number of symbol errors inject by the adversary in the received suffix $\bm{y}_{\mathrm{suffix},k^*}$. Following the assumption that the adversary can inject up to $pn$ symbol errors and $rn$ symbol erasures, $E$ is bounded above by $(np - (t^*-\lambda_{t^*})p_{t^*}) + (nr-\lambda_{t^*})$. In turn, the LHS of (\ref{eq:dec_exist_22}) is bounded below by
  \begin{align}
  & D - \left(np - (t^* - \lambda_{t^*})p_{t^*} \right) - (nr - \lambda_{t^*}) \nonumber \\
  & \stackrel{(d)}{\geq} D -\frac{(q-1)(1-\epsilon_L)(1+\epsilon_L)(n-t^*)}{2q} - \frac{nr-\lambda_{t^*}}{2} \nonumber \\
  & \stackrel{(e)}{>} \frac{(q-1)(1-\epsilon_L)(1+\epsilon_L)(n-t^*)}{2q} - \frac{nr-\lambda_{t^*}}{2} \nonumber
  \end{align}
  where (d) follows from Lemma \ref{thm:rt_summary} and (e) follows from the Feedback Distance Condition (\ref{eq:fb_distance2}). However, (e) contradicts the supposition that $m'$ satisfies the Decoding Distance Condition (\ref{eq:dec_exist_22}). Thus, it must be true that $t* = t^{**}$.
  \end{proof}

  \begin{lemma} \label{thm:dec_success}
  Alice's message $m$ is contained in $\mathcal{L}_{k^*}$. Furthermore, $m$ is the unique message in $\mathcal{L}_{k^*}$ that satisfies the Decoding Distance Condition (\ref{eq:dec_cond}). Hence, Bob's decoding procedure outputs $\hat{m} = m$.
  \end{lemma}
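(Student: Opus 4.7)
The plan is to prove the three assertions in turn: (a) $m \in \mathcal{L}_{k^*}$, (b) $m$ satisfies the Decoding Distance Condition (\ref{eq:dec_cond}), and (c) no other message in $\mathcal{L}_{k^*}$ satisfies (\ref{eq:dec_cond}). Part (b) is essentially contained in the chain of inequalities (\ref{eq:dec_exist_3}) established during the proof of Lemma \ref{thm:dec_exist}, and can simply be reused. So the substantive work is in (a) and (c).

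For (a), I would invoke Lemma \ref{thm:dec_exist} to write $t^* = t^{**}$. By the very definition of $t^{**}$, the inequality $\hat{p}_{t^*} \geq p_{t^*}$ holds, so the list-decoding radius $(t^* - \lambda_{t^*}) \hat{p}_{t^*}$ that Bob uses to form $\mathcal{L}_{k^*}$ is at least $(t^* - \lambda_{t^*}) p_{t^*}$, the actual number of symbol errors in the received prefix $\bm{y}_{\mathrm{prefix},k^*}$. Since Alice's codeword prefix corresponding to $m$ is exactly at Hamming distance $(t^*-\lambda_{t^*}) p_{t^*}$ from $\bm{y}_{\mathrm{prefix},k^*}$ (under the erasure-aware distance), it lies within the list-decoding radius and hence $m \in \mathcal{L}_{k^*}$.

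For (c), I would suppose for contradiction that some $m' \in \mathcal{L}_{k^*} \setminus \{m\}$ also satisfies (\ref{eq:dec_cond}). The key observation is that the super-lists are monotone in $k$, so $\mathcal{L}_{k^*} \subseteq \mathcal{L}^{\mathrm{super}}_{k}$ for every $k \in \{k^*, k^*+1, \ldots, T\}$, and in particular both $m$ and $m'$ lie in $\mathcal{L}^{\mathrm{super}}_{k}$ at each such $k$. The Feedback Distance Condition (\ref{eq:fb_distance2}) therefore applies at every such $k$, yielding
\begin{equation} \nonumber
d_H\bigl(\mathcal{C}_{k+1}(m'; z_k), \mathcal{C}_{k+1}(m; z_k)\bigr) > \theta n\, \tfrac{(q-1)(1-\epsilon_L)(1+\epsilon_L)}{q},
\end{equation}
and summing these inequalities over the $(n-t^*)/(\theta n)$ values of $k$ gives $d_H(\text{suffix}(m'), \text{suffix}(m)) > (n-t^*)\,\tfrac{(q-1)(1-\epsilon_L)(1+\epsilon_L)}{q}$. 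The generalized triangle inequality for erasure-aware distance, applied to the suffix with $nr - \lambda_{t^*}$ erasures, then yields $d_H(\text{suffix}(m'), \bm{y}_{\mathrm{suffix},k^*}) \geq d_H(\text{suffix}(m'), \text{suffix}(m)) - d_H(\text{suffix}(m), \bm{y}_{\mathrm{suffix},k^*}) - (nr - \lambda_{t^*})$; since $d_H(\text{suffix}(m), \bm{y}_{\mathrm{suffix},k^*}) = np - (t^*-\lambda_{t^*}) p_{t^*}$, I can rewrite the lower bound in terms of the error/erasure count. Finally, I would invoke Lemma \ref{thm:rt_summary} to replace $(np - (t^*-\lambda_{t^*})p_{t^*}) + (nr-\lambda_{t^*})/2$ by the bound $\tfrac{(q-1)(1-\epsilon_L)(1+\epsilon_L)(n-t^*)}{2q}$; after cancellation this forces $d_H(\text{suffix}(m'), \bm{y}_{\mathrm{suffix},k^*}) > \tfrac{(q-1)(1-\epsilon_L)(1+\epsilon_L)(n-t^*)}{2q} - \tfrac{nr - \lambda_{t^*}}{2}$, contradicting the assumption that $m'$ satisfies (\ref{eq:dec_cond}).

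The main obstacle I anticipate is the erasure-aware triangle inequality: because positions where $\bm{y}_{\mathrm{suffix},k^*}$ is erased contribute $0$ to both $d_H(\cdot, \bm{y})$ terms but up to $1$ to $d_H(\text{suffix}(m'), \text{suffix}(m))$, one must absorb an extra additive $nr - \lambda_{t^*}$ in the triangle inequality. The bookkeeping only closes because the DDC threshold carries the $-\tfrac{nr - \lambda_{t^*}}{2}$ correction and because Lemma \ref{thm:rt_summary} bounds the combination $(np - (t^*-\lambda_{t^*})p_{t^*}) + \tfrac{nr-\lambda_{t^*}}{2}$ rather than the two terms separately. The second subtlety is making sure that the Feedback Distance Condition is available at every sub-codeword in the suffix (not merely at $k^*+1$), which hinges on the monotone growth of $\mathcal{L}^{\mathrm{super}}_k$.
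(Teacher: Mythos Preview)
Your proposal is correct and follows essentially the same approach as the paper's proof, which simply invokes Lemma \ref{thm:dec_exist} to get $t^* = t^{**}$ (hence $\hat{p}_{t^*} \ge p_{t^*}$ and $m \in \mathcal{L}_{k^*}$) and then reuses inequalities (c)--(e) from that lemma's proof to rule out any competing $m' \in \mathcal{L}_{k^*}\setminus\{m\}$. Your write-up is in fact more explicit than the paper's on two points the paper glosses over: the monotonicity $\mathcal{L}_{k^*} \subseteq \mathcal{L}^{\mathrm{super}}_k$ for all $k \ge k^*$ (needed so the Feedback Distance Condition applies at every sub-codeword in the suffix), and the erasure-corrected triangle inequality with the extra $nr - \lambda_{t^*}$ term.
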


  \begin{proof}[Proof of Lemma \ref{thm:dec_success}]
  By Claim \ref{thm:dec_exist}, the decoding point $t^* = k^* \theta n$ exists and is equal to $t^{**}$. In turn, following the definition of $t^{**}$, we have that $\hat{p}_{t^*} \geq p_{t^*}$. Recall that Bob forms the list $\mathcal{L}_{k^*}$ with a list-decoding radius $(t^{*} -\lambda_{t^*})\hat{p}_{t^*}$. In turn, the list-decoding radius is at least as large as the number of symbol errors $(t^*-\lambda_{t^*})p_{t^*}$, which implies that $m$ must be in $\mathcal{L}_{k^*}$.

  The claim that $m$ is the unique message in $\mathcal{L}_{k^*}$ that satisfies the Decoding Distance Condition follows from the Feedback Distance Condition (\ref{eq:fb_distance2}). To see this, suppose that $\mathcal{L}_{k^*}$ contains 2 or more messages and let $m' \in \mathcal{L}_{k^*} \setminus \{m\}$. Applying the same argument outlined in the proof of Lemma \ref{thm:dec_exist} (specifically, inequalities (c) through (e)), it follows that
  \begin{align}
  d_H\left(\mathcal{C}_{k^*+1}(m';z_{k^*}) \circ \cdots \circ \mathcal{C}_{T+1}(m';z_{T}), \bm{y}_{\mathrm{suffix},k^*} \right) > \frac{(q-1)(1-\epsilon_L)(1+\epsilon_L)(n-t^*)}{2q} - \frac{nr-\lambda_{t^*}}{2},\nonumber
  \end{align}
  and thus, $m'$ does not satisfy the Decoding Distance Condition.
  \end{proof}

  As a corollary of Lemma \ref{thm:dec_exist} and Lemma \ref{thm:dec_success}, the rate $R = C^{O(1)}_q(p,r) - \epsilon_R$ is (zero error) achievable. Since we have allowed $\epsilon_R>0$ to be arbitrarily small, this completes the proof of the lower bound of Theorem \ref{thm:cap_bf}.

  \section{Proof of Theorem \ref{thm:cap_bf}: Upper Bound} \label{sec:ub}

  We first present a summary of our proof followed by a detailed proof. In the sequel, we adopt the following setup. Let $q \geq 2$. Let $p \in [0,\frac{q-1}{2q}]$ and $r \in [0,\frac{q-1}{q}]$ be the fraction of symbol errors and erasures, respectively, where $2p + r \leq \frac{q-1}{q}$. The proof can be easily extended to account for the case where $2p+r > \frac{q-1}{q}$. For $\bar{p} \in [0,p]$, let $\alpha(\bar{p}) = 1 - \frac{2q}{q-1} (p-\bar{p}) - \frac{q}{q-1} r$. Define $$C_0 = \min_{\bar{p} \in [0,p]} \left[\alpha(\bar{p})\left(1-H_q\left(\frac{\bar{p}}{\alpha(\bar{p})}\right)\right)\right].$$ 
  
  \subsection{Overview of Converse Proof}
    
  Roughly, the aim of our proof is to show that for any $\epsilon_R >0$, rate $R = C_0  + \epsilon_R$ and any communication scheme using $O(1)$-bit feedback, there exists an adversarial strategy $\adv \in \mathcal{ADV}$ and a message $m \in \mathcal{M}$ such that Bob incorrectly decodes $m$ when sent by Alice. To be more precise with our aim, we define the notation of a \textit{confusable} message pair.
  
  For an $(n,Rn,B)$-code $\Psi = (\mathcal{C}_k,\phi,f_k,\mathcal{T},\mathcal{Z})$, the pair of (unique) messages $m_1,m_2 \in \mathcal{M}$ is said to be \textit{confusable} if there exists an adversarial channel mapping $\adv^* \in \mathcal{ADV}$ such that the codewords corresponding to both messages $m_1$ and $m_2$ are mapped to the same received sequence $\bm{y}^* \in \mathcal{Y}^n$, i.e., $$\bm{y}^* = \adv^*(\mathcal{C}(m_1;z_1^*,\ldots,z_T^*)) = \adv^*(\mathcal{C}(m_2;z^*_1,\ldots,z^*_T))$$ for the feedback sequence  $(z^*_1,\ldots,z^*_T) = f(\bm{y}^*)$. Thus, if some pair $(m_1,m_2)$ is confusable, for any decoder $\phi$ used by Bob, the adversary can induce a decoding error if Alice sends $m \in \{m_1,m_2\}$. Hence, for any $\epsilon_R > 0$ sufficiently small, rate $R = C_0  + \epsilon_R$, any integer constant $B \geq 1$ and for any $(n,Rn,B)$-code $\Psi = (\mathcal{C}_k,\phi,f_k,\mathcal{T},\mathcal{Z})$, we show that for large enough $n$ there exists a message pair $(m_1,m_2)$ that is confusable. To show the existence of this pair, we provide a construction of $\adv^*$ via the ``babble-and-push'' attack.

  Roughly, the ``babble-and-push'' attack is a two-stage attack executed sequentially in time. In the first stage or ``babble'' attack, the adversary ignores its information about message $m$ and code $\Psi$ by ``babbling'' or \textit{randomly} injecting a certain number of symbol errors into Alice's codeword. As a result of this first stage attack, Bob has some uncertainty of Alice's message $m$. In the second stage or ``push'' attack, the adversary chooses a target message $m' \neq m$ from Bob's uncertainty set and uses its remaining budget of symbol errors and its entire budget of symbol erasures to ``push'' Alice's codeword to the codeword corresponding to $m'$. Following the second stage, the goal of the ``babble-and-push'' analysis is to show that Bob cannot reliably decide whether Alice sent $m$ or $m'$. The following is a detailed summary of the ``babble-and-push'' attack.

  \subsection{Summary of ``Babble-and-Push'' Attack}

  \textbf{``Babble'' Attack}: Let $m$ denote Alice's transmitted message which is drawn uniformly from $\mathcal{M}$ and known to the adversary and let 
  \begin{equation} \label{eq:p_bar_ass}
  \bar{p} = \arg \min_{\bar{p} \in [0,p]} \left[\alpha(\bar{p}) \left(1-H_q\left( \frac{\bar{p}}{\alpha(\bar{p})} \right)\right) \right].
  \end{equation}
  For the first $b = (\alpha(\bar{p}) + \epsilon_R/2)n$ channel uses, the adversary ``babbles'' by randomly injecting $\bar{p}n$ symbol errors into Alice's codeword. More specifically, the adversary randomly chooses an index subset $\mathcal{S} \subset \{1,\ldots,b\}$ of size $\bar{p}n$, and subsequently chooses Bob's $i^{\text{th}}$ received symbol $y_{i}$ uniformly from $\mathcal{Q} \setminus \{ x_{i} \}$ for all $i \in \mathcal{S}$ and sets $y_{i} = x_{i}$ for all $i \in [b] \setminus \mathcal{S}$. 
  
  Let $\bm{x}_{\mathrm{b}} = (x_1, \ldots, x_b)$ and $\bm{y}_{\mathrm{b}} = (y_1,\ldots, y_b)$ denote Alice's transmitted codeword and Bob's received sequence up to time $b$, respectively, following the adversary's ``babble'' attack. Furthermore, let $T_{\mathrm{b}} \in [0,T]$ denote the number of rounds of feedback that occur up to time $b$ and let $z_1, \ldots, z_{T_{\mathrm{b}}}$ denote the feedback symbols sent during these $T_{\mathrm{b}}$ rounds.    
  
  \textbf{Additional Assumption}: In our analysis and without loss of generality, we consider a stronger communication scheme than initially described in the channel model. We strengthen the scheme by incrementing Bob's feedback budget by an additional bit and requiring Bob to send the extra bit of feedback immediately after channel use $b$. Thus, Alice and Bob use an $(n,Rn,B+1)$-code, where we ensure that the $T_{\mathrm{b}}^{\text{th}}$ feedback symbol $z_{T_{\mathrm{b}}}$ is sent immediately after the $b^{\text{th}}$ channel use (i.e., at the end of the ``babble'' attack). The purpose of this assumption is to ensure that the adversary's ``push'' attack does not begin in the middle of a sub-codeword, thus simplifying our analysis of the ``push'' attack. Note that this assumption does not affect the generality of the converse result, as Alice can always ``ignore'' the extra feedback.
   
   \textbf{``Push'' Setup}: For a message $m'$ and feedback sequence $z'_1,\ldots,z'_{T}$, we partition the codeword $\mathcal{C}(m';z_1,\ldots,z_T)$ into its first $b$ symbols and its last $n-b$ symbols. We denote the first $b$ symbols as 
  \begin{align}
   \mathcal{C}^{(\mathrm{b})}(m';z_1,\ldots,z_{T_{\mathrm{b}}-1}) \triangleq \mathcal{C}_1(m') \circ \mathcal{C}_2(m';z_1) \circ \cdots \circ \mathcal{C}_{T_{\mathrm{b}}}(m';z_1,\ldots,z_{T_{\mathrm{b}}-1}) \nonumber
   \end{align}
   and the last $n-b$ symbols as
   \begin{align}
   \mathcal{C}^{(\mathrm{p})}(m';z_1,\ldots,z_{T}) \triangleq \mathcal{C}_{T_{\mathrm{b}}+1}(m';z_1,\ldots,z_{T_{\mathrm{b}}}) \circ \cdots \circ \mathcal{C}_{T+1}(m';z_1,\ldots,z_{T}). \nonumber
   \end{align}
   
   Following the ``babble'' attack, the adversary constructs a set of message pairs that are confusable given the observed output sequence $\bm{y}_{\mathrm{b}}$. To do this, the adversary first constructs a set of all messages $m'$ such that the first $b$ symbols of codeword $\mathcal{C}(m';z_1,\ldots,z_{T_b},\cdot)$ are close to $\bm{y}_{\mathrm{b}}$. That is, the adversary constructs the set $$\mathcal{B}_{\bm{y}_{\mathrm{b}}} = \left\{ m' \in \mathcal{M}: d_H\left(\bm{y}_{\mathrm{b}}, \mathcal{C}^{(\mathrm{b})}(m';z_1,\ldots,z_{T_{\mathrm{b}}-1}) \right) = \bar{p}n \right \}.$$ 
  
  Next, for each sub-codeword index $k \in \{T_{\mathrm{b}}+1, \ldots, T+1\}$ which has yet to be sent and each feedback sequence $\bm{z}_{k-1}'$ that agrees with the feedback sent in the first $b$ channel uses, i.e., $$\bm{z}_{k-1}' \in \mathcal{Z}_{k-1}' \triangleq \{z_1\} \times \cdots \times \{z_{T_{\mathrm{b}}}\} \times \mathcal{Z}^{k-1-T_{\mathrm{b}}},$$ the adversary constructs a set of all message pairs $(m',m'') \in \mathcal{B}_{\bm{y}_{\mathrm{b}}}^2$ such that the $k^{\text{th}}$ sub-codewords corresponding to $m$, $\bm{z}_{k-1}'$ and $m'$, $\bm{z}_{k-1}'$ are close. Specifically, the adversary constructs the set
  \begin{align} \nonumber
  \mathcal{D}_{k,\bm{z}_{k-1}'} = \{ (m',m'') \in \mathcal{B}_{\bm{y}_{\mathrm{b}}}^2: m' \neq m'', d_H\left( \mathcal{C}_k(m';\bm{z}_{k-1}'), \mathcal{C}_k(m'';\bm{z}_{k-1}') \right) \leq \Delta_k \} \nonumber
  \end{align}
  for some distance parameter $\Delta_k > 0$. Finally, the adversary constructs the set of all \textit{strongly confusable message} pairs $$\textset{SCM}_{\bm{y}_{\mathrm{b}}} = \bigcap_{k = T_{\mathrm{b}}+1}^{T+1} \bigcap_{\bm{z}_{k-1}' \in \mathcal{Z}_{k-1}'} \mathcal{D}_{k,\bm{z}_{k-1}'}.$$ In summary, the construction of $\textset{SCM}_{\bm{y}_{\mathrm{b}}}$ ensures the following two properties for any $(m',m'') \in \textset{SCM}_{\bm{y}_{\mathrm{b}}}$:
  \begin{enumerate}
  \item $m'$ and $m''$ are equally likely to be Alice's message given Bob's view of the first $b$ received symbols
  \item The remaining $n-b$ symbols of the codewords corresponding to $m'$ and $m''$ are close for \textit{all} possible feedback sequences, i.e., for any $\bm{z}'_T \in \mathcal{Z}'_{T}$
  \begin{equation} \label{eq:prop_2_Dk}
  d_H \left( \mathcal{C}^{(\mathrm{p})}(m';\bm{z}'_T), \mathcal{C}^{(\mathrm{p})}(m'';\bm{z}'_T) \right) \leq \sum_{k = T_{\mathrm{b}}+1}^{T+1} \Delta_k.
  \end{equation}
  \end{enumerate}
  We remark that property 2 greatly simplifies analysis of the ``babble-and-push'' attack by allowing us to treat the code $\Psi$ as if it was a code without feedback. In particular, since (\ref{eq:prop_2_Dk}) holds for all $\bm{z}'_T \in \mathcal{Z}'_{T}$, one can virtually ignore the presence of feedback when analyzing the closeness of codewords corresponding to two messages.

  In the analysis, we choose $\{ \Delta_k \}$ such that for any $(m',m'') \in \textset{SCM}_{\bm{y}_{\mathrm{b}}}$ and any $\bm{z}'_{T} \in \mathcal{Z}'_{T}$,
  \begin{align}
   d_H\big(\mathcal{C}^{(p)}(m';\bm{z}_T'),\mathcal{C}^{(p)}(m'';\bm{z}_T') \big) \leq \Delta \triangleq 2(p-\bar{p}) + rn - \frac{\epsilon_R n}{16 T}. \nonumber 
  \end{align}
  Given this choice of $\Delta_k$, we show that with probability over the ``babble'' attack and Alice's message distribution, Bob's received sequence $\bm{y}_{\mathrm{b}}$ is a sequence $\bm{y}^*_{\mathrm{b}} \in \mathcal{Y}^b$ such that the set $\textset{SCM}_{\bm{y}^*_b}$ is non-empty (c.f. Corollary \ref{thm:pos_p_new}). It immediately follows that there exists a message pair $(m_1,m_2) \in \textset{SCM}_{\bm{y}^*_b}$ and there exists a received sequence $\bm{y}^*_{\mathrm{p}} \in \mathcal{Y}^{n-b}$ that a) has at most $rn$ erasure symbols and b) is \textit{sufficiently close} to the codewords corresponding to message $m_1$ and $m_2$, i.e., 
  \begin{equation} \label{eq:max_dp}
  \max_{i \in \{ 1,2\}}d_H\left(\mathcal{C}^{(p)}(m_i;z_1^*,\ldots,z_{T_{\mathrm{b}}}^*),\bm{y}_{\mathrm{p}}^* \right)\leq (p-\bar{p})n - \frac{n \epsilon_R}{16}
  \end{equation}
  where $(z_1^*,\ldots,z_T^*) = f((\bm{y}_{\mathrm{b}}^*,\bm{y}_{\mathrm{p}}^*))$. Given the existence of the above $\bm{y}^*_{\mathrm{b}}$ and $\bm{y}^{*}_{\mathrm{p}}$, the adversary performs the following ``push'' attack.
  
  \textbf{``Push'' Attack:} Let $\bm{x}_{\mathrm{p}} = (x_{b+1}, \ldots, x_n)$ and $\bm{y}_{\mathrm{p}}=(y_{b+1},\ldots,y_n)$ denote Alice's transmission and Bob's received sequence, respectively, during the ``push'' attack. The adversary's push attack is as follows. If either $\bm{y}_{\mathrm{b}} \neq \bm{y}_{\mathrm{b}}^*$ or $m \not\in \{m_1,m_2\}$, then the adversary takes no further action and $\bm{y}_{\mathrm{p}} = \bm{x}_{\mathrm{p}}$. Otherwise, if $\bm{y}_{\mathrm{b}} = \bm{y}_{\mathrm{b}}^*$ and $m \in \{m_1,m_2\}$, then the adversary chooses $\adv \in \mathcal{ADV}$ to be the mapping which outputs $\bm{y}_{\mathrm{p}} = \bm{y}_{\mathrm{p}}^*$ (call this mapping $\adv^*$). We remark that such a mapping exists following (\ref{eq:max_dp}) and the assumption that the adversary can induce up to $pn$ symbol errors and $rn$ symbol erasures. In summary, with positive probability over the ``babble'' attack and message distribution, Alice sends $m \in \{m_1,m_2\}$ and
  \begin{equation} \nonumber
  \bm{y}^* = \adv^*(\mathcal{C}(m_1;z_1^*,\ldots,z_T^*)) = \adv^*(\mathcal{C}(m_2;z_1^*,\ldots,z_T^*))
  \end{equation}
  where $\bm{y}^* = (\bm{y}_{\mathrm{b}}^*,\bm{y}_{\mathrm{p}}^*)$ and $(z_1^*,\ldots,z_T^*) = f(\bm{y}^*)$. Hence, $(m_1,m_2)$ is a confusable message pair.

  \subsection{Analysis of ``Babble-and-Push'' Attack} 

  Let $M$, $\bm{X}_{\mathrm{b}}$ and $\bm{Y}_{\mathrm{b}}$ denote the random variables corresponding to Alice's message, Alice's first $b$ codeword symbols, and Bob's first $b$ received symbols, respectively. Assume that $M$ is uniformly distributed in the message set $\mathcal{M} = [q^{Rn}]$. Denote the Shannon entropy of $M$ conditioned on $\bm{Y}_{\mathrm{b}}$ as 
  \begin{equation} \nonumber
  H(M|\bm{Y}_{\mathrm{b}}) =  - \sum_{m \in \mathcal{M}} \sum_{\bm{y}_{\mathrm{b}} \in \mathcal{Q}^b} p(m,\bm{y}_{\mathrm{b}}) \log_2 p(m|\bm{y}_{\mathrm{b}})
  \end{equation}
  and let $\mathcal{F} = \{ \bm{Y}_{\mathrm{b}} \in \{ \bm{y}_{\mathrm{b}} \in \mathcal{Y}^b: H(M|\bm{Y}_{\mathrm{b}} = \bm{y}_{\mathrm{b}}) \geq \frac{n \epsilon_R}{4} \}\}$ be the event that Bob has uncertainty in message $M$ after observing $\bm{Y}_{\mathrm{b}}$.
  \begin{lemma} \label{thm:F_0}
  $\mathbb{P}\left( \mathcal{F} \right) > \frac{\epsilon_R}{4}$.
  \end{lemma}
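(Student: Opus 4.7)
The plan is to lower bound the average conditional entropy $H(M|\bm{Y}_{\mathrm{b}})$ by an information-theoretic calculation, then convert that average bound into a tail bound via a reverse Markov inequality. The structural observation enabling this is that the babble noise $\bm{E}_{\mathrm{b}}$ is sampled by the adversary uniformly over vectors with exactly $\bar{p}n$ nonzero entries (with each nonzero entry uniform over $\mathcal{Q}\setminus\{x_i\}$) \emph{independently} of Alice's message $M$. Hence the first $b$ channel uses behave like a feedback-adaptive additive-noise channel whose noise is statistically independent of the transmitted message.

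Concretely, first I would argue that $H(\bm{Y}_{\mathrm{b}}\mid M)=H(\bm{E}_{\mathrm{b}})$. Because $\bm{X}_{\mathrm{b}}$ is a deterministic function of $M$ and the feedback symbols $(z_1,\dots,z_{T_{\mathrm{b}}-1})$, which in turn are deterministic functions of past $\bm{Y}$, unrolling this dependence along with the identity $\bm{Y}_{\mathrm{b}}=\bm{X}_{\mathrm{b}}+\bm{E}_{\mathrm{b}}$ shows that, conditioned on $M$, the map $\bm{E}_{\mathrm{b}}\mapsto \bm{Y}_{\mathrm{b}}$ is a bijection. By Stirling's formula,
\[
H(\bm{E}_{\mathrm{b}})=\log_2\!\Bigl(\tbinom{b}{\bar{p}n}(q-1)^{\bar{p}n}\Bigr)=b\,H_q(\bar{p}n/b)\log_2 q-O(\log n).
\]
Combining with the trivial bound $H(\bm{Y}_{\mathrm{b}})\leq b\log_2 q$ gives $I(M;\bm{Y}_{\mathrm{b}})\leq b(1-H_q(\bar{p}n/b))\log_2 q+O(\log n)$, and hence
\[
H(M\mid \bm{Y}_{\mathrm{b}})\geq \bigl[Rn-b(1-H_q(\bar{p}n/b))\bigr]\log_2 q-O(\log n).
\]
Now define $\Phi(\beta)=\beta\bigl(1-H_q(\bar{p}/\beta)\bigr)$, so $\Phi(\alpha(\bar{p}))=C_0$ by the choice of $\bar{p}$ in \eqref{eq:p_bar_ass}. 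A direct differentiation yields $\Phi'(\beta)=1+\log_q(1-\bar{p}/\beta)\leq 1$, so $\Phi$ is $1$-Lipschitz. Since $b/n=\alpha(\bar{p})+\epsilon_R/2$, we conclude $\Phi(b/n)\leq C_0+\epsilon_R/2$. Using $R=C_0+\epsilon_R$ then gives
\[
H(M\mid \bm{Y}_{\mathrm{b}})\geq (\epsilon_R/2)\,n\log_2 q-O(\log n)\geq (\epsilon_R/3)\,n\log_2 q
\]
for $n$ large enough.

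Finally, I would apply a reverse Markov step. Let $X=H(M\mid\bm{Y}_{\mathrm{b}}=\bm{Y}_{\mathrm{b}})$, so $\mathbb{E}[X]\geq(\epsilon_R/3)n\log_2 q$ and $X\leq \log_2|\mathcal{M}|=Rn\log_2 q$. Splitting
\[
\mathbb{E}[X]\leq \tfrac{n\epsilon_R}{4}\cdot \Pr\!\bigl(X<\tfrac{n\epsilon_R}{4}\bigr)+Rn\log_2 q\cdot \Pr(\mathcal{F})
\]
and rearranging yields $\Pr(\mathcal{F})\geq \epsilon_R/4$ after absorbing the lower-order terms (taking $\epsilon_R$ small and $n$ large enough). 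The main obstacle is the first step: one must verify carefully that $H(\bm{Y}_{\mathrm{b}}\mid M)=H(\bm{E}_{\mathrm{b}})$ despite the feedback-adaptive encoder, since superficially $\bm{X}_{\mathrm{b}}$ depends on past $\bm{Y}_{\mathrm{b}}$ and so the clean Markov chain $M\to\bm{X}_{\mathrm{b}}\to \bm{Y}_{\mathrm{b}}$ fails. Once that identity is pinned down, the continuity computation $\Phi'\leq 1$ and the reverse Markov argument are routine.
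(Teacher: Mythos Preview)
Your proposal is correct and follows essentially the same two-step strategy as the paper: first lower-bound $H(M\mid\bm{Y}_{\mathrm{b}})$ by a linear-in-$n$ quantity using the structure of the babble channel, then convert this average bound into a tail bound via (reverse) Markov. The paper phrases the first step as ``data processing inequality'' plus ``feedback does not increase capacity'' to obtain $I(M;\bm{Y}_{\mathrm{b}})\le b(1-H_q(\bar p n/b))$, whereas your bijection argument $H(\bm{Y}_{\mathrm{b}}\mid M)=H(\bm{E}_{\mathrm{b}})$ is a more explicit justification of exactly this inequality in the presence of feedback; your $1$-Lipschitz bound on $\Phi$ is equivalent to the paper's observation that $x\mapsto xH_q(\bar p/x)$ is increasing, and your reverse Markov split is the same as the paper's Markov inequality applied to $Rn-H(M\mid\bm{Y}_{\mathrm{b}}=\bm{y}_{\mathrm{b}})$.
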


  \begin{proof}[Proof of Lemma \ref{thm:F_0}]
  The proof is identical to the proof of \cite[Claim A.2]{Chen2019} and studies lower bounds on the entropy $H(M|\bm{Y}_{\mathrm{b}})$.
  \begin{align}
  & H(M|\bm{Y}_{\mathrm{b}}) \nonumber \\
  & = H(M) - I(M;\bm{Y}_{\mathrm{b}}) \nonumber \\
  & \geq H(M) - I(\bm{X}_{\mathrm{b}};\bm{Y}_{\mathrm{b}}) \label{eq:F_0_0} \\
  & \geq H(M) - b \left( 1- H_q \left( \frac{\bar{p}n}{b} \right) \right) \label{eq:F_0_0_1} \\
  & = n\left(\alpha(\bar{p}) \left(1 - H_q\left(\frac{\bar{p}}{\alpha(\bar{p})}\right) \right) + \epsilon_R \right) -  b \left( 1- H_q \left( \frac{\bar{p}n}{b} \right) \right) \label{eq:F_0_2} \\
  & = \frac{n \epsilon_R}{2} + n \left(\alpha(\bar{p})+\frac{\epsilon_R}{2} \right)H_q \left( \frac{\bar{p}}{\alpha(\bar{p})+\epsilon_R/2} \right) \nonumber \\
  & \hspace{13em} - n \alpha(\bar{p}) H_q \left( \frac{\bar{p}}{\alpha(\bar{p})} \right) \label{eq:F_0_3} \\
  & \geq \frac{n \epsilon_R}{2} \label{eq:F_0_4}
  \end{align}
  where (\ref{eq:F_0_0}) follows from the data processing inequality, (\ref{eq:F_0_0_1}) follows from the adversary's ``babble'' attack and the fact that feedback does not increase capacity, (\ref{eq:F_0_2}) follows from the assignment $R = \alpha(\bar{p})\left(1-H_q\left(\frac{\bar{p}}{\alpha(\bar{p})}\right) \right) + \epsilon_R$ where $\bar{p}$ is defined in (\ref{eq:p_bar_ass}), (\ref{eq:F_0_3}) follows from substituting $b = n(\alpha(\bar{p})+\epsilon_R/2)$, and (\ref{eq:F_0_4}) follows from the fact that the function $g(x) = x H_q \left(\frac{\bar{p}}{x}\right)$ is increasing in $x$.

  We note that for all $\bm{y}_{\mathrm{b}} \in \mathcal{Q}^b$ the entropy $H(M|\bm{Y}_{\mathrm{b}} = \bm{y}_{\mathrm{b}})$ is bounded above by $Rn$ and, thus, the quantity $Rn - H(M|\bm{Y}_{\mathrm{b}} = \bm{y}_{\mathrm{b}})$ positive. In turn, by Markov's inequality,
  \begin{align}
  \mathbb{P}\left(Rn - H(M|\bm{Y}_{\mathrm{b}}=\bm{y}_{\mathrm{b}}) > Rn - \frac{\epsilon_R n}{4} \right) &< \frac{nR - H(M|\bm{Y}_{\mathrm{b}})}{Rn - \frac{n\epsilon_R}{4}} \nonumber \\
  & \leq \frac{R-\frac{\epsilon_R}{2}}{R-\frac{\epsilon_R}{4}} \label{eq:F_0_5}
  \end{align}
  where the last inequality follows from (\ref{eq:F_0_4}). Hence,
  \begin{align}
  \mathbb{P}\left( \mathcal{F} \right) &= \mathbb{P}\left( H(M|\bm{Y}_{\mathrm{b}} = \bm{y}_{\mathrm{b}}) \geq \frac{n \epsilon_R}{4} \right) \nonumber \\
  &= 1-\mathbb{P}\left( Rn - H(M|\bm{Y}_{\mathrm{b}}=\bm{y}_{\mathrm{b}}) > Rn - \frac{n \epsilon_R}{4} \right) \nonumber \\
  & > 1 - \frac{R-\epsilon_R/2}{R-\epsilon_R/4} \label{eq:F_0_6} \\
  & \geq \frac{\epsilon_R}{4}. \nonumber
  \end{align}
  where (\ref{eq:F_0_6}) follows from (\ref{eq:F_0_5}). 
  \end{proof}

  \begin{corollaryLemma} \label{thm:size_By}
  Conditioned on event $\mathcal{F}$,  the number of messages in $\mathcal{B}_{\bm{y}_{\mathrm{b}}}$ is at least $q^{\frac{n \epsilon_R}{4} }$.
  \end{corollaryLemma}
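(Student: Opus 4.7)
The plan is to show that, under the babble attack, the posterior distribution of $M$ given $\bm{Y}_{\mathrm{b}} = \bm{y}_{\mathrm{b}}$ is uniform over exactly the set $\mathcal{B}_{\bm{y}_{\mathrm{b}}}$, which immediately turns the entropy bound on $\mathcal{F}$ into a cardinality bound on $\mathcal{B}_{\bm{y}_{\mathrm{b}}}$.

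First I would compute $p(\bm{y}_{\mathrm{b}} \mid M = m)$ explicitly. Given $M = m$, the feedback symbols $z_1,\ldots,z_{T_{\mathrm{b}}}$ emitted during the first $b$ channel uses are deterministic functions of Bob's partial received sequence, so the babble-stage codeword prefix $\mathcal{C}^{(\mathrm{b})}(m;z_1,\ldots,z_{T_{\mathrm{b}}-1})$ is itself a deterministic function of $(m,\bm{y}_{\mathrm{b}})$. By construction of the babble attack, $\bm{Y}_{\mathrm{b}}$ is formed by choosing a uniformly random subset $\mathcal{S}\subset[b]$ of size $\bar{p}n$ and, on each $i\in\mathcal{S}$, replacing $x_i$ by a uniformly random element of $\mathcal{Q}\setminus\{x_i\}$. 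Hence, for every $\bm{y}_{\mathrm{b}}\in\mathcal{Y}^b$ at Hamming distance exactly $\bar{p}n$ from the prefix $\mathcal{C}^{(\mathrm{b})}(m;z_1,\ldots,z_{T_{\mathrm{b}}-1})$,
\begin{equation} \nonumber
p(\bm{y}_{\mathrm{b}}\mid m) \;=\; \binom{b}{\bar{p}n}^{-1}(q-1)^{-\bar{p}n},
\end{equation}
and $p(\bm{y}_{\mathrm{b}}\mid m)=0$ otherwise. The key observation is that this nonzero value is the same constant for every $m$.

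Next, since $M$ is uniform over $\mathcal{M}$, Bayes' rule gives, for any $m\in\mathcal{B}_{\bm{y}_{\mathrm{b}}}$,
\begin{equation} \nonumber
p(m\mid \bm{y}_{\mathrm{b}}) \;=\; \frac{p(\bm{y}_{\mathrm{b}}\mid m)\,|\mathcal{M}|^{-1}}{\sum_{m'\in\mathcal{B}_{\bm{y}_{\mathrm{b}}}} p(\bm{y}_{\mathrm{b}}\mid m')\,|\mathcal{M}|^{-1}} \;=\; \frac{1}{|\mathcal{B}_{\bm{y}_{\mathrm{b}}}|},
\end{equation}
so the posterior is uniform on $\mathcal{B}_{\bm{y}_{\mathrm{b}}}$ and zero elsewhere. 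Consequently, $H(M\mid \bm{Y}_{\mathrm{b}}=\bm{y}_{\mathrm{b}}) = \log_q|\mathcal{B}_{\bm{y}_{\mathrm{b}}}|$ (taking entropies in $q$-ary units, consistent with the preceding lemma's use of $H(M)=nR$). Conditioning on the event $\mathcal{F}$ that $H(M\mid \bm{Y}_{\mathrm{b}}=\bm{y}_{\mathrm{b}})\geq n\epsilon_R/4$ and exponentiating yields $|\mathcal{B}_{\bm{y}_{\mathrm{b}}}|\geq q^{n\epsilon_R/4}$, as required.

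There is no serious obstacle; the only point warranting care is the status of the feedback symbols $z_1,\ldots,z_{T_{\mathrm{b}}-1}$ that appear inside $\mathcal{C}^{(\mathrm{b})}$. Since these are causal functions of the received prefix, conditioning on $\bm{Y}_{\mathrm{b}}=\bm{y}_{\mathrm{b}}$ fixes them, so there is no ambiguity in the definition of $\mathcal{B}_{\bm{y}_{\mathrm{b}}}$ and no hidden randomness on the right-hand side of the Bayes computation.
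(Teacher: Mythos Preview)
Your proposal is correct and follows the natural argument the paper leaves implicit: the corollary is stated in the paper without proof, as an immediate consequence of the entropy bound defining $\mathcal{F}$, and your Bayes computation supplies exactly the missing step. The only remark is that you do slightly more than needed: to obtain $|\mathcal{B}_{\bm{y}_{\mathrm{b}}}|\ge q^{n\epsilon_R/4}$ it suffices to note that the posterior $p(\cdot\mid\bm{y}_{\mathrm{b}})$ is supported on $\mathcal{B}_{\bm{y}_{\mathrm{b}}}$, whence $H(M\mid\bm{Y}_{\mathrm{b}}=\bm{y}_{\mathrm{b}})\le\log_q|\mathcal{B}_{\bm{y}_{\mathrm{b}}}|$; establishing exact uniformity is a bonus but not required.
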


  We now set the distance paramter $\Delta_k$. Let $\beta_k = \frac{t_k-t_{k-1}}{n-b}\in (0,1]$ denote the ratio of the number of symbols in sub-codeword $\mathcal{C}_k$ and the number of remaining symbols $n-b$ after the ``babble'' attack. In turn, for $k \in [T_{\mathrm{b}}+1,T+1]$, set 
  \begin{equation} \label{eq:Dk_def}
  \Delta_k = 
  \begin{cases}
  2(p - \bar{p}) \beta_k n + r \beta_k n - \frac{ \epsilon_R \beta_k n}{8}, & \beta_k \geq \frac{\epsilon_R}{16 T} \\
  (n-b) \beta_k, & \beta_k < \frac{\epsilon_R}{16 T}.
  \end{cases}
  \end{equation}
  In turn, for any $(m',m'') \in \textset{SCM}_{\bm{y}_{\mathrm{b}}}$ and for any $\bm{z}'_T \in \mathcal{Z}'_T$, the codewords corresponding to $m'$ and $m''$ are close in their last $n-b$ symbols such that
  \begin{align}
  &d_H \left( \mathcal{C}^{(\mathrm{p})}(m';\bm{z}'_T), \mathcal{C}^{(\mathrm{p})}(m'';\bm{z}'_T) \right) \nonumber \\
  & \stackrel{(*)}{\leq} \sum_{k=T_{\mathrm{b}}+1}^T \Delta_k \nonumber \\
  & = \sum_{k=T_{\mathrm{b}}+1}^T \left( 2(p-\bar{p})\beta_k n + r \beta_k n - \frac{\epsilon_R \beta_k n}{8} \right) \mathds{1} \left\{\beta_k \geq \frac{\epsilon_R}{16T}\right\} + \sum_{k=T_{\mathrm{b}}+1}^T (n-b) \beta_k \mathds{1} \left\{ \beta_k < \frac{\epsilon_R}{16T} \right\} \nonumber \\
  & \leq \sum_{k=T_{\mathrm{b}}+1}^T \left( 2(p-\bar{p})\beta_k n + r \beta_k n - \frac{\epsilon_R \beta_k n}{8} \right) + \sum_{k=T_{\mathrm{b}}+1}^T (n-b) \frac{\epsilon_R}{16T} \nonumber \\
  &\leq \Delta \triangleq 2(p-\bar{p})n + rn - \frac{n \epsilon_R}{16} \nonumber
  \end{align}
  where (*) follows from (\ref{eq:prop_2_Dk}).

  A non-binary version of the Plotkin Bound is needed for the next lemma.
  \begin{lemma}[{Plotkin Bound \cite{Blake1976}}] \label{thm:Plotkin}
  A $q$-ary $(n,k,0)$-code $\Psi = (\mathcal{C},\phi)$ with minimum distance $d_{min} > (1-1/q)n$ must have a bounded number of codewords such that $|\mathcal{C}| \triangleq q^k \leq \frac{q d_{min}}{q d_{min} - (q-1)n}$.
  \end{lemma}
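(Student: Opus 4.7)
The proof is the classical double-counting argument on the sum of pairwise Hamming distances, so the plan is straightforward. Let $M = q^k = |\mathcal{C}|$ and consider the quantity $S = \sum_{c, c' \in \mathcal{C},\, c \neq c'} d_H(c,c')$ taken over all ordered pairs of distinct codewords. I would lower- and upper-bound $S$ and then compare the two bounds.

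For the lower bound, every ordered pair $(c,c')$ with $c \neq c'$ contributes at least $d_{min}$ to $S$, so $S \geq M(M-1) d_{min}$. For the upper bound, I would proceed coordinate by coordinate: fix a position $i \in [n]$ and for each symbol $a \in \mathcal{Q}$ let $n_{i,a}$ denote the number of codewords whose $i$-th coordinate equals $a$. Then position $i$ contributes $\sum_{a \neq b} n_{i,a}\, n_{i,b} = M^2 - \sum_{a} n_{i,a}^2$ to $S$. Since $\sum_a n_{i,a} = M$, Cauchy--Schwarz (or the power-mean inequality) gives $\sum_a n_{i,a}^2 \geq M^2/q$, so the contribution of position $i$ is at most $M^2 (q-1)/q$. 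Summing over all $n$ positions yields $S \leq n M^2 (q-1)/q$.

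Combining these two estimates produces $M(M-1) d_{min} \leq n M^2 (q-1)/q$, which I would rearrange into $M \cdot \bigl(q d_{min} - (q-1)n\bigr) \leq q\, d_{min}$. The hypothesis $d_{min} > (1 - 1/q) n$ is exactly what guarantees $q d_{min} - (q-1) n > 0$, so dividing through preserves the inequality and yields the stated bound $|\mathcal{C}| \leq q d_{min}/(q d_{min} - (q-1) n)$. There is no real obstacle here --- the proof is two lines of counting plus one convexity step --- and the only subtlety worth flagging is that the hypothesis on $d_{min}$ is used precisely to ensure the denominator is strictly positive before the final division.
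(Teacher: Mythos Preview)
Your proof is correct and is the standard double-counting argument for the Plotkin bound. The paper does not give its own proof of this lemma: it is stated with a citation to \cite{Blake1976} and used as a black box in the converse analysis, so there is nothing to compare against.
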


  \begin{lemma} \label{thm:E_y}
  Conditioned on event $\mathcal{F}$, the set of strongly confusable message pairs $\textset{SCM}_{\bm{y}_{\mathrm{b}}}$ is non-empty for large enough $n$.
  \end{lemma}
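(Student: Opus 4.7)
Conditioned on event $\mathcal{F}$, Corollary \ref{thm:size_By} provides a large candidate set $\mathcal{B}_{\bm{y}_{\mathrm{b}}}$ with $|\mathcal{B}_{\bm{y}_{\mathrm{b}}}| \ge q^{n\epsilon_R/4}$, which grows unboundedly in $n$. My plan is to apply a Ramsey-theoretic argument to the complete graph on $\mathcal{B}_{\bm{y}_{\mathrm{b}}}$, coloring each pair by a binary vector that records sub-codeword closeness across every ``color'' $c = (k,\bm{z}'_{k-1})$, and then rule out any monochromatic clique with a ``far'' coordinate using the Plotkin bound (Lemma \ref{thm:Plotkin}).

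First I would restrict to the non-trivial colors
$$\mathcal{C} = \{(k,\bm{z}'_{k-1}) : T_{\mathrm{b}}+1 \le k \le T+1,\; \bm{z}'_{k-1} \in \mathcal{Z}'_{k-1},\; \beta_k \ge \epsilon_R/(16T)\},$$
since when $\beta_k < \epsilon_R/(16T)$ the threshold $\Delta_k = (n-b)\beta_k$ matches the sub-codeword length, making every pair automatically lie in $\mathcal{D}_{k,\bm{z}'_{k-1}}$. Because $|\mathcal{Z}|$, $T$, and $T_{\mathrm{b}}$ are all $O(1)$ in $n$, the cardinality $|\mathcal{C}|$ is a constant. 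For each $c = (k,\bm{z}'_{k-1}) \in \mathcal{C}$ I would label the pair $\{m',m''\}$ as \emph{far} if $d_H(\mathcal{C}_k(m';\bm{z}'_{k-1}),\mathcal{C}_k(m'';\bm{z}'_{k-1})) > \Delta_k$ and \emph{close} otherwise; each pair receives a joint label in $\{0,1\}^{|\mathcal{C}|}$, inducing an $r$-coloring of the edges of $K_{|\mathcal{B}_{\bm{y}_{\mathrm{b}}}|}$ with $r = 2^{|\mathcal{C}|}$.

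Next I would show that for each individual color $c = (k,\bm{z}'_{k-1}) \in \mathcal{C}$, any subset $S' \subseteq \mathcal{B}_{\bm{y}_{\mathrm{b}}}$ whose $k$-th sub-codewords under $\bm{z}'_{k-1}$ are pairwise far has $|S'| \le N_0 = O(1/\epsilon_R)$, a constant in $n$. Such an $S'$ forms a $q$-ary code of length $\beta_k(n-b)$ with minimum distance exceeding $\Delta_k$, so I would invoke Lemma \ref{thm:Plotkin}; this requires $\Delta_k > (q-1)\beta_k(n-b)/q$. Substituting $b = (\alpha(\bar{p}) + \epsilon_R/2)n$ with $\alpha(\bar{p}) = 1 - \tfrac{2q}{q-1}(p-\bar{p}) - \tfrac{q}{q-1}r$ yields
$$\Delta_k - \frac{(q-1)\beta_k(n-b)}{q} = \beta_k n \epsilon_R \left(\frac{q-1}{2q} - \frac{1}{8}\right),$$
which is strictly positive for every $q \ge 2$. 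Lemma \ref{thm:Plotkin} then gives $|S'| \le q\Delta_k/(q\Delta_k - (q-1)\beta_k(n-b)) = O(1/\epsilon_R)$.

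Finally I would invoke the multi-color Ramsey theorem: for each integer $k_0$ there is a finite $R(k_0;r)$ such that every $r$-coloring of $K_{R(k_0;r)}$ contains a monochromatic clique of size $k_0$. Setting $k_0 = N_0 + 1$ (a constant in $n$), the bound $|\mathcal{B}_{\bm{y}_{\mathrm{b}}}| \ge q^{n\epsilon_R/4}$ eventually exceeds $R(k_0;r)$, producing a subset $S \subseteq \mathcal{B}_{\bm{y}_{\mathrm{b}}}$ of size $k_0$ on which the joint color vector is identical for every pair. If any coordinate $c \in \mathcal{C}$ of this common vector were ``far'', then $S$ would be pairwise far in color $c$ with $|S| = k_0 > N_0$, contradicting the Plotkin step. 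Hence every coordinate is ``close'', so every pair in $\binom{S}{2}$ lies in $\mathcal{D}_{k,\bm{z}'_{k-1}}$ for all $(k,\bm{z}'_{k-1})\in\mathcal{C}$ and therefore in $\textset{SCM}_{\bm{y}_{\mathrm{b}}}$, proving non-emptiness. The main technical obstacle is the Plotkin-threshold verification: the inequality hinges on the (at first glance mismatched) appearance of $\beta_k n$ inside $\Delta_k$ versus the sub-codeword length $\beta_k(n-b)$, and the $-\epsilon_R/8$ and $+\epsilon_R/2$ slacks in $\Delta_k$ and $b$ are precisely tuned so that the gap grows as $\Omega(\epsilon_R \beta_k n)$.
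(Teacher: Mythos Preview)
Your argument is correct and arrives at the same conclusion via a somewhat different packaging of the Ramsey step than the paper uses. The paper enumerates the pairs $(k,\bm{z}'_{k-1})$ as $\bm{z}^{(1)},\ldots,\bm{z}^{(I)}$ and processes them \emph{sequentially}: starting from $\mathcal{V}_1 = \mathcal{B}_{\bm{y}_{\mathrm{b}}}$, at each stage it applies the two-colour Ramsey bound $\mathrm{R}(K,N+1)\le \binom{K+N-1}{N}$ together with the Plotkin bound $|\mathcal{I}_i|\le N$ to extract a large ``close'' clique $\mathcal{K}_i$, which becomes the vertex set for the next stage. After $I-1$ iterations the surviving clique has size at least the $(I-1)$-fold iterated logarithm $L_{I-1}(q^n)\ge 2$, and any two of its messages lie in $\textset{SCM}_{\bm{y}_{\mathrm{b}}}$. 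Your approach instead applies a \emph{single} multi-colour Ramsey step with $r=2^{|\mathcal{C}|}$ colours and target clique size $N_0+1$, then rules out every colour vector having a ``far'' coordinate by Plotkin.

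Both routes rest on the same two ingredients (Plotkin to bound pairwise-far sets by a constant, Ramsey to force a large pairwise-close set), and both exploit that $T$, $|\mathcal{Z}|$, and hence $I$ (or $|\mathcal{C}|$) are $O(1)$ in $n$. Your one-shot version is arguably cleaner to state and avoids the iterated-logarithm bookkeeping; the paper's iterative version has the minor advantage of giving an explicit (tower-type) quantitative dependence of the required $n$ on $I$ via the Erd\H{o}s--Szekeres bound, whereas your multi-colour Ramsey number $R(N_0+1;2^{|\mathcal{C}|})$ is left nonconstructive. Your explicit handling of the trivial regime $\beta_k<\epsilon_R/(16T)$ by excluding those coordinates from $\mathcal{C}$ is also tidier than the paper's treatment, which folds that case into the Plotkin bound via the $\max\{\cdot,q\}$ in its definition of $N$. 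Your verification that $\Delta_k - \tfrac{(q-1)\beta_k(n-b)}{q} = \beta_k n\epsilon_R\bigl(\tfrac{q-1}{2q}-\tfrac{1}{8}\bigr)>0$ is exactly the calculation the paper leaves to the reader.
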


  \begin{proof}[Proof of Lemma \ref{thm:E_y}]
  Recall that the set of all feedback sequences that Bob may send in the ``push'' phase which are consistent with the feedback $z_1, \ldots,z_{T_{\mathrm{b}}}$ sent during the ``babble'' phase is
  \begin{equation} \label{eq:fb_enum}
  \bigcup_{k=T_{\mathrm{b}}+1}^{T+1} \mathcal{Z}'_{k-1} \triangleq \{z_1\} \times \ldots \times \{z_{T_{\mathrm{b}}}\} \times \bigcup_{k=T_{\mathrm{b}}+1}^{T+1} \mathcal{Z}^{k-1-T_{\mathrm{b}}}.
  \end{equation}
  Let $I$ denote the number of sequences in the set (\ref{eq:fb_enum}), and in turn, let $\bm{z}^{(1)}, \bm{z}^{(2)}, \ldots, \bm{z}^{(I)}$ be any enumeration of sequences in (\ref{eq:fb_enum}). Note that for $i \in [I]$, we have that $\bm{z}^{(i)} \in \mathcal{Z}'_{k_i-1}$ for some sub-codeword index $k_i \in [T_{\mathrm{b}}+1,T+1]$.
  
  For each $i \in [I]$, we construct a graph to study the distance between a particular subset of $k_i^\text{th}$ sub-codewords corresponding to feedback sequence $\bm{z}^{(i)}$. For $i = 1,2, \ldots, I$, consider a simple undirected graph $\mathcal{G}_{i} = (\mathcal{V}_i,\mathcal{E}_i)$ with a vertex set $\mathcal{V}_{i}$ consisting of some subset of sub-codewords $\{ \mathcal{C}_{k_i}(m';\bm{z}^{(i)}): m' \in \mathcal{B}_{\bm{y}_{\mathrm{b}}} \}$ (we provide a detailed construction of $\mathcal{V}_{i}$ shortly). Two distinct sub-codewords $\bm{x}'$ and $\bm{x}''$ in $\mathcal{V}_{i}$ are connected by an edge if and only if $d_H(\bm{x}',\bm{x}'') \leq \Delta_{k_i}$.
  
  To describe the construction of $\mathcal{V}_{i}$, we first define a maximum clique and maximum independent set of $\mathcal{G}_{i}$. A \textit{maximum clique} $\mathcal{K}_{i}$ of graph $\mathcal{G}_{i}$ corresponds to a largest subset of sub-codewords in $\mathcal{V}_{i}$ such that every 2 sub-codewords are within Hamming distance $\Delta_{k_i}$. We let $\mathcal{K}_i$ be some maximum clique in case two or more such cliques may exists. A \textit{maximum independent set} $\mathcal{I}_{i}$ of graph $\mathcal{G}_{i}$ corresponds to a largest subset of sub-codewords in $\mathcal{V}_{i}$ such that every 2 sub-codewords are not within Hamming distance $\Delta_{k_i}$. Recall that Plotkin's bound (Lemma \ref{thm:Plotkin}) provides an upper bound on the number of (sub-) codewords that are not within Hamming distance $\Delta_{k_i}$ from each other. It is easy to verify that $\Delta_{k_i}$ satisfies the Plotkin's bound condition $\Delta_{k_i} > (1-1/q)(n-b)\beta_{k_i}$. In turn, the size of a maximum independent set is bounded such that $$|\mathcal{I}_{i}| \leq \frac{q \Delta_{k_i}}{q \Delta_{k_i} - (q-1)(n-b)\beta_{k_i}}.$$ In turn, by substituting our above choice of $\Delta_{k_i}$ (c.f. (\ref{eq:Dk_def})) and the assignment of $b$, the bound on $|\mathcal{I}_i|$ can be simplified to $|\mathcal{I}_{i}| \leq N \triangleq \max\{\frac{8(p-\bar{p})+8r}{3\epsilon_R},q \}$ where we note that the upper bound $N$ is constant in $n$.

  We now construct the vertex sets $\mathcal{V}_1,\mathcal{V}_2,\ldots,\mathcal{V}_I$ in a recursive manner such that the sub-codewords in $\mathcal{V}_i$ correspond to the messages of sub-codewords in the maximum clique $\mathcal{K}_{i-1}$. As the base case ($i=1$), we define 
  \begin{equation} \nonumber
  \mathcal{V}_{1} \triangleq \{ \mathcal{C}_{k_1}(m';\bm{z}^{(1)}): m' \in \mathcal{B}_{\bm{y}_{\mathrm{b}}} \}.
  \end{equation}
  For $i = 2,3, \ldots, I$, the vertex set
  \begin{equation} \nonumber
  \mathcal{V}_{i} =
  \{\mathcal{C}_{k_i}(m';\bm{z}^{(i)}): m' \in \mathcal{K}_{i-1} \}
  \end{equation}
  where $m' \in \mathcal{K}_{i-1}$ denotes the message $m'$ corresponding to the sub-codeword $\mathcal{C}_{k_{i-1}}(m';\bm{z}^{(i-1)}) \in \mathcal{K}_{i-1}$. Thus, all messages corresponding to sub-codewords in $\mathcal{V}_i$ have sub-codewords in $\mathcal{V}_{i-1}$ that are pairwise close. In turn, if two unique messages $m'$ and $m''$ have corresponding sub-codewords in $\mathcal{V}_I$, then $m'$ and $m''$ have corresponding sub-codewords that are pairwise close in $\mathcal{V}_i$ for \textit{all} $i \in \{1,\ldots, I\}$ and, in turn, $(m',m'') \in \textset{SCM}_{\bm{y}_{\mathrm{b}}}$. Thus, to prove Lemma \ref{thm:E_y}, it is sufficient to show that the vertex set $\mathcal{V}_I$ contains at least two sub-codewords. This is equivalent to showing that maximum clique $\mathcal{K}_{I-1}$ contains at least two sub-codewords.

  We show the above sufficient condition by lower bounding the size of the cliques $\{\mathcal{K}_i\}_{i \in [I-1]}$. We introduce the \textit{Ramsey number} $\textrm{R}(|\mathcal{K}|,|\mathcal{I}|)$ which is the smallest integer such that every simple undirected graph of size $\textrm{R}(|\mathcal{K}|,|\mathcal{I}|)$ has a clique of size $|\mathcal{K}|$ or an independent set of size $|\mathcal{I}|$. Recall that the size of a maximum independent set $\mathcal{I}_i$ is at most $N$. Thus, for $K\geq1$, if the size of the vertex set $\mathcal{V}_i$ is at least $\textrm{R}\left(K,N+1 \right)$, then the size of the maximum clique $\mathcal{K}_i$ is at least $K$.

  We now prove by induction that for large enough $n$, for all $i \in [I-1]$ the size of $\mathcal{K}_i$ is at least $L_i (q^n)$ where $L_i (q^n)$ denotes the composition $\log_q \circ \log_q \circ \cdots \circ \log_q(q^n)$ of $i$ number of logarithms. (\textit{Base Case}) Recall that the event $\mathcal{F}$ occurs by assumption. Then by Corollary \ref{thm:size_By}, the size of $\mathcal{V}_{1}$ is at least $q^{\frac{n \epsilon_R}{4}}$. It follows that the size of $\mathcal{K}_1$ is at least $L_1(q^n) = n$ if $\textrm{R}\left(n, N+1\right) \leq q^{\frac{n \epsilon_R}{4}}$. Indeed, for large enough $n$
  \begin{align}
  \textrm{R}\left(n, N+1\right) & \stackrel{(a)}{\leq} { n + N - 1 \choose N } \nonumber \\
  &\stackrel{(b)}{\leq} 2^{\left(n + N - 1\right) H_2 \left(\frac{N}{n + N - 1} \right)} \nonumber  \\
  & \stackrel{(c)}{\leq} q^{\frac{n \epsilon_R}{4}} \leq |\mathcal{V}_1| \nonumber
  \end{align}
 where (a) follows from the bound that for any $|\mathcal{K}|, |\mathcal{I}| \geq 1$, $\textrm{R}\left( |\mathcal{K}|, |\mathcal{I}| \right)$ is at most ${|\mathcal{I}|+|\mathcal{K}|-2 \choose |\mathcal{I}|-1}$ \cite{Greenwood1955}, (b) follows from the bound ${n \choose k} \leq 2^{n H_2(k/n)}$ for $k \leq n/2$, and (c) follows for large enough $n$ from the fact that $N$ is constant in $n$, and thus, $H_2(\frac{N}{n+N-1})$ tends to $0$ as $n$ tends to infinity. In conclusion, $|\mathcal{K}_1| \geq n$. Base case done. (\textit{Induction}) Let $i \in \{2,\ldots,I-2\}$ and assume that the size of $\mathcal{K}_{i-1}$ is at least $L_{i-1}(q^n)$. Following the construction of $\mathcal{V}_i$, the size of $\mathcal{V}_i$ is equal to the size of $\mathcal{K}_{i-1}$. Similar to the argument made in the base case, we have that
 $|\mathcal{K}_i| \geq L_i(q^n)$ if
  $\textrm{R}\left(L_i(q^n), N+1\right) \leq |\mathcal{V}_i| \triangleq |\mathcal{K}_{i-1}|$. Indeed, for large enough $n$, 
  \begin{align}
  \textrm{R}\left(L_i(q^n), N+1\right) & \leq { L_i(q^n) + N - 1 \choose N } \nonumber \\
  & \leq 2^{(L_i(q^n)+N-1) H_2 \left( \frac{N}{L_i(q^n)+N-1} \right)} \nonumber \\
  %\begin{align}
  %&\textrm{R}\left(L_i(q^n), \frac{8(p-\bar{p})+8r}{3 \epsilon_R}+1\right) \leq { n + \frac{8(p-\bar{p})+8r}{3 \epsilon_R} - 1 \choose \frac{8(p-\bar{p})+8r}{3 \epsilon_R} } \nonumber
  %\\ \leq 2^{\left(L_i(q^n) + \frac{8(p-\bar{p})+8r}{3 \epsilon_R} - 1\right) H_2 \left(\frac{\frac{8(p-\bar{p})+8r}{3 \epsilon_R}}{L_i(q^n) + \frac{8(p-\bar{p})+8r}{3 \epsilon_R} - 1} \right)} \nonumber \\
  %&\leq L_{i-1}(q^n) \leq |\mathcal{K}_{i-1}| \triangleq |\mathcal{V}_i| \nonumber
  %\end{align}
  & \leq L_{i-1}(q^n) \stackrel{(d)}{\leq} |\mathcal{K}_{i-1}| \triangleq |\mathcal{V}_i| \nonumber
  \end{align}
  where (d) follows from our induction assumption. Since the number of feedback bits $B+1$ is a constant in $n$, it follows that $I$ is a constant, and thus, for large enough $n$, $|\mathcal{K}_i| \geq L_i(q^n)$ for all $i \in [I-1]$. Furthermore, $L_{I-1}(q^n) \geq 2$ for large enough $n$. Thus, for large enough $n$, $\mathcal{K}_{I-1}$ contains at least two sub-codewords.
  \end{proof}

   Let blocklength $n$ be large enough such that Lemma \ref{thm:E_y} holds. Let $\bm{y}_{\mathrm{b}}^* \in \mathcal{Y}^b$ be any received sequence such that $\bm{Y}_{\mathrm{b}} = \bm{y}_{\mathrm{b}}^*$ with positive probability and $H(M|\bm{Y}_{\mathrm{b}} = \bm{y}_{\mathrm{b}}^*) \geq \frac{n \epsilon_R}{n}$. To complete the proof, we note the following corollary of Lemma \ref{thm:E_y}.

  \begin{corollaryLemma} \label{thm:pos_p_new}
  With positive probability over the ``babble'' attack and choice of Alice's message, $\bm{Y}_{\mathrm{b}} = \bm{y}_{\mathrm{b}}^*$ and there exists messages $m_1$ and $m_2$ such that $(m_1,m_2) \in \textset{SCM}_{\bm{y}_{\mathrm{b}}^*}$. Hence $(m_1,m_2)$ is a confusable pair.
  \end{corollaryLemma}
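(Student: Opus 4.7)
The plan is to establish the three components of the corollary in sequence: (a) $\mathbb{P}(\bm{Y}_{\mathrm{b}} = \bm{y}_{\mathrm{b}}^*) > 0$, (b) on this event there exist $m_1,m_2$ with $(m_1,m_2) \in \textset{SCM}_{\bm{y}_{\mathrm{b}}^*}$, and (c) any such pair $(m_1,m_2)$ is confusable. Parts (a) and (b) come essentially for free from what precedes. Lemma~\ref{thm:F_0} gives $\mathbb{P}(\mathcal{F}) > \epsilon_R/4$, so the set
\begin{equation}
\left\{\bm{y}_{\mathrm{b}}\in\mathcal{Y}^b : \mathbb{P}(\bm{Y}_{\mathrm{b}} = \bm{y}_{\mathrm{b}}) > 0,\ H(M|\bm{Y}_{\mathrm{b}} = \bm{y}_{\mathrm{b}}) \geq n\epsilon_R/4\right\}
\end{equation}
is non-empty, and $\bm{y}_{\mathrm{b}}^*$ is defined to be any of its elements, delivering (a); since $\bm{y}_{\mathrm{b}}^*$ then satisfies the hypothesis of Lemma~\ref{thm:E_y}, (b) follows at once.

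The substance is (c). Fixing $(m_1,m_2) \in \textset{SCM}_{\bm{y}_{\mathrm{b}}^*}$, I would construct the push-phase output $\bm{y}_{\mathrm{p}}^* \in \mathcal{Y}^{n-b}$ and an adversarial mapping $\adv^*$ sub-codeword by sub-codeword. Having determined the partial sequence $\bm{y}_{\mathrm{p},<k}^*$, the feedback prefix $z_1^*,\dots,z_{k-1}^*$ is fixed, and hence so are the two candidate sub-codewords $\mathcal{C}_k(m_i;z_1^*,\dots,z_{k-1}^*)$ for $i\in\{1,2\}$; property (\ref{eq:prop_2_Dk}) of $\textset{SCM}_{\bm{y}_{\mathrm{b}}^*}$ bounds their Hamming distance by $\Delta_k$. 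I would then build the $k$-th block of $\bm{y}_{\mathrm{p}}^*$ by matching both candidates on positions where they coincide, placing erasures on a share of the disagreeing positions (apportioning the total budget $rn$ across sub-blocks in proportion to $\beta_k$), and splitting the remaining disagreeing positions evenly between the two candidate symbols. A routine summation using the definition (\ref{eq:Dk_def}) of $\Delta_k$ and the identity $\sum_k \beta_k = 1$ then yields the target bound (\ref{eq:max_dp}). Picking $\adv^*$ to be any mapping that realises $\bm{y}^* = (\bm{y}_{\mathrm{b}}^*,\bm{y}_{\mathrm{p}}^*)$ from each codeword $\mathcal{C}(m_i;z_1^*,\dots,z_T^*)$ respects the budget, since the babble injects $\bar{p}n$ errors and the push injects at most $(p-\bar{p})n - n\epsilon_R/16$ additional errors together with $rn$ erasures.

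The main obstacle is the self-referential nature of the push construction: Bob's feedback during the push depends on $\bm{y}_{\mathrm{p}}^*$ itself, so one cannot first fix a feedback sequence and then independently design $\bm{y}_{\mathrm{p}}^*$. This is precisely what the uniform-over-feedback definition of $\textset{SCM}_{\bm{y}_{\mathrm{b}}^*}$ was engineered to handle: the closeness bound $d_H(\mathcal{C}_k(m_1;\bm{z}'_{k-1}),\mathcal{C}_k(m_2;\bm{z}'_{k-1})) \leq \Delta_k$ holds for \emph{every} admissible $\bm{z}'_{k-1} \in \mathcal{Z}'_{k-1}$, so the greedy online construction always proceeds regardless of which feedback symbol the partial $\bm{y}_{\mathrm{p}}^*$ built so far happens to induce, and (\ref{eq:max_dp}) holds for the self-consistent feedback it ultimately produces.
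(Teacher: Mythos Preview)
Your proposal is correct and follows essentially the same approach as the paper, which treats the corollary as an immediate consequence of Lemma~\ref{thm:E_y} together with the ``push'' construction sketched just before the corollary statement. Your explicit block-by-block online construction of $\bm{y}_{\mathrm{p}}^*$---and your observation that the uniform-over-feedback property of $\textset{SCM}_{\bm{y}_{\mathrm{b}}^*}$ is exactly what makes this greedy construction go through despite the self-referential feedback---is the right way to make rigorous what the paper leaves implicit when it asserts that (\ref{eq:max_dp}) ``immediately follows.''
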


  \section{Conclusion} \label{sec:conclusion}

  This paper has initiated the study of the $q$-ary adversarial error-erasure channel with $O(1)$-bit feedback. As our main result, we provided a tight characterization of the zero-error capacity $C^{O(1)}_q(p,r)$ as a function the error fraction $p \in [0,1]$, erasure fraction $r \in [0,1]$ and input alphabet $q \geq 2$. In turn, this result a tight characterization of the zero-error capacity $C_q^{\mathrm{full}}(p,0)$ of the $q$-ary adversarial error channel with full noiseless feedback for sufficiently small error fraction $p$. 
  
  Following our initial study, some important questions remain open for future research. Recall that in the achievability analysis, our coding scheme uses a number of feedback bits $B$ that is constant in the blocklength $n$ but grows arbitrarily large as the coding rate $R$ tends to the capacity $C^{O(1)}_q(p,r)$. It remains an open question whether the condition that $B$ tends to $\infty$ as $R$ tends to $C^{O(1)}_q(p,r)$ is necessary or just a artifact of our coding scheme. Related to this question is the following: ``For a fixed constant $B \geq 0$, what is the zero-error capacity $C^B_q(p,r)$''? Answering this question in full appears to be significantly harder than fully characterizing $C^{O(1)}_q(p,r)$.

  \appendices

  \section{Proof of Equation (\ref{eq:cap_error})} \label{sec:main_cor1_proof}

  Recall that for a function $$g_p(\bar{p}) \triangleq \alpha(\bar{p})\left(1- H_q\left(\frac{\bar{p}}{\alpha(\bar{p})}\right)\right)$$ where $\alpha(\bar{p}) \triangleq 1 - \frac{2q}{q-1}(p-\bar{p})$, Theorem \ref{thm:cap_bf} states that $C^{O(1)}_q(p,0) = \min\limits_{\bar{p} \in [0,p]} g_p(\bar{p})$. Using this notation, we rewrite (\ref{eq:cap_error}):
  \begin{equation} \label{eq:cap_error_2}
  C_q^{O(1)}(p,0) = 
  \begin{cases}
  g_p(p), & p \in [0,p^*] \\
  g_p(\bar{p}^\dag), & p \in (p^*,\frac{q-1}{2q}) \\
  0, & p \in [\frac{q-1}{2q},1].
  \end{cases}
  \end{equation}
  where $\bar{p}^\dag$ is the unique value in $[0,p]$ such that $\frac{\bar{p}^\dag}{\alpha(\bar{p}^\dag)} = p^*$. Therefore, to show that (\ref{eq:cap_error_2}) follows from Theorem \ref{thm:cap_bf}, it is sufficient to show that 
  \begin{equation} \label{eq:suf}
  \arg\min\limits_{\bar{p} \in [0,p]} g_p(\bar{p}) = \begin{cases} p, & p \in (0,p^*] \\
  \bar{p}^\dag, & p \in (p^*,\frac{q-1}{2q})\end{cases}
  \end{equation}
  We show that equation (\ref{eq:cap_error}) (equivalently, (\ref{eq:suf})) holds by showing that the following three conditions hold:
  \begin{enumerate}
  \item For all $p \in [0,\frac{q-1}{2q}]$, $g_p(\bar{p})$ is convex in $\bar{p}$. 
  \item $\frac{\partial g_p(p)}{\partial \bar{p}} \leq 0$ for all $p \in (0,p^*]$.
  \item $\frac{\partial g_p(\bar{p}^\dag)}{\partial \bar{p}} = 0$ for all $p \in (p^*,\frac{q-1}{2q})$.
  \end{enumerate}
  In terms of the optimization problem in (\ref{eq:suf}), we remark that condition 1 is the second order sufficient condition while condition 2 and 3 are first order necessary conditions.

  To show condition 1, we show that $\frac{\partial^2 g_p}{\partial \bar{p}^2} \geq 0$ for all $\bar{p} \in [0,p]$. For $p \in [0,\frac{q-1}{2q}]$, the first derivative of $g$ is 
  \begin{align}
  \frac{\partial g_p(\bar{p})}{\partial \bar{p}} = - \log_q(q-1) + \frac{2q}{q-1} +  \log_q\left(\frac{\bar{p}}{\alpha(\bar{p})} \right) + \frac{q+1}{q-1} \log_q \left(1 - \frac{\bar{p}}{\alpha(\bar{p})} \right), \label{eq:der_g}
  \end{align}
  where it follows that $\frac{\partial^2 g_p}{\partial \bar{p}^2} \geq 0$ if 
  $h(\bar{p})(1 - h(\bar{p}))^{\frac{q+1}{q-1}}$ is increasing in $\bar{p}$ where $h(\bar{p}) \triangleq \frac{\bar{p}}{\alpha(\bar{p})}$. We have that
  \begin{align}
  \frac{\partial}{\partial \bar{p}} (h(1-h)^\frac{q+1}{q-1}) = \frac{\partial h}{\partial \bar{p}}(1-h)^\frac{q+1}{q-1}\left(1- \frac{h}{1-h} \frac{q+1}{q-1}\right)\nonumber
  \end{align}
  is increasing in $\bar{p} \in [0,p]$ since $h(\bar{p})$ is increasing in $\bar{p}$ and, in turn, $\frac{h}{1-h} \frac{q+1}{q-1}$ is bounded above by $1$ for all $\bar{p} \in [0,p]$. Done.

  Next, we show that condition 2 holds. From (\ref{eq:der_g}), we have that
  $\frac{\partial g_p(p)}{\partial \bar{p}} = -\log_q(q-1) + \frac{2q}{q-1} + \log_q(p) + \frac{q+1}{q-1}\log_q(1-p)$, where we note that $\frac{\partial g_p(p)}{\partial \bar{p}}$ is increasing in $p$ over the interval $[0,1/2]$. It follows that $\frac{\partial g_p(p)}{\partial \bar{p}} \leq 0$ for $p \in (0,p^*]$ where $p^*$ is the unique value such that $\log_q(p^*) + \frac{q+1}{q-1}\log_q(1-p^*) = \log_q(q-1) - \frac{2q}{q-1}$.

  Lastly, we show that condition 3 holds. It is straightforward to verify that $$\bar{p}^\dag = \frac{1-\frac{2q}{q-1}p}{\frac{1}{p^*} - \frac{2q}{q-1}},$$ and in turn, $\bar{p}^\dag$ is well defined in $[0,p]$ for all $p \in [p^*,\frac{q-1}{2q}]$. Now, from (\ref{eq:der_g}), we have that
  $\frac{\partial g_p(\bar{p}^\dag)}{\partial \bar{p}} = -\log_q(q-1) + \frac{2q}{q-1} + \log_q(p^*) + \frac{q+1}{q-1}\log_q(1-p^*)$ which in turn is equal to $0$ following the definition of $p^*$.

  \section{Proof that $p^* \leq 1/q$} \label{sec:p_star_q}

  Recall that for integer $q \geq 2$, $p^*$ is defined as the unique value in $[0,\frac{q-1}{2q}]$ such that 
  \begin{equation} \label{eq:p_star_q_1}
  p^*(1-p^*)^{\frac{q+1}{q-1}} = (q-1)q^{-\frac{2q}{q-1}}.
  \end{equation} 
  We show that $p^* \leq \frac{1}{q}$ for all $q \geq 2$. 

 Let $q \geq 2$ be an integer. By a rearrangement of terms, (\ref{eq:p_star_q_1}) is equivalent to $(q-p^*q)^{\frac{q+1}{q-1}} = \frac{q-1}{p^*q}$. Define the function $g(x) = (q-x)^{\frac{q+1}{q-1}} - \frac{q-1}{x}$ for all $x \in [0,1]$, and note that $p^*$ is the unique value in $[0,\frac{q-1}{2q}]$ such that $g(p^*q) =0$. It is straightforward to verify that the function $g$ is increasing; in particular, the derivative $\frac{\partial g(x)}{\partial x} > 0$ for all $x \in [0,\frac{q-1}{2}]$. Thus, we can show that $p^* \leq 1/q$ if $g(1) > 0$. Indeed, $g(1) = (q-1)^{\frac{q+1}{q-1}} - (q-1) > 0$.

  \section{Proof of Corollary \ref{thm:lg_alphabet}} \label{sec:lg_alphabet_proof} 

  Let $p \in [0,1]$ and $r\in [0,1]$ such that $2p+r<1$. We show that $C^{O(1)}_q(p,r) = 1 - 2p - r - \Theta(1/q)$. We have that
  \begin{align}
  C^{O(1)}_q(p,r) &= \min\limits_{\bar{p} \in [0,p]} \left[ \alpha(\bar{p}) \left(1 - H_q\left( \frac{\bar{p}}{\alpha(\bar{p})}\right) \right) \right] \label{eq:cor_lim_1} \\
  & = \min\limits_{\bar{p} \in [0,p]} \left[ \alpha(\bar{p}) \left(1 - \frac{\bar{p}}{\alpha(\bar{p})} + o(1) \right) \right] \label{eq:cor_lim_2}
  \end{align}
  where (\ref{eq:cor_lim_1}) follows from Theorem  \ref{thm:cap_bf} for large enough $q$, and the $o(1)$ term in (\ref{eq:cor_lim_2}) tends to $0$ as $q$ tends to $\infty$. Note that the $o(1)$ term is $0$ when $\bar{p}=0$. Moreover, (\ref{eq:cor_lim_2}) is equal to
  \begin{align}
  &  \min\limits_{\bar{p} \in [0,p]} \left[ \alpha(\bar{p}) - \bar{p} + o(1) \right] \nonumber \\
  & = 1 - \frac{q(2p +r)}{q-1} + \min\limits_{\bar{p} \in [0,p]} \left[ \left(\frac{2q}{q-1}-1 \right)\bar{p} + o(1) \right] \label{eq:cor_lim_3} \\
  & = 1 - \frac{q(2p +r)}{q-1} \label{eq:cor_lim_4} \\
  & = 1-2p-r - \Theta\left(\frac{1}{q}\right) \nonumber
  \end{align}
  where (\ref{eq:cor_lim_3}) follows from substituting $\alpha(\bar{p}) = 1 - \frac{2q}{q-1} (p - \bar{p}) - \frac{q}{q-1}r$, (\ref{eq:cor_lim_4}) follows from the observation that $\bar{p}=0$ is a minimizer for large enough $q$.

  For completeness, we also show that $C^{\mathrm{full}}_q(p,r) = 1 - 2p -r - \Theta\left(\frac{1}{q \log q} \right)$. For large enough $q$,
  \begin{align}
  C^{\mathrm{full}}_q(p,r) &= (1-2p) \log_q (q-1) - r \label{eq:cor_lim_01} \\
  & = (1-2p) \frac{\ln(q-1)}{\ln q} - r \label{eq:cor_lim_02} \\
  & = 1-2p-r + (1-2p) \frac{\ln(1-\frac{1}{q})}{\ln q} \label{eq:cor_lim_03}
  \end{align}
  where (\ref{eq:cor_lim_01}) is a result of \cite{Ahlswede2006}, (\ref{eq:cor_lim_02}) follows from a logarithm change-of-base and (\ref{eq:cor_lim_03}) follows from a rearrangement of terms. By the elementary bounds $\frac{-x}{1-x} \leq \ln( 1-x ) \leq -x$ for $x<1$, we have that $\ln(1-1/q) = -\Theta(1/q)$ and, in turn, $C^{\mathrm{full}}_q(p,r) = 1-2p-r -\Theta(\frac{1}{q \log q})$.

  \section{Proof of Lemma \ref{thm:LD}} \label{sec:LD_proof}
  
   We restate the probabilistic proof from {\cite[Claim~III.14]{Chen2019}} which is in the spirit of \cite[Theorem~10.3]{Guruswami2001}. Let $t \in \mathcal{T}$ and let $\lambda_t \in [0,rn]$ such that $t \geq t_{\mathrm{min}}$ and note that our definition of reference trajectory $\hat{p}_t$ ensures that $(t-\lambda_{t})(1-H_q(\hat{p}_t)) - \epsilon_Ln = Rn$. Furthermore, for a feedback sequence $(z_1,\ldots,z_{k-1}) \in \mathcal{Z}^{k-1}$ let $C_{\mathrm{prefix},k}(\cdot;z_1,\ldots,z_{k-1})$ denote the encoding function prefix $\mathcal{C}_1(\cdot) \circ \mathcal{C}_2(\cdot;z_1) \circ \cdots \circ \mathcal{C}_{k}(\cdot;z_{k-1})$. We evaluate the probability that the encoding function prefix $C_{\mathrm{prefix},k}$ is \textit{not} list-decodable for up to $(t-\lambda_t)\hat{p}_t$ symbol errors and $\lambda_t$ symbol erasures with a given list size $L \geq 1$. 
   
   Let $\bm{y}_{\mathrm{prefix},k} \in \{ \mathcal{Q}\cup \{?\} \}^t$ be a sequence with exactly $\lambda_t$ symbol erasures and let $(z_1,\ldots,z_{k-1},\cdot) = f(\bm{y}_{\mathrm{prefix},k},\cdot)$. We count the number of codeword prefixes in the set $\{C_{\mathrm{prefix},k}(m'; z_1,\ldots,z_{T}): m'\in \mathcal{M}\}$ within Hamming distance $(t-\lambda_{t})\hat{p}_t$ of the prefix $\bm{y}_{\mathrm{prefix},k}$. The probability that the codeword prefix of message $m' \in [q^{Rn}]$ is within a distance $(t-\lambda_t)\hat{p}_t$ of $\bm{y}_{\mathrm{prefix},k}$ is $|\mathcal{B}_{(t-\lambda_{t})\hat{p}_t}| q^{-(t-\lambda_t)}$ where $|\mathcal{B}_{(t-\lambda_{t})\hat{p}_t}|$ is the number of sequences in $\mathcal{Q}^{t-\lambda_t}$ contained in a Hamming ball of radius $(t-\lambda_t)\hat{p}_t$ centered at the unerased symbols of $\bm{y}_{\mathrm{prefix},k}$. In turn, following a simple union bound, the probability that there exists $L+1$ messages within a distance $(t-\lambda_t)\hat{p}_t$ of $\bm{y}_{\mathrm{prefix},k}$ is at most 
   \begin{equation} \nonumber
   {q^{Rn} \choose L+1}|\mathcal{B}_{(t-\lambda_t)\hat{p}_t}|^{L+1} q^{-(t-\lambda_t)(L+1)}.
   \end{equation}
   Therefore, the probability there is \textit{some} sequence $\bm{y}_{\mathrm{prefix},k} \in \{ \mathcal{Q}\cup \{?\} \}^t$ with exactly $\lambda_t$ symbol erasures and corresponding feedback tuple $(z_1,\ldots,z_{k-1})$ such that there exists $L+1$ messages corresponding to codeword prefix in $\{C_{\mathrm{prefix},k}(m';z_1,\ldots,z_{k-1}):m'\in \mathcal{M} \}$ that are each within a distance $(t-\lambda_t)\hat{p}_t$ of $\bm{y}_{\mathrm{prefix},k}$ is at most
   \begin{equation} \label{eq:LD_2}
   {t \choose \lambda_t}q^{(t-\lambda_t)} {q^{Rn} \choose L+1}|\mathcal{B}_{(t-\lambda_t)\hat{p}_t}|^{L+1} q^{-(t-\lambda_t)(L+1)}.
   \end{equation}

   To further bound (\ref{eq:LD_2}), we note the following bounds. First, ${q^{Rn} \choose L+1} \leq q^{Rn(L+1)}$ and ${t \choose \lambda_t} \leq 2^{t}$. Second, it follows from Lemma \ref{thm:bin_ub} that
   \begin{equation} \nonumber
   |\mathcal{B}_{(t-\lambda_t)\hat{p}_t}| = \sum_{i=0}^{(t-\lambda_t)\hat{p}_t} {t- \lambda_t \choose i} (q-1)^i \leq q^{(t-\lambda_t)H_q(\hat{p}_t)}.
   \end{equation}
   It follows that (\ref{eq:LD_2}) is bounded above by 
   \begin{align} 
   &  q^{ t\log_q2 + (t-\lambda_t) + (Rn - (t-\lambda_t)(1-H_q(\hat{p}_t)))(L+1) } \nonumber \\
   & = q^{t\log_q2 + (t-\lambda_t) -\epsilon_Ln(L+1)} \label{eq:LD_3}
   \end{align}
   where (\ref{eq:LD_3}) follows from the definition of $\hat{p}_t$. Then (\ref{eq:LD_3}) is equal to $2^{-n}$ when $\log_q2t + (t-\lambda_t) - \epsilon_L n (L+1) = -n$. Solving for $L$ gives 
   \begin{equation} \label{eq:LD_3_5}
   L = \frac{n + t\log_q2 + (t-\lambda_t)}{\epsilon_L n} - 1
   \end{equation}
   and, in turn, using the trivial bounds $t-\lambda_t \leq t \leq n$, the RHS of (\ref{eq:LD_3_5}) is bounded above by
   \begin{equation} \label{eq:LD_4}
   \frac{1 + \log_q2 + 1}{\epsilon_L} - 1.
   \end{equation}
   Therefore, (\ref{eq:LD_2}) is bounded above by $q^{-n}$ for any $L$ greater than (\ref{eq:LD_4}).

\bibliographystyle{IEEEtran}
\bibliography{refs}

% For peer review papers, you can put extra information on the cover
% page as needed:
% \ifCLASSOPTIONpeerreview
% \begin{center} \bfseries EDICS Category: 3-BBND \end{center}
% \fi
%
% For peerreview papers, this IEEEtran command inserts a page break and
% creates the second title. It will be ignored for other modes.
\IEEEpeerreviewmaketitle

\end{document}